\newtheorem{theorem}{Theorem}
\newtheorem{algorithm}[theorem]{Algorithm}
\newtheorem{axiom}[theorem]{Axiom}
\newtheorem{conjecture}[theorem]{Conjecture}
\newtheorem{corollary}[theorem]{Corollary}
\newtheorem{definition}[theorem]{Definition}
\newtheorem{example}[theorem]{Example}
\newtheorem{exercise}[theorem]{Exercise}
\newtheorem{lemma}[theorem]{Lemma}
\newtheorem{proposition}[theorem]{Proposition}
\newtheorem{remark}[theorem]{Remark}
\newenvironment{proof}[1][Proof]{\textbf{#1.} }{\ \rule{0.5em}{0.5em}}
\chardef\@x10\chardef\@xv60
\def\tcitime{
\def\@time{%
  \@minute\time\@hour\@minute\divide\@hour\@xv
  \ifnum\@hour<\@x 0\fi\the\@hour:%
  \multiply\@hour\@xv\advance\@minute-\@hour
  \ifnum\@minute<\@x 0\fi\the\@minute
  }}%
\def\QCTOpt[#1]#2{%
  \def\QCTOptB{#1}
  \def\QCTOptA{#2}
}
\def\QCTNOpt#1{%
  \def\QCTOptA{#1}
  \let\QCTOptB\empty
}
\def\Qct{%
  \@ifnextchar[{%
    \QCTOpt}{\QCTNOpt}
}
\def\QCBOpt[#1]#2{%
  \def\QCBOptB{#1}
  \def\QCBOptA{#2}
}
\def\QCBNOpt#1{%
  \def\QCBOptA{#1}
  \let\QCBOptB\empty
}
\def\Qcb{%
  \@ifnextchar[{%
    \QCBOpt}{\QCBNOpt}
}
\def\PrepCapArgs{%
  \ifx\QCBOptA\empty
    \ifx\QCTOptA\empty
      {}%
    \else
      \ifx\QCTOptB\empty
        {\QCTOptA}%
      \else
        [\QCTOptB]{\QCTOptA}%
      \fi
    \fi
  \else
    \ifx\QCBOptA\empty
      {}%
    \else
      \ifx\QCBOptB\empty
        {\QCBOptA}%
      \else
        [\QCBOptB]{\QCBOptA}%
      \fi
    \fi
  \fi
}
\def\GRAPHICSPS#1{%
 \ifcase\GRAPHICSTYPE
   \special{ps: #1}%
 \or
   \special{language "PS", include "#1"}%
 \fi
}%
\def\graffile#1#2#3#4{%
    \bgroup
    \leavevmode
    \@ifundefined{bbl@deactivate}{\def~{\string~}}{\activesoff}
    \raise -#4 \BOXTHEFRAME{%
        \hbox to #2{\raise #3\hbox to #2{\null #1\hfil}}}%
    \egroup
}%
\def\draftbox#1#2#3#4{%
 \leavevmode\raise -#4 \hbox{%
  \frame{\rlap{\protect\tiny #1}\hbox to #2%
   {\vrule height#3 width\z@ depth\z@\hfil}%
  }%
 }%
}%
\newif\ifwasdraft
\def\GRAPHIC#1#2#3#4#5{%
 \ifnum\draft=\@ne\draftbox{#2}{#3}{#4}{#5}%
  \else\graffile{#1}{#3}{#4}{#5}%
  \fi
 }%
\def\addtoLaTeXparams#1{%
    \edef\LaTeXparams{\LaTeXparams #1}}%
\newif\ifBoxFrame \BoxFramefalse
\newif\ifOverFrame \OverFramefalse
\newif\ifUnderFrame \UnderFramefalse
\def\BOXTHEFRAME#1{%
   \hbox{%
      \ifBoxFrame
         \frame{#1}%
      \else
         {#1}%
      \fi
   }%
}
\def\doFRAMEparams#1{\BoxFramefalse\OverFramefalse\UnderFramefalse\readFRAMEparams#1\end}%
\def\readFRAMEparams#1{%
 \ifx#1\end%
  \let\next=\relax
  \else
  \ifx#1i\dispkind=\z@\fi
  \ifx#1d\dispkind=\@ne\fi
  \ifx#1f\dispkind=\tw@\fi
  \ifx#1t\addtoLaTeXparams{t}\fi
  \ifx#1b\addtoLaTeXparams{b}\fi
  \ifx#1p\addtoLaTeXparams{p}\fi
  \ifx#1h\addtoLaTeXparams{h}\fi
  \ifx#1X\BoxFrametrue\fi
  \ifx#1O\OverFrametrue\fi
  \ifx#1U\UnderFrametrue\fi
  \ifx#1w
    \ifnum\draft=1\wasdrafttrue\else\wasdraftfalse\fi
    \draft=\@ne
  \fi
  \let\next=\readFRAMEparams
  \fi
 \next
 }%
\def\IFRAME#1#2#3#4#5#6{%
      \bgroup
      \let\QCTOptA\empty
      \let\QCTOptB\empty
      \let\QCBOptA\empty
      \let\QCBOptB\empty
      #6%
      \parindent=0pt%
      \leftskip=0pt
      \rightskip=0pt
      \setbox0 = \hbox{\QCBOptA}%
      \@tempdima = #1\relax
      \ifOverFrame
          \typeout{This is not implemented yet}%
          \show\HELP
      \else
         \ifdim\wd0>\@tempdima
            \advance\@tempdima by \@tempdima
            \ifdim\wd0 >\@tempdima
               \textwidth=\@tempdima
               \setbox1 =\vbox{%
                  \noindent\hbox to \@tempdima{\hfill\GRAPHIC{#5}{#4}{#1}{#2}{#3}\hfill}\\%
                  \noindent\hbox to \@tempdima{\parbox[b]{\@tempdima}{\QCBOptA}}%
               }%
               \wd1=\@tempdima
            \else
               \textwidth=\wd0
               \setbox1 =\vbox{%
                 \noindent\hbox to \wd0{\hfill\GRAPHIC{#5}{#4}{#1}{#2}{#3}\hfill}\\%
                 \noindent\hbox{\QCBOptA}%
               }%
               \wd1=\wd0
            \fi
         \else
            \ifdim\wd0>0pt
              \hsize=\@tempdima
              \setbox1 =\vbox{%
                \unskip\GRAPHIC{#5}{#4}{#1}{#2}{0pt}%
                \break
                \unskip\hbox to \@tempdima{\hfill \QCBOptA\hfill}%
              }%
              \wd1=\@tempdima
           \else
              \hsize=\@tempdima
              \setbox1 =\vbox{%
                \unskip\GRAPHIC{#5}{#4}{#1}{#2}{0pt}%
              }%
              \wd1=\@tempdima
           \fi
         \fi
         \@tempdimb=\ht1
         \advance\@tempdimb by \dp1
         \advance\@tempdimb by -#2%
         \advance\@tempdimb by #3%
         \leavevmode
         \raise -\@tempdimb \hbox{\box1}%
      \fi
      \egroup%
}%
\def\DFRAME#1#2#3#4#5{%
 \begin{center}
     \let\QCTOptA\empty
     \let\QCTOptB\empty
     \let\QCBOptA\empty
     \let\QCBOptB\empty
     \ifOverFrame 
        #5\QCTOptA\par
     \fi
     \GRAPHIC{#4}{#3}{#1}{#2}{\z@}
     \ifUnderFrame 
        \nobreak\par\nobreak#5\QCBOptA
     \fi
 \end{center}%
 }%
\def\FFRAME#1#2#3#4#5#6#7{%
 \begin{figure}[#1]%
  \let\QCTOptA\empty
  \let\QCTOptB\empty
  \let\QCBOptA\empty
  \let\QCBOptB\empty
  \ifOverFrame
    #4
    \ifx\QCTOptA\empty
    \else
      \ifx\QCTOptB\empty
        \caption{\QCTOptA}%
      \else
        \caption[\QCTOptB]{\QCTOptA}%
      \fi
    \fi
    \ifUnderFrame\else
      \label{#5}%
    \fi
  \else
    \UnderFrametrue%
  \fi
  \begin{center}\GRAPHIC{#7}{#6}{#2}{#3}{\z@}\end{center}%
  \ifUnderFrame
    #4
    \ifx\QCBOptA\empty
      \caption{}%
    \else
      \ifx\QCBOptB\empty
        \caption{\QCBOptA}%
      \else
        \caption[\QCBOptB]{\QCBOptA}%
      \fi
    \fi
    \label{#5}%
  \fi
  \end{figure}%
 }%
\def\makeactives{
  \catcode`\"=\active
  \catcode`\;=\active
  \catcode`\:=\active
  \catcode`\'=\active
  \catcode`\~=\active
}
   \gdef\activesoff{%
      \def"{\string"}
      \def;{\string;}
      \def:{\string:}
      \def'{\string'}
      \def~{\string~}
    }
\def\FRAME#1#2#3#4#5#6#7#8{%
 \bgroup
 \ifnum\draft=\@ne
   \wasdrafttrue
 \else
   \wasdraftfalse%
 \fi
 \def\LaTeXparams{}%
 \dispkind=\z@
 \def\LaTeXparams{}%
 \doFRAMEparams{#1}%
 \ifnum\dispkind=\z@\IFRAME{#2}{#3}{#4}{#7}{#8}{#5}\else
  \ifnum\dispkind=\@ne\DFRAME{#2}{#3}{#7}{#8}{#5}\else
   \ifnum\dispkind=\tw@
    \edef\@tempa{\noexpand\FFRAME{\LaTeXparams}}%
    \@tempa{#2}{#3}{#5}{#6}{#7}{#8}%
    \fi
   \fi
  \fi
  \ifwasdraft\draft=1\else\draft=0\fi{}%
  \egroup
 }%
\def\TEXUX#1{"texux"}
\def\func#1{\mathop{\rm #1}\nolimits}%
\long\def\QQQ#1#2{%
     \long\expandafter\def\csname#1\endcsname{#2}}%
\long\def\QQA#1#2{}%
\def\QTR#1#2{{\csname#1\endcsname #2}}
\def\EXPAND#1[#2]#3{}%
\def\NOEXPAND#1[#2]#3{}%
\def\LaTeXparent#1{}%
\def\ChildStyles#1{}%
\def\ChildDefaults#1{}%
\def\QTagDef#1#2#3{}%
  \providecommand{\UNICODE}[2][]{}
\def\QQfnmark#1{\footnotemark}
 \def\abstract{%
  \if@twocolumn
   \section*{Abstract (Not appropriate in this style!)}%
   \else \small 
   \begin{center}{\bf Abstract\vspace{-.5em}\vspace{\z@}}\end{center}%
   \quotation 
   \fi
  }%
   \def\registered{\relax\ifmmode{}\r@gistered
                    \else$\m@th\r@gistered$\fi}%
 \def\r@gistered{^{\ooalign
  {\hfil\raise.07ex\hbox{$\scriptstyle\rm\text{R}$}\hfil\crcr
  \mathhexbox20D}}}}{}%
\newdimen\theight
\def\Column{%
 \vadjust{\setbox\z@=\hbox{\scriptsize\quad\quad tcol}%
  \theight=\ht\z@\advance\theight by \dp\z@\advance\theight by \lineskip
  \kern -\theight \vbox to \theight{%
   \rightline{\rlap{\box\z@}}%
   \vss
   }%
  }%
 }%
\def\qed{%
 \ifhmode\unskip\nobreak\fi\ifmmode\ifinner\else\hskip5\p@\fi\fi
 \hbox{\hskip5\p@\vrule width4\p@ height6\p@ depth1.5\p@\hskip\p@}%
 }%
\def\miss{\hbox{\vrule height2\p@ width 2\p@ depth\z@}}%
\def\tcol#1{{\baselineskip=6\p@ \vcenter{#1}} \Column}  %
\def\newfmtname{LaTeX2e}
  \DeclareOldFontCommand{\rm}{\normalfont\rmfamily}{\mathrm}
  \DeclareOldFontCommand{\sf}{\normalfont\sffamily}{\mathsf}
  \DeclareOldFontCommand{\tt}{\normalfont\ttfamily}{\mathtt}
  \DeclareOldFontCommand{\bf}{\normalfont\bfseries}{\mathbf}
  \DeclareOldFontCommand{\it}{\normalfont\itshape}{\mathit}
  \DeclareOldFontCommand{\sl}{\normalfont\slshape}{\@nomath\sl}
  \DeclareOldFontCommand{\sc}{\normalfont\scshape}{\@nomath\sc}
\def\alpha{{\Greekmath 010B}}%
\def\beta{{\Greekmath 010C}}%
\def\gamma{{\Greekmath 010D}}%
\def\delta{{\Greekmath 010E}}%
\def\epsilon{{\Greekmath 010F}}%
\def\zeta{{\Greekmath 0110}}%
\def\eta{{\Greekmath 0111}}%
\def\theta{{\Greekmath 0112}}%
\def\iota{{\Greekmath 0113}}%
\def\kappa{{\Greekmath 0114}}%
\def\lambda{{\Greekmath 0115}}%
\def\mu{{\Greekmath 0116}}%
\def\nu{{\Greekmath 0117}}%
\def\xi{{\Greekmath 0118}}%
\def\pi{{\Greekmath 0119}}%
\def\rho{{\Greekmath 011A}}%
\def\sigma{{\Greekmath 011B}}%
\def\tau{{\Greekmath 011C}}%
\def\upsilon{{\Greekmath 011D}}%
\def\phi{{\Greekmath 011E}}%
\def\chi{{\Greekmath 011F}}%
\def\psi{{\Greekmath 0120}}%
\def\omega{{\Greekmath 0121}}%
\def\varepsilon{{\Greekmath 0122}}%
\def\vartheta{{\Greekmath 0123}}%
\def\varpi{{\Greekmath 0124}}%
\def\varrho{{\Greekmath 0125}}%
\def\varsigma{{\Greekmath 0126}}%
\def\varphi{{\Greekmath 0127}}%
\def\nabla{{\Greekmath 0272}}
\def\FindBoldGroup{%
   {\setbox0=\hbox{$\mathbf{x\global\edef\theboldgroup{\the\mathgroup}}$}}%
}
\def\Greekmath#1#2#3#4{%
    \if@compatibility
        \ifnum\mathgroup=\symbold
           \mathchoice{\mbox{\boldmath$\displaystyle\mathchar"#1#2#3#4$}}%
                      {\mbox{\boldmath$\textstyle\mathchar"#1#2#3#4$}}%
                      {\mbox{\boldmath$\scriptstyle\mathchar"#1#2#3#4$}}%
                      {\mbox{\boldmath$\scriptscriptstyle\mathchar"#1#2#3#4$}}%
        \else
           \mathchar"#1#2#3#4%
        \fi 
    \else 
        \FindBoldGroup
        \ifnum\mathgroup=\theboldgroup 
           \mathchoice{\mbox{\boldmath$\displaystyle\mathchar"#1#2#3#4$}}%
                      {\mbox{\boldmath$\textstyle\mathchar"#1#2#3#4$}}%
                      {\mbox{\boldmath$\scriptstyle\mathchar"#1#2#3#4$}}%
                      {\mbox{\boldmath$\scriptscriptstyle\mathchar"#1#2#3#4$}}%
        \else
           \mathchar"#1#2#3#4%
        \fi     	    
	  \fi}
\newif\ifGreekBold  \GreekBoldfalse
\let\SAVEPBF=\pbf
\def\pbf{\GreekBoldtrue\SAVEPBF}%
  \newcounter{equationnumber}  
  \def\mathletters{%
     \addtocounter{equation}{1}
     \edef\@currentlabel{\theequation}%
     \setcounter{equationnumber}{\c@equation}
     \setcounter{equation}{0}%
     \edef\theequation{\@currentlabel\noexpand\alph{equation}}%
  }
    \def\BibTeX{{\rm B\kern-.05em{\sc i\kern-.025em b}\kern-.08em
                 T\kern-.1667em\lower.7ex\hbox{E}\kern-.125emX}}}{}%
\def\AmS{{\protect\usefont{OMS}{cmsy}{m}{n}%
                A\kern-.1667em\lower.5ex\hbox{M}\kern-.125emS}}}{}%
\def\@@eqncr{\let\@tempa\relax
    \ifcase\@eqcnt \def\@tempa{& & &}\or \def\@tempa{& &}%
      \else \def\@tempa{&}\fi
     \@tempa
     \if@eqnsw
        \iftag@
           \@taggnum
        \else
           \@eqnnum\stepcounter{equation}%
        \fi
     \fi
     \global\tag@false
     \global\@eqnswtrue
     \global\@eqcnt\z@\cr}
\def\TCItag{\@ifnextchar*{\@TCItagstar}{\@TCItag}}
\def\@TCItag#1{%
    \global\tag@true
    \global\def\@taggnum{(#1)}}
\def\@TCItagstar*#1{%
    \global\tag@true
    \global\def\@taggnum{#1}}
\let\DOTSI\relax
\def\RIfM@{\relax\ifmmode}%
\def\FN@{\futurelet\next}%
\def\iint{\DOTSI\intno@\tw@\FN@\ints@}%
\def\iiint{\DOTSI\intno@\thr@@\FN@\ints@}%
\def\iiiint{\DOTSI\intno@4 \FN@\ints@}%
\def\idotsint{\DOTSI\intno@\z@\FN@\ints@}%
\def\ints@{\findlimits@\ints@@}%
\newif\iflimtoken@
\newif\iflimits@
\def\findlimits@{\limtoken@true\ifx\next\limits\limits@true
 \else\ifx\next\nolimits\limits@false\else
 \limtoken@false\ifx\ilimits@\nolimits\limits@false\else
 \ifinner\limits@false\else\limits@true\fi\fi\fi\fi}%
\def\multint@{\int\ifnum\intno@=\z@\intdots@                          
 \else\intkern@\fi                                                    
 \ifnum\intno@>\tw@\int\intkern@\fi                                   
 \ifnum\intno@>\thr@@\int\intkern@\fi                                 
 \int}
\def\multintlimits@{\intop\ifnum\intno@=\z@\intdots@\else\intkern@\fi
 \ifnum\intno@>\tw@\intop\intkern@\fi
 \ifnum\intno@>\thr@@\intop\intkern@\fi\intop}%
\def\intic@{%
    \mathchoice{\hskip.5em}{\hskip.4em}{\hskip.4em}{\hskip.4em}}%
\def\negintic@{\mathchoice
 {\hskip-.5em}{\hskip-.4em}{\hskip-.4em}{\hskip-.4em}}%
\def\ints@@{\iflimtoken@                                              
 \def\ints@@@{\iflimits@\negintic@
   \mathop{\intic@\multintlimits@}\limits                             
  \else\multint@\nolimits\fi                                          
  \eat@}
 \else                                                                
 \def\ints@@@{\iflimits@\negintic@
  \mathop{\intic@\multintlimits@}\limits\else
  \multint@\nolimits\fi}\fi\ints@@@}%
\def\intkern@{\mathchoice{\!\!\!}{\!\!}{\!\!}{\!\!}}%
\def\plaincdots@{\mathinner{\cdotp\cdotp\cdotp}}%
\def\intdots@{\mathchoice{\plaincdots@}%
 {{\cdotp}\mkern1.5mu{\cdotp}\mkern1.5mu{\cdotp}}%
 {{\cdotp}\mkern1mu{\cdotp}\mkern1mu{\cdotp}}%
 {{\cdotp}\mkern1mu{\cdotp}\mkern1mu{\cdotp}}}%
\def\RIfM@{\relax\protect\ifmmode}
\def\text{\RIfM@\expandafter\text@\else\expandafter\mbox\fi}
\let\nfss@text\text
\def\text@#1{\mathchoice
   {\textdef@\displaystyle\f@size{#1}}%
   {\textdef@\textstyle\tf@size{\firstchoice@false #1}}%
   {\textdef@\textstyle\sf@size{\firstchoice@false #1}}%
   {\textdef@\textstyle \ssf@size{\firstchoice@false #1}}%
   \glb@settings}
\def\textdef@#1#2#3{\hbox{{%
                    \everymath{#1}%
                    \let\f@size#2\selectfont
                    #3}}}
\newif\iffirstchoice@
\def\Let@{\relax\iffalse{\fi\let\\=\cr\iffalse}\fi}%
\def\vspace@{\def\vspace##1{\crcr\noalign{\vskip##1\relax}}}%
\def\multilimits@{\bgroup\vspace@\Let@
 \baselineskip\fontdimen10 \scriptfont\tw@
 \advance\baselineskip\fontdimen12 \scriptfont\tw@
 \lineskip\thr@@\fontdimen8 \scriptfont\thr@@
 \lineskiplimit\lineskip
 \vbox\bgroup\ialign\bgroup\hfil$\m@th\scriptstyle{##}$\hfil\crcr}%
\def\Sb{_\multilimits@}%
\def\endSb{\crcr\egroup\egroup\egroup}%
\def\Sp{^\multilimits@}%
\newdimen\ex@
\def\rightarrowfill@#1{$#1\m@th\mathord-\mkern-6mu\cleaders
 \hbox{$#1\mkern-2mu\mathord-\mkern-2mu$}\hfill
 \mkern-6mu\mathord\rightarrow$}%
\def\leftarrowfill@#1{$#1\m@th\mathord\leftarrow\mkern-6mu\cleaders
 \hbox{$#1\mkern-2mu\mathord-\mkern-2mu$}\hfill\mkern-6mu\mathord-$}%
\def\leftrightarrowfill@#1{$#1\m@th\mathord\leftarrow
\mkern-6mu\cleaders
 \hbox{$#1\mkern-2mu\mathord-\mkern-2mu$}\hfill
 \mkern-6mu\mathord\rightarrow$}%
\def\overrightarrow{\mathpalette\overrightarrow@}%
\def\overrightarrow@#1#2{\vbox{\ialign{##\crcr\rightarrowfill@#1\crcr
 \noalign{\kern-\ex@\nointerlineskip}$\m@th\hfil#1#2\hfil$\crcr}}}%
\def\overleftarrow{\mathpalette\overleftarrow@}%
\def\overleftarrow@#1#2{\vbox{\ialign{##\crcr\leftarrowfill@#1\crcr
 \noalign{\kern-\ex@\nointerlineskip}$\m@th\hfil#1#2\hfil$\crcr}}}%
\def\overleftrightarrow{\mathpalette\overleftrightarrow@}%
\def\overleftrightarrow@#1#2{\vbox{\ialign{##\crcr
   \leftrightarrowfill@#1\crcr
 \noalign{\kern-\ex@\nointerlineskip}$\m@th\hfil#1#2\hfil$\crcr}}}%
\def\underrightarrow{\mathpalette\underrightarrow@}%
\def\underrightarrow@#1#2{\vtop{\ialign{##\crcr$\m@th\hfil#1#2\hfil
  $\crcr\noalign{\nointerlineskip}\rightarrowfill@#1\crcr}}}%
\def\underleftarrow{\mathpalette\underleftarrow@}%
\def\underleftarrow@#1#2{\vtop{\ialign{##\crcr$\m@th\hfil#1#2\hfil
  $\crcr\noalign{\nointerlineskip}\leftarrowfill@#1\crcr}}}%
\def\underleftrightarrow{\mathpalette\underleftrightarrow@}%
\def\underleftrightarrow@#1#2{\vtop{\ialign{##\crcr$\m@th
  \hfil#1#2\hfil$\crcr
 \noalign{\nointerlineskip}\leftrightarrowfill@#1\crcr}}}%
\def\qopnamewl@#1{\mathop{\operator@font#1}\nlimits@}
\let\nlimits@\displaylimits
\def\setboxz@h{\setbox\z@\hbox}
\def\varlim@#1#2{\mathop{\vtop{\ialign{##\crcr
 \hfil$#1\m@th\operator@font lim$\hfil\crcr
 \noalign{\nointerlineskip}#2#1\crcr
 \noalign{\nointerlineskip\kern-\ex@}\crcr}}}}
 \def\rightarrowfill@#1{\m@th\setboxz@h{$#1-$}\ht\z@\z@
  $#1\copy\z@\mkern-6mu\cleaders
  \hbox{$#1\mkern-2mu\box\z@\mkern-2mu$}\hfill
  \mkern-6mu\mathord\rightarrow$}
\def\leftarrowfill@#1{\m@th\setboxz@h{$#1-$}\ht\z@\z@
  $#1\mathord\leftarrow\mkern-6mu\cleaders
  \hbox{$#1\mkern-2mu\copy\z@\mkern-2mu$}\hfill
  \mkern-6mu\box\z@$}
\def\projlim{\qopnamewl@{proj\,lim}}
\def\injlim{\qopnamewl@{inj\,lim}}
\def\varinjlim{\mathpalette\varlim@\rightarrowfill@}
\def\varprojlim{\mathpalette\varlim@\leftarrowfill@}
\def\varliminf{\mathpalette\varliminf@{}}
\def\varliminf@#1{\mathop{\underline{\vrule\@depth.2\ex@\@width\z@
   \hbox{$#1\m@th\operator@font lim$}}}}
\def\varlimsup{\mathpalette\varlimsup@{}}
\def\varlimsup@#1{\mathop{\overline
  {\hbox{$#1\m@th\operator@font lim$}}}}
\def\align{\@verbatim \frenchspacing\@vobeyspaces \@alignverbatim
You are using the "align" environment in a style in which it is not defined.}
\let\csname endalign*\endcsname =\endtrivlist
\def\alignat{\@verbatim \frenchspacing\@vobeyspaces \@alignatverbatim
You are using the "alignat" environment in a style in which it is not defined.}
\let\csname endalignat*\endcsname =\endtrivlist
\def\xalignat{\@verbatim \frenchspacing\@vobeyspaces \@xalignatverbatim
You are using the "xalignat" environment in a style in which it is not defined.}
\let\csname endxalignat*\endcsname =\endtrivlist
\def\gather{\@verbatim \frenchspacing\@vobeyspaces \@gatherverbatim
You are using the "gather" environment in a style in which it is not defined.}
\let\csname endgather*\endcsname =\endtrivlist
\def\multiline{\@verbatim \frenchspacing\@vobeyspaces \@multilineverbatim
You are using the "multiline" environment in a style in which it is not defined.}
\let\csname endmultiline*\endcsname =\endtrivlist
\def\arrax{\@verbatim \frenchspacing\@vobeyspaces \@arraxverbatim
You are using a type of "array" construct that is only allowed in AmS-LaTeX.}
\def\tabulax{\@verbatim \frenchspacing\@vobeyspaces \@tabulaxverbatim
You are using a type of "tabular" construct that is only allowed in AmS-LaTeX.}
\let\csname endarrax*\endcsname =\endtrivlist
\let\csname endtabulax*\endcsname =\endtrivlist
 \def\endequation{%
     \ifmmode\ifinner 
      \iftag@
        \addtocounter{equation}{-1} 
        $\hfil
           \displaywidth\linewidth\@taggnum\egroup \endtrivlist
        \global\tag@false
        \global\@ignoretrue   
      \else
        $\hfil
           \displaywidth\linewidth\@eqnnum\egroup \endtrivlist
        \global\tag@false
        \global\@ignoretrue 
      \fi
     \else   
      \iftag@
        \addtocounter{equation}{-1} 
        \eqno \hbox{\@taggnum}
        \global\tag@false%
        $$\global\@ignoretrue
      \else
        \eqno \hbox{\@eqnnum}
        $$\global\@ignoretrue
      \fi
     \fi\fi
 } 
 \newif\iftag@ \tag@false
 \def\TCItag{\@ifnextchar*{\@TCItagstar}{\@TCItag}}
 \def\@TCItag#1{%
     \global\tag@true
     \global\def\@taggnum{(#1)}}
 \def\@TCItagstar*#1{%
     \global\tag@true
     \global\def\@taggnum{#1}}
     \def\tag{\@ifnextchar*{\@tagstar}{\@tag}}
     \def\@tag#1{%
         \global\tag@true
         \global\def\@taggnum{(#1)}}
     \def\@tagstar*#1{%
         \global\tag@true
         \global\def\@taggnum{#1}}
\begin{document}

\author{V. S. Borisov \thanks{%
The Pearlstone Center for Aeronautical Engineering Studies, Department of
Mechanical Engineering, Ben-Gurion University of the Negev, Beer-Sheva,
Israel. E-mail: viatslav@bgu.ac.il} \and M. Mond \thanks{%
The Pearlstone Center for Aeronautical Engineering Studies, Department of
Mechanical Engineering, Ben-Gurion University of the Negev, Beer-Sheva,
Israel. E-mail: mond@bgu.ac.il}}
\title{On stability of difference schemes. Central schemes for hyperbolic
conservation laws with source terms}
\maketitle

\begin{abstract}
The stability of nonlinear explicit difference schemes with not, in general,
open domains of the scheme operators are studied. For the case of
path-connected, bounded, and Lipschitz domains, we establish the notion that
a multi-level nonlinear explicit scheme is stable \emph{iff} (if and only
if) the corresponding scheme in variations is stable. A new modification of
the central Lax-Friedrichs (LxF) scheme is developed to be of the second
order accuracy. The modified scheme is based on nonstaggered grids. A
monotone piecewise cubic interpolation is used in the central scheme to give
an accurate approximation for the model in question. The stability of the
modified scheme is investigated. Some versions of the modified scheme are
tested on several conservation laws, and the scheme is found to be accurate
and robust. As applied to hyperbolic conservation laws with, in general,
stiff source terms, it is constructed a second order nonstaggered central
scheme based on operator-splitting techniques.
\end{abstract}


\section{Introduction\label{Introduction}}

We are mainly concerned with the stability of nonlinear explicit difference
schemes arising, e.g., in numerical analysis of nonlinear PDE (partial
differential equation, such as parabolic, hyperbolic, etc.) systems,
generally, in multidimensional space. Stability is the central and the most
pressing problem in any algorithm \cite{Samarskii 2001}. Nowadays, there
exists a few methods for stability analysis of some classes of difference
schemes approximating PDE systems (see, e.g., \cite{Ganzha and Vorozhtsov
1996b}, \cite{Gil' 2007}, \cite{LeVeque 2002}, \cite{Morton 1996}, \cite
{Naterer and Camberos 2008}, \cite{Samarskii 2001} and references therein).
However, the problem of stability analysis for nonlinear schemes is still
one of the most burning problems, because of the absence of its complete
solution \cite{Gil' 2007}. In particular, the vast majority of difference
schemes, currently in use, have still not been analyzed \cite{Ganzha and
Vorozhtsov 1996b}, and, in general, no numerical method for non-linear
systems of equations has been proven to be stable \cite{LeVeque 2002}. A new
approach to testing scheme stability has been suggested in \cite{Borisov and
Mond 2010a} to prove convergence of non-linear schemes for systems of PDEs.
It was demonstrated \cite{Borisov and Mond 2010a} that the notion of scheme
in variations (or variational scheme \cite{Borisov and Sorek 2004}, \cite
{Borisov V.S. 2003}) has much potential to be an effective tool for studying
stability of nonlinear schemes. The scheme in variations for a difference
scheme represents a tangent space at a point of the manifold associated to
the difference scheme \cite{Borisov and Mond 2010a}, \cite{Borisov and Sorek
2004}. Thus, the scheme in variations will always be linear and, hence,
enables the investigation of the stability for nonlinear operators using, in
general, an infinite family of linear patterns. It was established in \cite
{Borisov and Mond 2010a} the notion that the stability of a scheme in
variations implies the stability of its original scheme, and that a
nonlinear explicit scheme will be stable \emph{iff} (if and only if) its
scheme in variations will be stable. In this paper we consider difference
schemes with not, in general, open domains of the scheme operators. For the
case of internal path-connected, bounded, and Lipschitz domains, we
establish the notion that a multi-level nonlinear explicit scheme is stable 
\emph{iff} the corresponding scheme in variations is stable.

By way of illustration of the developed approach we are concerned with the
stability analysis of central difference schemes for hyperbolic systems of
conservation laws with source terms. Such systems are used to describe many
physical problems of great practical importance in magneto-hydrodynamics,
kinetic theory of rarefied gases, linear and nonlinear waves,
viscoelasticity, multi-phase flows and phase transitions, shallow waters,
etc. (see, e.g., \cite{Bereux and Sainsaulieu 1997}, \cite{Caflisch at el.
1997}, \cite{Godlewski and Raviart 1996}, \cite{Jin Shi 1995}, \cite
{Kurganov and Tadmor 2000}, \cite{LeVeque 2002}, \cite{Monthe 2003}, \cite
{Naldi and Pareschi 2000}, \cite{Pareschi Lorenzo 2001}, \cite{Pareschi and
Russo 2005}). We will consider a system of hyperbolic conservation laws
written as follows (e.g., \cite{Godlewski and Raviart 1996}, \cite{LeVeque
2002}) 
\begin{equation}
\frac{\partial \mathbf{u}}{\partial t}+\sum_{j=1}^{N}\frac{\partial }{%
\partial x_{j}}\mathbf{f}_{j}\left( \mathbf{u}\right) =\frac{1}{\tau }%
\mathbf{q}\left( \mathbf{u}\right) ,\ 0<t\leq T_{\max },\ \left. \mathbf{u}%
\left( \mathbf{x},t\right) \right| _{t=0}=\mathbf{u}^{0}\left( \mathbf{x}%
\right) ,  \label{INA10}
\end{equation}
where the column-vector $\mathbf{x\equiv }\left\{ x_{1},x_{2},\ldots
,x_{N}\right\} ^{T}\in \mathbb{R}^{N}$, $\mathbf{u}=\left\{
u_{1},u_{2},\ldots ,u_{M}\right\} ^{T}$ is a vector-valued function from $%
\mathbb{R}^{N}$ $\times $ $[0,+\infty )$ into a subset $\Omega _{\mathbf{u}%
}\subset \mathbb{R}^{M}$, $\mathbf{f}_{j}\left( \mathbf{u}\right) =\left\{
f_{1j}\left( \mathbf{u}\right) \right. ,$ $f_{2j}\left( \mathbf{u}\right) ,$ 
$\ldots ,$ $\left. f_{Mj}\left( \mathbf{u}\right) \right\} ^{T}$ is a smooth
function (flux-function) from $\Omega _{\mathbf{u}}$ into $\mathbb{R}^{M}$, $%
\mathbf{q}\left( \mathbf{u}\right) =\left\{ q_{1}\left( \mathbf{u}\right)
,q_{2}\left( \mathbf{u}\right) ,\ldots ,q_{M}\left( \mathbf{u}\right)
\right\} ^{T}$ denotes the source term, $\tau >0$ denotes the stiffness
parameter, $\mathbf{u}^{0}\left( \mathbf{x}\right) $\ is of compact support.
It is assumed that $\tau =const$ without loss of generality. It is also
assumed that all eigenvalues, $\xi _{k}=\xi _{k}\left( \mathbf{u}\right) ,$
of the Jacobian matrix $\mathbf{G}$ ($\mathbf{=}\partial \mathbf{q}\left( 
\mathbf{u}\right) \diagup \partial \mathbf{u}$) have non-positive real
parts, i.e. 
\begin{equation}
R_{e}\xi _{k}\left( \mathbf{u}\right) \leq 0,\quad \forall k,\quad \forall 
\mathbf{u\in }\Omega _{\mathbf{u}}.  \label{INA20}
\end{equation}
In what follows $\left\| \mathbf{M}\right\| _{p}$ denotes the matrix norm of
a matrix $\mathbf{M}$ induced by the vector norm $\left\| \mathbf{v}\right\|
_{p}$ $=\left( \sum_{i}\left| v_{i}\right| ^{p}\right) ^{1/p}$, and $\left\| 
\mathbf{M}\right\| $ denotes the matrix norm induced by a prescribed vector
norm. $\mathbb{R}$ denotes the field of real numbers. The transpose of a
matrix (or vector) $\mathbf{S}$ is denoted by $\mathbf{S}^{T}$. The null
element in any linear space, as well as the number zero, will be denoted by
the same symbol $0$.

Central schemes are attractive for various reasons: no Riemann solvers,
characteristic decompositions, complicated flux splittings, etc., must be
involved in construction of a central scheme (see, e.g., \cite{Balaguer and
Conde 2005}, \cite{Borisov and Mond 2010b}, \cite{Jiang et al. 1998}, \cite
{Kurganov and Tadmor 2000}, \cite{Kurganov and Levy 2000}, \cite{LeVeque
2002}, \cite{Nessyahu and Tadmor 1990}, \cite{Pareschi Lorenzo 2001}, \cite
{Pareschi et al. 2005} and references therein). However, there is the need
for staggered central schemes to alternate between two staggered grids,
which can be cumbersome, e.g., near the boundaries \cite{Jiang et al. 1998}.
Moreover, there is a risk for every central scheme to exhibit spurious
solutions \cite{Borisov and Mond 2010b} in spite of sufficiently small CFL
(Courant-Friedrichs-Lewy \cite{LeVeque 2002}) number. To simplify
implementation of central schemes in the case of complex geometries and
boundary conditions, it was developed nonstaggered central schemes in \cite
{Jiang et al. 1998}. A new second-order modification of the staggered LxF
scheme was developed in \cite{Borisov and Mond 2010b} to avoid the risk of
spurious oscillations. It was demonstrated in \cite{Borisov and Mond 2010b}
that the higher order versions of LxF scheme can produce spurious
oscillations because of a negative numerical viscosity introduced to
increase accuracy of the scheme up to $O(\Delta t+\left( \Delta x\right)
^{2})$. To reduce the risk of spurious solutions, it was suggested in \cite
{Borisov and Mond 2010b} to introduce an additional non-negative numerical
viscosity such that the scheme's order of accuracy becomes $O((\Delta
t)^{2}+\left( \Delta x\right) ^{2})$. The developed central scheme was
tested on several conservation laws taking a CFL number equal or close to
unity, and the scheme was found to be accurate and robust. In this paper, we
extend the second-order modification to a new nonstaggered central scheme.
The stability of this scheme is proven in Section \ref{Stability of the
developed schemes} on the basis of the approach developed in Section \ref
{Stability of difference schemes} as well as in \cite{Borisov and Mond 2010a}%
, \cite{Borisov and Mond 2010b}. The scheme is tested on several
conservation laws in Section \ref{Exemplification and discussion}.

A stable numerical scheme may yield spurious results when applied to a stiff
hyperbolic system with relaxation (see, e.g., \cite{Ahmad and Berzins 2001}, 
\cite{Aves Mark A. et al. 2000}, \cite{Bereux and Sainsaulieu 1997}, \cite
{Caflisch at el. 1997}, \cite{Du Tao et al. 2003}, \cite{Jin Shi 1995}, \cite
{Pember 1993}, \cite{Pember 1993a}). It is significant that a numerical
scheme for relaxation systems must possess a discrete analogy to the
continuous asymptotic limit, because any scheme violating the correct
asymptotic limit leads to spurious or poor solutions (see, e.g., \cite
{Caflisch at el. 1997}, \cite{Jin Shi 1995}, \cite{Jin Shi et al. 2000}, 
\cite{Naldi and Pareschi 2000}, \cite{Pareschi Lorenzo 2001}). Most methods
for solving such systems can be described as operator splitting ones, \cite
{Du Tao et al. 2003}, or methods of fractional steps, \cite{Bereux and
Sainsaulieu 1997}. After operator splitting, one solves the advection
homogeneous system, and then the ordinary differential equations associated
with the source terms. We are mainly concerned with such an approach in
Section \ref{OSS}.

\section{Stability of nonlinear explicit schemes\label%
{Stability of difference schemes}}

We consider the following ($q+1$)-level difference scheme: 
\begin{equation*}
\mathbf{w}_{i}^{n+1}=\mathbf{G}_{i}^{n}(\mathbf{w}_{1}^{n},\mathbf{w}%
_{1}^{n-1},\ldots ,\mathbf{w}_{1}^{n-q+1},\mathbf{w}_{2}^{n},\mathbf{w}%
_{2}^{n-1},\ldots ,\mathbf{w}_{2}^{n-q+1},\ldots ,
\end{equation*}
\begin{equation}
\mathbf{w}_{I}^{n},\mathbf{w}_{I}^{n-1},\ldots ,\mathbf{w}%
_{I}^{n-q+1}),\quad \mathbf{G}_{i}^{n}:\Omega _{n}\subseteq \mathbb{R}%
^{N}\rightarrow \mathbb{R}^{N_{0}},  \label{NSCS04}
\end{equation}
where $q,N_{0},I\geq 1$ are finite integer constants, $N=qN_{0}I$, $\mathbf{w%
}_{i}^{k}\in \mathbb{R}^{N_{0}}$ denotes a vector-valued grid function, $%
i\in \omega _{1}$ denotes a node of the grid $\omega _{1}\equiv \left\{
1,2,\ldots ,I\right\} $, $k\in \omega _{2}$ denotes a node (time level) of
the grid $\omega _{2}\equiv \left\{ 0,1,\ldots ,M\right\} $, $M\geq q$ is a
finite integer constant. $\mathbf{G}_{i}^{n}$$=$ $\left\{
G_{i,1}^{n},\right. G_{i,2}^{n},$ $\ldots ,$ $\left. G_{i,N_{0}}^{n}\right\}
^{T}$ denotes the vector-valued function with the domain and range belonging
to $\mathbb{R}^{N}$ and $\mathbb{R}^{N_{0}}$, respectively. We emphasize
that the domain, $\Omega _{n}$, of the function $\mathbf{G}_{i}^{n}$ is the
set of argument values, which need not be open in $\mathbb{R}^{N}$. We
assume that $n$\ in (\ref{NSCS04}) denotes the time level, $t_{n}$ $\left(
=n\Delta t\right) $. Thus, the time increment will be represented by $\Delta
t=t_{\max }/M=const$, where $t_{\max }$\ denotes some finite time over which
we wish to compute. Using the notation 
\begin{equation*}
\mathbf{H}_{i}^{n}=\left\{ \left( \mathbf{G}_{i}^{n}\right) ^{T},\left( 
\mathbf{w}_{i}^{n}\right) ^{T},\left( \mathbf{w}_{i}^{n-1}\right)
^{T},\ldots ,\left( \mathbf{w}_{i}^{n-q+2}\right) ^{T}\right\} ^{T},
\end{equation*}
\begin{equation}
\mathbf{v}_{i}^{n}=\left\{ \left( \mathbf{w}_{i}^{n}\right) ^{T},\left( 
\mathbf{w}_{i}^{n-1}\right) ^{T},\ldots ,\left( \mathbf{w}%
_{i}^{n-q+1}\right) ^{T}\right\} ^{T},\quad i\in \omega _{1},\ n\in \omega
_{2},  \label{NSCS03}
\end{equation}
we reduce (see, e.g., \cite{Ganzha and Vorozhtsov 1996}, \cite{Richtmyer and
Morton 1967}) the multi-level scheme, (\ref{NSCS04}), to the following
two-level scheme: 
\begin{equation}
\mathbf{v}_{i}^{n+1}=\mathbf{H}_{i}^{n}(\mathbf{v}_{1}^{n},\mathbf{v}%
_{2}^{n},\ldots ,\mathbf{v}_{I}^{n}),\ \mathbf{H}_{i}^{n}:\Omega
_{n}\subseteq \mathbb{R}^{N}\rightarrow \mathbb{R}^{qN_{0}},\ i\in \omega
_{1},\ n,n+1\in \omega _{2}.  \label{NSCS10}
\end{equation}
If we define 
\begin{equation}
\mathbf{v}^{n}\mathbf{=}\left\{ \left( \mathbf{v}_{1}^{n}\right) ^{T},\left( 
\mathbf{v}_{2}^{n}\right) ^{T},\ldots ,\left( \mathbf{v}_{I}^{n}\right)
^{T}\right\} ^{T},\ \mathbf{H}^{n}\mathbf{=}\left\{ \left( \mathbf{H}%
_{1}^{n}\right) ^{T},\left( \mathbf{H}_{2}^{n}\right) ^{T},\ldots ,\left( 
\mathbf{H}_{I}^{n}\right) ^{T}\right\} ^{T},  \label{NSCS05}
\end{equation}
then the scheme (\ref{NSCS10}) becomes 
\begin{equation}
\mathbf{v}^{n+1}=\mathbf{H}^{n}(\mathbf{v}^{n}),\quad \mathbf{H}^{n}:\Omega
_{n}\subseteq \mathbb{R}^{N}\rightarrow \mathbb{R}^{N},\quad n,n+1\in \omega
_{2}\equiv \left\{ 0,1,\ldots ,M\right\} .  \label{NSCS20}
\end{equation}

As usual (e.g., \cite[p. 62]{Ortega and Rheinboldt 1970}), for mappings $%
\mathbf{f}:\Omega _{f}\subseteq \mathbb{R}^{N}\rightarrow \mathbb{R}^{N}$
and $\mathbf{g}:\Omega _{g}\subseteq \mathbb{R}^{N}\rightarrow \mathbb{R}%
^{N} $, the composite mapping $\mathbf{h}=\mathbf{g}\circ \mathbf{f}$ is
defined by $\mathbf{h}\left( \mathbf{v}\right) =\mathbf{g}\left( \mathbf{f}%
\left( \mathbf{v}\right) \right) $ for all $\mathbf{v\in }\Omega
_{h}=\left\{ \mathbf{v\in }\Omega _{f}\mid \mathbf{f}\left( \mathbf{v}%
\right) \in \Omega _{g}\right\} $. Using the composite mapping approach, we
rewrite Scheme (\ref{NSCS20}) to read 
\begin{equation}
\mathbf{y}=\mathbf{F}\left( \mathbf{x}\right) ,\quad \mathbf{F}:\Omega
_{F}\subseteq \mathbb{R}^{N}\rightarrow \mathbb{R}^{N},  \label{NSCS30}
\end{equation}
where the following notation is used: $\mathbf{x=v}^{0}$, $\mathbf{y}=%
\mathbf{v}^{M}$, $\mathbf{F=H}^{M-1}\circ \mathbf{H}^{M-2}\circ \ldots \circ 
\mathbf{H}^{0}$, $\Omega _{F}=\left\{ \mathbf{v}^{0}\mathbf{\in }\Omega
_{0}\mid \right. $ $\mathbf{v}^{1}=\mathbf{H}^{0}\left( \mathbf{v}%
^{0}\right) \in \Omega _{1}\mid $ $\ldots $ $\mid \mathbf{v}^{M-1}=$ $\left. 
\mathbf{H}^{M-2}\left( \mathbf{v}^{M-2}\right) \in \Omega _{M-1}\right\} $.
Notice, $\mathbf{F}$ in (\ref{NSCS30}) depends also on scheme parameters
(e.g., space and time increments), however, this dependence is usually not
included in the notation. Let the scheme parameters (including time
increments) be represented by a vector $\mathbf{s}$ belonging to some normed
space with the norm $\left| \mathbf{s}\right| $.

Let us remind the well-known definition of stability of an explicit scheme
(see, e.g., \cite{Ganzha and Vorozhtsov 1996b}, \cite{Godlewski and Raviart
1996}, \cite{LeVeque 2002}, \cite{Richtmyer and Morton 1967}, \cite
{Samarskii 2001}, \cite{Samarskiy and Gulin 1973}).

\begin{definition}
\label{DefStab1}Scheme (\ref{NSCS30}) is said to be stable if there exist
positive constants $s_{0}$ and $C$ such that for all $\mathbf{x,}$ $\mathbf{x%
}_{\ast }\in \Omega _{F}$ the following inequality is valid 
\begin{equation}
\left\| \mathbf{F}\left( \mathbf{x}_{\ast }\right) -\mathbf{F}\left( \mathbf{%
x}\right) \right\| \leq C\left\| \mathbf{x}_{\ast }-\mathbf{x}\right\|
,\quad \forall \ \mathbf{s}:\ \left| \mathbf{s}\right| \leq s_{0}.
\label{NSCS40}
\end{equation}
\end{definition}

Thus, scheme (\ref{NSCS30}) and, hence, scheme (\ref{NSCS04}) will be stable 
\emph{iff} the function $\mathbf{F}$ will be Lipschitz for a constant $C$.
Notice that this well-known definition of stability may lead to wrong
conclusion. Actually, let us consider the ``slit plane'' \cite{Heinonen Juha
2005} in polar coordinates $\left( r,\theta \right) $ 
\begin{equation}
\Omega _{F}=\left\{ \left( r,\theta \right) \mid 0<r<\infty ,\ -\pi <\theta
<\pi \right\} \subset \mathbb{R}^{2},  \label{NSCS42}
\end{equation}
and the vector-valued function $\mathbf{F}=\left\{ F_{1}\left( r,\theta
\right) ,F_{2}\left( r,\theta \right) \right\} ^{T}$ such that \cite
{Heinonen Juha 2005} 
\begin{equation}
F_{1}=r,\ F_{2}=\theta \diagup 2.  \label{NSCS44}
\end{equation}
We take $\left( r,\theta \right) _{\ast }=\left( r_{0},-\pi +\varepsilon
\right) $, $\left( r,\theta \right) =\left( r_{0},\pi -\varepsilon \right) $%
, and $r_{0}=const$. Considering the Euclidean distance between these two
points of the plane with Cartesian coordinates $\mathbf{x}_{\ast }\equiv
\left( -r_{0}\cos \varepsilon ,-r_{0}\sin \varepsilon \right) $ and $\mathbf{%
x\equiv }\left( -r_{0}\cos \varepsilon ,r_{0}\sin \varepsilon \right) $ we
find that the inequality (\ref{NSCS40}) takes the form: $\cos \left(
\varepsilon \diagup 2\right) \leq C\sin \varepsilon $. Obviously the mapping
(\ref{NSCS44}) is not Lipschitz, since $C$ in (\ref{NSCS40}) tends to
infinity\ as $\varepsilon \rightarrow 0$. Therefore, we have to conclude, in
view of the above definition, that the scheme (\ref{NSCS44}) is not stable,
even though for every point $\left( r,\theta \right) $ in $\Omega _{F}$, (%
\ref{NSCS42}), there exists a neighborhood of $\left( r,\theta \right) $
such that the function $\mathbf{F}$, (\ref{NSCS44}), restricted to the
neighborhood is Lipschitz for $C=1$, i.e. the function $\mathbf{F}$ is
locally Lipschitz. Further, $\mathbf{F}$ is the non-stretching mapping of
the ``slit plane'' (\ref{NSCS42}) into the right semi-plane. Hence, the
preceding definition of stability, i.e. Definition \ref{DefStab1}, needs to
be slightly improved. Such a problem has been considered in \cite{Borisov
and Mond 2010a} under the assumption that the domain $\Omega _{F}$ of the
vector-valued function $\mathbf{F}$, (\ref{NSCS30}), is open in $\mathbb{R}%
^{N}$. We will assume that the Lebesgue-measurable domain $\Omega _{F}$ of
the function $\mathbf{F}$, (\ref{NSCS30}), is a not, in general, open subset
of $\mathbb{R}^{N}$ with non-empty interior. 

We shall start with some basic notions. A set $\Omega \subseteq \mathbb{R}%
^{N}$ is said to be internal path-connected (or path-connected, in the case
that $\Omega $ is open) if every two points $\mathbf{x}$, $\mathbf{x}_{\ast
} $ $\in \Omega $ can be joined by a continuous curve $\gamma :\left[ 0,1%
\right] \subset \mathbb{R}\rightarrow \Omega $ of finite length, $L\left(
\gamma \right) $, with $int\left( \gamma \right) \subseteq int\left( \Omega
\right) $. Here and in what follows $int\left( \Omega \right) $ denotes the\
interior of $\Omega \subseteq \mathbb{R}^{N}$ and $int\left( \gamma \right)
\equiv \gamma \diagdown (\gamma \left( 0\right) \cup \gamma \left( 1\right)
) $ denotes the interior of the continuous curve $\gamma $. The intrinsic
metric $\Lambda _{\Omega }$ on an internal path-connected set $\Omega $\ is
defined as 
\begin{equation}
\Lambda _{\Omega }\left( \mathbf{x},\mathbf{x}_{\ast }\right) =\underset{%
\gamma \subseteq \Omega }{\inf }L\left( \gamma \right) ,\quad \mathbf{x}%
=\gamma \left( 0\right) ,\ \mathbf{x}_{\ast }=\gamma \left( 1\right) ,\ 
\mathbf{x,x}_{\ast }\mathbf{\in }\Omega .  \label{NSCS46}
\end{equation}
An open ball (of radius $r>0$) about $\mathbf{x\in }\mathbb{R}^{N}$ is
denoted by $B\left( \mathbf{x},r\right) $ (or just $B_{\mathbf{x}}$). A
closed ball is denoted by $\overline{B}\left( \mathbf{x},r\right) $ (or just 
$\overline{B}_{\mathbf{x}}$). Given points $\mathbf{x},\mathbf{y\in }\mathbb{%
R}^{N}$, given an open ball $B\left( \mathbf{x},r_{\mathbf{x}}\right) $, and
given an open ball $B\left( \mathbf{y},r_{\mathbf{y}}\right) $ such that $%
\mathbf{x\notin }B_{\mathbf{y}}$, the convex open set $Q_{\mathbf{x}}=\{%
\mathbf{z}\mid \mathbf{x}+\lambda \left( \mathbf{z-x}\right) ,\ \lambda >0,\ 
\mathbf{z\in }B_{\mathbf{y}}\}\cap B_{\mathbf{x}}$ will be called a finite
cone with vertex $\mathbf{x}$. A set $\Omega \subseteq \mathbb{R}^{N}$ is
said to have the cone property if there exists a finite cone $Q_{\ast }$
such that each boundary point $\mathbf{x\in \partial }\Omega $ is the vertex
of a finite cone $Q_{\mathbf{x}}\subseteq int\left( \Omega \right) $
congruent to $Q_{\ast }$. Suppose that $\Omega \subseteq \mathbb{R}^{N}$ has
the cone property, and let $K_{\mathbf{x}}$ denote the union of all finite
cones with the vertex $\mathbf{x\in }\partial \Omega $, i.e. $K_{\mathbf{x}%
}=\cup Q_{\mathbf{x}}$. Then a function $\mathbf{F}:\Omega \subseteq \mathbb{%
R}^{N}\rightarrow \mathbb{R}^{N}$ will be called locally Lipschitz (or
locally Lipschitz continuous) if for every $\mathbf{x\in }\Omega $ there
exist $B_{\mathbf{x}}$ and a constant $c_{\mathbf{x}}>0$ such that $\left\| 
\mathbf{F}\left( \mathbf{x}\right) -\mathbf{F}\left( \mathbf{y}\right)
\right\| \leq c_{\mathbf{x}}\left\| \mathbf{x}-\mathbf{y}\right\| $ for each 
$\mathbf{y\in }B_{\mathbf{x}}$ provided $\mathbf{x\in }int\left( \Omega
\right) $, or $\mathbf{y\in }B_{\mathbf{x}}\cap K_{\mathbf{x}}$ otherwise. $%
\Omega \subset \mathbb{R}^{N}$ is Lipschitz (has a Lipschitz boundary), if
for each point $\mathbf{x\in }\partial \Omega $ there exists an open ball $%
B_{\mathbf{x}}$ such that $\partial \Omega \cap B_{\mathbf{x}}$ is the graph
of a Lipschitz function and $int\left( \Omega \right) $ lies on one side of
its boundary 
(for more details see, e.g., \cite[p. 149]{Burenkov V.I. 1998}, \cite[p. 354]
{Leoni Giovanni 2009}). Let us note that $\Omega \subset \mathbb{R}^{N}$
satisfies the cone property if it has a Lipschitz boundary (see, e.g., 
\cite[p. 151]{Burenkov V.I. 1998}, \cite[p. 355]{Leoni Giovanni 2009}), but
not vice-versa as the example (\ref{NSCS42}) shows. $W^{1,\infty }\left(
\Omega \right) $ denotes the Sobolev space (see Definition 5 in \cite[p. 28]
{Burenkov V.I. 1998}, see also \cite{Heinonen Juha 2005} and \cite{Leoni
Giovanni 2009}), i.e., the space consisting of all bounded functions on open 
$\Omega \subset \mathbb{R}^{N}$ whose distributional derivatives are bounded
functions as well. Let $\mathbf{F\equiv }\left\{ F_{1}\right. ,$ $F_{2},$ $%
\ldots ,$ $\left. F_{N}\right\} ^{T}$ in (\ref{NSCS30}), let $\nabla
F_{i}\equiv \{\partial _{1}F_{i},$ $\ldots ,$ $\partial _{N}F_{i}\}$ denote
the distributional gradient of $F_{i}$, and let $\delta \mathbf{F,}$%
\thinspace $\delta \mathbf{x}$ $\mathbf{\in }$ $\mathbb{R}^{N}$ denote
variations. The following equality 
\begin{equation}
\delta \mathbf{F}=\mathbf{F}^{\prime }\cdot \delta \mathbf{x,\quad F}%
^{\prime }\equiv \left\{ \left( \nabla F_{1}\right) ^{T},\left( \nabla
F_{2}\right) ^{T},\ldots ,\left( \nabla F_{N}\right) ^{T}\right\} ^{T},
\label{NDS55}
\end{equation}
will be viewed as the scheme in variations for (\ref{NSCS30}). The matrix
representation of $\mathbf{F}^{\prime }$ in (\ref{NDS55})\ is given by the
following Jacobian matrix\ \cite{Ortega and Rheinboldt 1970}: $\mathbf{F}%
^{\prime }=\left\{ \partial F_{i}\diagup \partial x_{j}\right\} $, $%
i,j=1,2,\ldots ,N$.

\begin{definition}
\label{DefStability}Let $\Omega _{F}$ in (\ref{NSCS30}) be internal
path-connected. Scheme (\ref{NSCS30}) is said to be stable if there exist
positive constants $s_{0}$ and $C$ such that the following inequality holds 
\begin{equation}
\left\| \mathbf{F}\left( \mathbf{x}_{\ast }\right) -\mathbf{F}\left( \mathbf{%
x}\right) \right\| \leq C\Lambda _{\Omega _{F}}\left( \mathbf{x},\mathbf{x}%
_{\ast }\right) ,\quad \forall \ \mathbf{x,x}_{\ast }\in \Omega _{F},\quad
\forall \ \mathbf{s}:\ \left| \mathbf{s}\right| \leq s_{0}.  \label{NDS20}
\end{equation}
\end{definition}

Notice that the scheme (\ref{NSCS44}) is stable in terms of Definition \ref
{DefStability}, since the inequality (\ref{NDS20}) holds for $C=1$. It is
significant that the domain $\Omega _{F}$, (\ref{NSCS42}), of the function $%
\mathbf{F}$, (\ref{NSCS44}), is open in $\mathbb{R}^{2}$. However, if the
domain of the function $\mathbf{F}$, (\ref{NSCS44}), is not open, then the
situation can be in complete contrast to the previous one. Actually, let us
add to $\Omega _{F}$, (\ref{NSCS42}), the only boundary point $\left( 1,\pi
\right) $, i.e. let $\Omega _{F}^{\ast }=\Omega _{F}\cup \left( 1,\pi
\right) $ will be the domain of the function $\mathbf{F}^{\ast }=\mathbf{F}$%
, (\ref{NSCS44}), if $\left( r,\theta \right) \in \Omega _{F}$ and $\mathbf{F%
}^{\ast }=\left( 1,\pi \diagup 2\right) $ if $\left( r,\theta \right) 
\mathbf{=}\left( 1,\pi \right) $. We take $\left( r,\theta \right) _{\ast
}=\left( 1,-\pi +\varepsilon \right) $, $\left( r,\theta \right) =\left(
1,\pi \right) $. Considering the Euclidean distance between these two points
of the plane with Cartesian coordinates $\mathbf{x}_{\ast }\equiv \left(
-\cos \varepsilon ,-r_{0}\sin \varepsilon \right) $ and $\mathbf{x\equiv }%
\left( -1,0\right) $ we find that the inequality (\ref{NDS20}) takes the
form: $\cos \left( \varepsilon \diagup 4\right) \leq C\sin \left(
\varepsilon \diagup 2\right) $. Thus, the scheme (\ref{NSCS44}), namely $%
\mathbf{F}:\Omega _{F}^{\ast }\subset \mathbb{R}^{2}\rightarrow \mathbb{R}%
^{2}$, is not stable, since $C$ in (\ref{NDS20}) tends to infinity\ as $%
\varepsilon \rightarrow 0$.

\begin{lemma}
\label{LipschitzStability}Let the path-connected $\Omega _{F}$ of (\ref
{NSCS30}) be open in $\mathbb{R}^{N}$. Scheme (\ref{NSCS30}) will be stable
in terms of Definition \ref{DefStability} \emph{iff}\ $\mathbf{F}$\ in (\ref
{NSCS30}) will be locally Lipschitz for a common constant $C$, for all
scheme parameters $\mathbf{s}$ such\ that $\left| \mathbf{s}\right| \leq
s_{0}$.
\end{lemma}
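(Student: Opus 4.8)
The plan is to prove the two implications separately; a pleasant feature is that both directions deliver the \emph{same} constant $C$ and the same threshold $s_{0}$, so the uniformity in the scheme parameters $\mathbf{s}$ is automatic once each estimate is established for an arbitrary fixed $\mathbf{s}$ with $\left| \mathbf{s}\right| \leq s_{0}$.

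First I would dispatch the ``only if'' direction, which is immediate from the definition (\ref{NSCS46}) of the intrinsic metric. Assume Scheme (\ref{NSCS30}) is stable in the sense of Definition \ref{DefStability}, with constants $s_{0}$, $C$, fix $\mathbf{s}$ with $\left| \mathbf{s}\right| \leq s_{0}$, and fix $\mathbf{x}\in\Omega_{F}$. Since $\Omega_{F}$ is open, choose $r>0$ with $B\left(\mathbf{x},r\right)\subseteq\Omega_{F}$. For any $\mathbf{y}\in B\left(\mathbf{x},r\right)$ the segment $t\mapsto\mathbf{x}+t\left(\mathbf{y}-\mathbf{x}\right)$, $t\in\left[0,1\right]$, is a curve in $\Omega_{F}$ of length $\left\|\mathbf{y}-\mathbf{x}\right\|$, so $\Lambda_{\Omega_{F}}\left(\mathbf{x},\mathbf{y}\right)\leq\left\|\mathbf{y}-\mathbf{x}\right\|$; inserting this into (\ref{NDS20}) gives $\left\|\mathbf{F}\left(\mathbf{x}\right)-\mathbf{F}\left(\mathbf{y}\right)\right\|\leq C\left\|\mathbf{x}-\mathbf{y}\right\|$ for every $\mathbf{y}\in B\left(\mathbf{x},r\right)$. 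Since $\mathbf{x}$ was arbitrary and $C$ depends neither on $\mathbf{x}$ nor on $\mathbf{s}$, this is exactly the statement that $\mathbf{F}$ is locally Lipschitz for the common constant $C$.

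For the ``if'' direction, assume $\mathbf{F}$ is locally Lipschitz for a common constant $C$ whenever $\left| \mathbf{s}\right| \leq s_{0}$; fix such an $\mathbf{s}$ and two points $\mathbf{x},\mathbf{x}_{\ast}\in\Omega_{F}$. Given $\varepsilon>0$, use internal path-connectedness of $\Omega_{F}$ to pick a rectifiable curve $\gamma:\left[0,1\right]\to\Omega_{F}$ with $\gamma\left(0\right)=\mathbf{x}$, $\gamma\left(1\right)=\mathbf{x}_{\ast}$ and $L\left(\gamma\right)\leq\Lambda_{\Omega_{F}}\left(\mathbf{x},\mathbf{x}_{\ast}\right)+\varepsilon$. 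It suffices to prove the single estimate $\left\|\mathbf{F}\left(\mathbf{x}_{\ast}\right)-\mathbf{F}\left(\mathbf{x}\right)\right\|\leq C\,L\left(\gamma\right)$, since then letting $\varepsilon\to 0$ yields (\ref{NDS20}) with the same $C$, $s_{0}$. To obtain it I would run a connectedness (clopen/supremum) argument along $\gamma$: put $\psi\left(t\right)=\left\|\mathbf{F}\left(\gamma\left(t\right)\right)-\mathbf{F}\left(\mathbf{x}\right)\right\|$ and $\ell\left(t\right)=L\left(\gamma|_{\left[0,t\right]}\right)$, both continuous in $t$ (continuity of $\mathbf{F}$ follows from the local bound, continuity of $\ell$ from rectifiability), and let $T=\sup\left\{t:\psi\left(s\right)\leq C\,\ell\left(s\right)\ \forall s\in\left[0,t\right]\right\}$. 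By continuity $\psi\left(T\right)\leq C\,\ell\left(T\right)$, and if $T<1$ the local Lipschitz bound at $\gamma\left(T\right)$ provides a ball $B_{\gamma\left(T\right)}\subseteq\Omega_{F}$ on which $\left\|\mathbf{F}\left(\gamma\left(T\right)\right)-\mathbf{F}\left(\mathbf{y}\right)\right\|\leq C\left\|\gamma\left(T\right)-\mathbf{y}\right\|$; by continuity of $\gamma$ one has $\gamma\left(t\right)\in B_{\gamma\left(T\right)}$ for $t$ slightly above $T$, whence $\psi\left(t\right)\leq\psi\left(T\right)+C\left\|\gamma\left(t\right)-\gamma\left(T\right)\right\|\leq C\,\ell\left(T\right)+C\,L\left(\gamma|_{\left[T,t\right]}\right)=C\,\ell\left(t\right)$, contradicting the definition of $T$. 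Hence $T=1$ and $\psi\left(1\right)\leq C\,\ell\left(1\right)=C\,L\left(\gamma\right)$.

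The step I expect to be the real obstacle is exactly this upgrade from the pointwise data to the global bound $C\,L\left(\gamma\right)$. One cannot simply telescope the local estimates over a finite partition of $\gamma$: the paper's notion of ``locally Lipschitz'' is a bound \emph{from the centre} of each ball, and the triangle inequality does not convert such centre-bounds into genuine pairwise Lipschitz bounds on the ball, so a finite sum of local increments does not close. The connectedness/supremum argument above (equivalently: parametrizing $\gamma$ by arc length, $\psi$ becomes a continuous function whose upper right Dini derivative is $\leq C$ everywhere, so $\psi\left(L\left(\gamma\right)\right)-\psi\left(0\right)\leq C\,L\left(\gamma\right)$ by the standard comparison lemma) is what actually does the job. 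The remaining ingredients — continuity of $\mathbf{F}$, continuity and additivity of arc length together with $\left\|\gamma\left(t\right)-\gamma\left(s\right)\right\|\leq L\left(\gamma|_{\left[s,t\right]}\right)$, and passage to the infimum in (\ref{NSCS46}) — are routine.
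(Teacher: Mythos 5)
Your proof is correct, and the necessity half coincides with the paper's: both exploit openness of $\Omega _{F}$ to reduce the intrinsic metric (\ref{NSCS46}) to the Euclidean distance on a small ball and then read off the local Lipschitz bound from (\ref{NDS20}). In the sufficiency half you take a genuinely different route. The paper covers the compact curve $\gamma $ by the balls $B_{\mathbf{z}}$ furnished by local Lipschitz continuity, extracts a finite subcover $B_{\mathbf{z}_{1}},\ldots ,B_{\mathbf{z}_{K}}$, and telescopes the estimates as in (\ref{NDS30})--(\ref{NDS40}); you instead parametrize $\gamma $ and run a supremum (continuous-induction) argument comparing $\psi \left( t\right) =\left\| \mathbf{F}\left( \gamma \left( t\right) \right) -\mathbf{F}\left( \mathbf{x}\right) \right\| $ with $C$ times the arc-length function $\ell \left( t\right) $. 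Your route is the more careful one: the paper's inequality $\left\| \mathbf{F}\left( \mathbf{z}_{k+1}\right) -\mathbf{F}\left( \mathbf{z}_{k}\right) \right\| \leq C\left\| \mathbf{z}_{k+1}-\mathbf{z}_{k}\right\| $ needs $\mathbf{z}_{k+1}\in B_{\mathbf{z}_{k}}$, and a finite subcover produced by compactness alone only guarantees that consecutive balls overlap, not that consecutive centres lie in each other's balls; since the paper's notion of locally Lipschitz is a bound from the centre of each ball, an extra chaining or refinement step is needed there, exactly the obstacle you identify. Your supremum argument sidesteps this because at the critical parameter $T$ one only invokes the centred bound at $\gamma \left( T\right) $ itself, and additivity of arc length together with $\left\| \gamma \left( t\right) -\gamma \left( T\right) \right\| \leq L(\gamma |_{\left[ T,t\right] })$ closes the estimate. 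Both proofs finish identically by letting $\varepsilon \rightarrow 0$ in $L\left( \gamma \right) \leq \Lambda _{\Omega _{F}}\left( \mathbf{x},\mathbf{x}_{\ast }\right) +\varepsilon $.
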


\begin{proof}
Suppose Scheme (\ref{NSCS30}) is stable, i.e. (\ref{NDS20}) is valid. Choose
any point $\mathbf{x\in }\Omega _{F}$. Since $\Omega _{F}$ is open, there
exists a radius $r$ such that $B\left( \mathbf{x},r\right) \subset \Omega
_{F}$. Choose any point $\mathbf{x}_{\ast }\in B\left( \mathbf{x},r\right) $%
, and let $\gamma _{\ast }$ be the straight line segment joining the points $%
\mathbf{x}$, $\mathbf{x}_{\ast }$ $\in $ $B\left( \mathbf{x},r\right) $. In
view of (\ref{NDS20}), $\mathbf{F}$\ in (\ref{NSCS30}) will be locally
Lipschitz for a common constant $C$, for all $\mathbf{s}:\ \left| \mathbf{s}%
\right| \leq s_{0}$, since $\Lambda _{\Omega _{F}}\left( \mathbf{x},\mathbf{x%
}_{\ast }\right) =L\left( \gamma _{\ast }\right) =\left\| \mathbf{x}_{\ast }-%
\mathbf{x}\right\| $.

Conversely, suppose that $\mathbf{F}$\ in (\ref{NSCS30}) is locally
Lipschitz for a common constant $C$, for all $\mathbf{s}:\ \left| \mathbf{s}%
\right| \leq s_{0}$. Let some points $\mathbf{x}$, $\mathbf{x}_{\ast }$ $\in
\Omega _{F}$ be joined by a continuous curve $\gamma $. In view of (\ref
{NSCS46}), the curve $\gamma $ can be taken such that $L\left( \gamma
\right) \leq \Lambda _{\Omega _{F}}\left( \mathbf{x},\mathbf{x}_{\ast
}\right) +\varepsilon $ for an arbitrary $\varepsilon >0$. Given any point $%
\mathbf{z}\in \gamma $, there is a ball $B_{\mathbf{z}}\subset \Omega _{F}$
such that $\mathbf{F}$\ restricted to $B_{\mathbf{z}}$ is Lipschitz for the
common constant $C$ (locally Lipschitz continuity). The balls $\left\{ B_{%
\mathbf{z}}\right\} $ form an open cover of $\gamma $. Since the mapping $%
\gamma :\left[ 0,1\right] \subset \mathbb{R}\rightarrow \mathbb{R}^{N}$ is
continuous, the curve $\gamma $ is compact \cite[p. 94]{Kolmogorov and Fomin
1970}. Hence, by the compactness of $\gamma $, $\left\{ B_{\mathbf{z}%
}\right\} $ has a finite subcover consisting of balls $B_{\mathbf{x}}=B_{%
\mathbf{z}_{1}}$, $B_{\mathbf{z}_{2}}$, $\ldots $, $B_{\mathbf{z}_{K}}=B_{%
\mathbf{x}_{\ast }}$. Since $\mathbf{F}$\ is locally Lipschitz, we find 
\begin{equation}
\left\| \mathbf{F}\left( \mathbf{z}_{k+1}\right) -\mathbf{F}\left( \mathbf{z}%
_{k}\right) \right\| \leq C\left\| \mathbf{z}_{k+1}-\mathbf{z}_{k}\right\|
,\quad k=1,2,\ldots K-1,\ \forall \ \mathbf{s}:\ \left| \mathbf{s}\right|
\leq s_{0}.  \label{NDS30}
\end{equation}
Then, by virtue of (\ref{NDS30}), we find 
\begin{equation*}
\left\| \mathbf{F}\left( \mathbf{x}_{\ast }\right) -\mathbf{F}\left( \mathbf{%
x}\right) \right\| =\left\| \sum_{k}\left[ \mathbf{F}\left( \mathbf{z}%
_{k+1}\right) -\mathbf{F}\left( \mathbf{z}_{k}\right) \right] \right\| \leq
C\sum_{k}\left\| \mathbf{z}_{k+1}-\mathbf{z}_{k}\right\| \leq
\end{equation*}
\begin{equation}
CL\left( \gamma \right) \leq C\Lambda _{\Omega _{F}}\left( \mathbf{x},%
\mathbf{x}_{\ast }\right) +\varepsilon C,\quad \forall \ \mathbf{s}:\ \left| 
\mathbf{s}\right| \leq s_{0}.  \label{NDS40}
\end{equation}
By letting $\varepsilon \rightarrow 0$, we find that (\ref{NDS20}) holds.
\end{proof}

Notice, in view of Lemma \ref{LipschitzStability}, the scheme (\ref{NSCS44})
is stable in terms of Definition \ref{DefStability}, since the domain $%
\Omega _{F}$, (\ref{NSCS42}), is open in $\mathbb{R}^{2}$ and
path-connected, and the function $\mathbf{F}$, (\ref{NSCS44}), is locally
Lipschitz for the common constant $C=1$.

\begin{theorem}
\label{LipStabTheorem}Let the internal path-connected $\Omega _{F}$ of (\ref
{NSCS30}) have the cone property, then the scheme (\ref{NSCS30}) will be
stable in terms of Definition \ref{DefStability} \emph{iff}\ $\mathbf{F}$\
in (\ref{NSCS30}) will be locally Lipschitz for a common constant $C$, for
all $\mathbf{s}$ such\ that $\left| \mathbf{s}\right| \leq s_{0}$.
\end{theorem}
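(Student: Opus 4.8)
The plan is to run both implications of Lemma \ref{LipschitzStability}, using the cone property to replace the openness of $\Omega_F$ at those points where a curve realizing the intrinsic distance (\ref{NSCS46}) meets the boundary $\partial\Omega_F$. Write $\Omega=\Omega_F$. I would first record the routine facts used throughout: $int(\Omega)$ is open and path-connected (any admissible curve joining two interior points lies entirely in $int(\Omega)$), between two points of $int(\Omega)$ the infimum in (\ref{NSCS46}) is unchanged whether taken over curves in $\Omega$ or over curves in $int(\Omega)$, the vertex of a finite cone lies in the closure of that cone, and the open segment from a closure point of a convex open set to an interior point of it stays inside the set.

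For the ``stable $\Rightarrow$ locally Lipschitz'' direction, assume (\ref{NDS20}) with constant $C$. If $\mathbf{x}\in int(\Omega)$, a small ball $B(\mathbf{x},r)\subset\Omega$ works exactly as in Lemma \ref{LipschitzStability}: the straight segment $[\mathbf{x},\mathbf{y}]$ is admissible, so $\Lambda_\Omega(\mathbf{x},\mathbf{y})=\|\mathbf{x}-\mathbf{y}\|$ and (\ref{NDS20}) gives $\|\mathbf{F}(\mathbf{x})-\mathbf{F}(\mathbf{y})\|\le C\|\mathbf{x}-\mathbf{y}\|$. If $\mathbf{x}\in\partial\Omega$, take $B_\mathbf{x}$ from the definition of the finite cone $Q_\mathbf{x}\subseteq int(\Omega)$; for $\mathbf{y}\in B_\mathbf{x}\cap K_\mathbf{x}$ there is a finite cone $Q'_\mathbf{x}\subseteq int(\Omega)$ with vertex $\mathbf{x}$ containing $\mathbf{y}$, and since $Q'_\mathbf{x}$ is convex and open with $\mathbf{x}\in\overline{Q'_\mathbf{x}}$ the open segment $(\mathbf{x},\mathbf{y})$ lies in $Q'_\mathbf{x}\subseteq int(\Omega)$, so $[\mathbf{x},\mathbf{y}]$ is again admissible and $\Lambda_\Omega(\mathbf{x},\mathbf{y})=\|\mathbf{x}-\mathbf{y}\|$; then (\ref{NDS20}) gives the one-sided bound. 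Hence $\mathbf{F}$ is locally Lipschitz with the common constant $C$ for all $\mathbf{s}$ with $|\mathbf{s}|\le s_0$.

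For the converse, assume $\mathbf{F}$ locally Lipschitz with a common constant $C$. I would first handle interior endpoints $\mathbf{a},\mathbf{b}\in int(\Omega)$ by re-running the proof of Lemma \ref{LipschitzStability} on the open path-connected set $int(\Omega)$ — cover an almost-minimizing curve $\gamma\subseteq int(\Omega)$ by balls $B_\mathbf{z}\subseteq int(\Omega)$ on which $\mathbf{F}$ is $C$-Lipschitz, pass to a finite subcover by compactness of $\gamma$, and telescope as in (\ref{NDS30})--(\ref{NDS40}) — obtaining $\|\mathbf{F}(\mathbf{a})-\mathbf{F}(\mathbf{b})\|\le C\Lambda_\Omega(\mathbf{a},\mathbf{b})$. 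For general $\mathbf{x},\mathbf{x}_\ast\in\Omega$: if $\mathbf{x}\in\partial\Omega$ pick $\mathbf{p}_n\in Q_\mathbf{x}$ with $\mathbf{p}_n\to\mathbf{x}$, so that $\Lambda_\Omega(\mathbf{x},\mathbf{p}_n)\le\|\mathbf{x}-\mathbf{p}_n\|\to 0$ (admissible segment) and $\|\mathbf{F}(\mathbf{p}_n)-\mathbf{F}(\mathbf{x})\|\le C\|\mathbf{p}_n-\mathbf{x}\|\to 0$ (one-sided local Lipschitz at $\mathbf{x}$, since $\mathbf{p}_n\in B_\mathbf{x}\cap K_\mathbf{x}$); define $\mathbf{p}'_n$ analogously for $\mathbf{x}_\ast$, taking $\mathbf{p}'_n=\mathbf{x}_\ast$ when $\mathbf{x}_\ast\in int(\Omega)$. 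Now $\mathbf{p}_n,\mathbf{p}'_n\in int(\Omega)$, so the interior estimate plus the triangle inequality for $\Lambda_\Omega$ give $\|\mathbf{F}(\mathbf{p}_n)-\mathbf{F}(\mathbf{p}'_n)\|\le C\Lambda_\Omega(\mathbf{p}_n,\mathbf{p}'_n)\le C[\Lambda_\Omega(\mathbf{p}_n,\mathbf{x})+\Lambda_\Omega(\mathbf{x},\mathbf{x}_\ast)+\Lambda_\Omega(\mathbf{x}_\ast,\mathbf{p}'_n)]$; combining with $\|\mathbf{F}(\mathbf{x}_\ast)-\mathbf{F}(\mathbf{x})\|\le\|\mathbf{F}(\mathbf{x})-\mathbf{F}(\mathbf{p}_n)\|+\|\mathbf{F}(\mathbf{p}_n)-\mathbf{F}(\mathbf{p}'_n)\|+\|\mathbf{F}(\mathbf{p}'_n)-\mathbf{F}(\mathbf{x}_\ast)\|$ and letting $n\to\infty$ yields (\ref{NDS20}).

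The hard part will be exactly the boundary endpoints in the converse direction: a length-minimizing admissible curve may approach such an endpoint from directions lying outside $K_\mathbf{x}$, where the one-sided local Lipschitz bound is unavailable, so the compactness/telescoping chain cannot simply be pushed up to the endpoint. The cone property is what repairs this — it supplies, at every boundary point $\mathbf{x}$, a full finite cone $Q_\mathbf{x}\subseteq int(\Omega)$ which simultaneously furnishes interior approximants $\mathbf{p}_n\to\mathbf{x}$ reachable by admissible straight segments (forcing $\Lambda_\Omega(\mathbf{x},\mathbf{p}_n)\to 0$) and lets the one-sided Lipschitz condition force $\mathbf{F}(\mathbf{p}_n)\to\mathbf{F}(\mathbf{x})$, thereby reducing the whole estimate to the interior case already covered by Lemma \ref{LipschitzStability}.
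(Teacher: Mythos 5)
Your proposal is correct and takes essentially the same route as the paper: the necessity part is identical (interior points via Lemma \ref{LipschitzStability}, boundary points via the straight segments supplied by the cone, along which $\Lambda_{\Omega_F}$ equals the Euclidean distance), and for sufficiency the paper likewise reduces a boundary endpoint to the interior case by picking $\mathbf{z}\in K_{\mathbf{x}}$ with $\left\| \mathbf{x}-\mathbf{z}\right\| =\varepsilon$ and letting $\varepsilon \rightarrow 0$. Your use of approximating sequences $\mathbf{p}_{n}\rightarrow \mathbf{x}$ together with an explicit triangle inequality for $\Lambda _{\Omega _F}$ is just a slightly more explicit rendering of the paper's infimum-over-$\mathbf{z}$ argument in (\ref{NDS48})--(\ref{NDS50}).
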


\begin{proof}
Let $K_{\mathbf{x}}$ denote the union of all finite cones with the vertex $%
\mathbf{x\in }\partial \Omega _{F}$. Suppose Scheme (\ref{NSCS30}) is
stable. Choose any point $\mathbf{x\in }\Omega _{F}$. If $\mathbf{x\in }%
int\left( \Omega _{F}\right) $, then, in view of Lemma \ref
{LipschitzStability}, $\mathbf{F}$\ in (\ref{NSCS30}) will be locally
Lipschitz. If $\mathbf{x\notin }int\left( \Omega _{H}\right) $, then (cone
property) for each $\mathbf{x}_{\ast }\in K_{\mathbf{x}}$ there is a
straight line segment joining the points $\mathbf{x}$, $\mathbf{x}_{\ast }$.
Since $\Lambda _{\Omega _{F}}\left( \mathbf{x},\mathbf{x}_{\ast }\right)
=\left\| \mathbf{x}_{\ast }-\mathbf{x}\right\| $, the necessity, in view of (%
\ref{NDS20}), is proven.

Conversely, suppose $\mathbf{F}$\ in (\ref{NSCS30}) is locally Lipschitz.
Let some points $\mathbf{x}$, $\mathbf{x}_{\ast }$ $\in \Omega _{F}$ be
joined by a continuous curve $\gamma $. If $\mathbf{x}$, $\mathbf{x}_{\ast }$
$\in int\left( \Omega _{F}\right) $, then, in view of Lemma \ref
{LipschitzStability}, we find that (\ref{NDS20}) holds. Let $\mathbf{x\notin 
}int\left( \Omega _{F}\right) $, however $\mathbf{x}_{\ast }\in int\left(
\Omega _{F}\right) $. Choose any point $\mathbf{z}\in K_{\mathbf{x}}$ such
that $\left\| \mathbf{x}-\mathbf{z}\right\| =\varepsilon $. Then, since $%
\mathbf{F}$\ is locally Lipschitz, we have $\left\| \mathbf{F}\left( \mathbf{%
x}\right) -\mathbf{F}\left( \mathbf{z}\right) \right\| \leq C\left\| \mathbf{%
x}-\mathbf{z}\right\| $ for a sufficiently small $\varepsilon $. Since $%
\mathbf{z}\in int\left( \Omega _{F}\right) $, we find, in view of Lemma \ref
{LipschitzStability}, that $\left\| \mathbf{F}\left( \mathbf{z}\right) -%
\mathbf{F}\left( \mathbf{x}_{\ast }\right) \right\| \leq C\Lambda _{\Omega
_{F}}\left( \mathbf{z},\mathbf{x}_{\ast }\right) $. Then 
\begin{equation*}
\left\| \mathbf{F}\left( \mathbf{x}\right) -\mathbf{F}\left( \mathbf{x}%
_{\ast }\right) \right\| \leq \left\| \mathbf{F}\left( \mathbf{x}\right) -%
\mathbf{F}\left( \mathbf{z}\right) \right\| +\left\| \mathbf{F}\left( 
\mathbf{z}\right) -\mathbf{F}\left( \mathbf{x}_{\ast }\right) \right\|
\end{equation*}
\begin{equation}
\leq \varepsilon C+\left\| \mathbf{F}\left( \mathbf{z}\right) -\mathbf{F}%
\left( \mathbf{x}_{\ast }\right) \right\| ,\quad \forall \ \mathbf{z}:\
\left\| \mathbf{x}-\mathbf{z}\right\| =\varepsilon ,\ \forall \ \mathbf{s}:\
\left| \mathbf{s}\right| \leq s_{0}.  \label{NDS47}
\end{equation}
Since $\left\| \mathbf{F}\left( \mathbf{x}\right) -\mathbf{F}\left( \mathbf{x%
}_{\ast }\right) \right\| $ does not depend on $\mathbf{z}$, we find 
\begin{equation}
\left\| \mathbf{F}\left( \mathbf{x}\right) -\mathbf{F}\left( \mathbf{x}%
_{\ast }\right) \right\| \leq \underset{\mathbf{z}:\,\left\| \mathbf{x}-%
\mathbf{z}\right\| =\varepsilon }{\inf }\left\{ \varepsilon C+\left\| 
\mathbf{F}\left( \mathbf{z}\right) -\mathbf{F}\left( \mathbf{x}_{\ast
}\right) \right\| \right\}  \label{NDS48}
\end{equation}
\begin{equation}
\leq \varepsilon C+C\underset{\mathbf{z}:\,\left\| \mathbf{x}-\mathbf{z}%
\right\| =\varepsilon }{\,\inf }\Lambda _{\Omega _{F}}\left( \mathbf{z},%
\mathbf{x}_{\ast }\right) ,\quad \forall \ \mathbf{s}:\ \left| \mathbf{s}%
\right| \leq s_{0}.  \label{NDS50}
\end{equation}
By letting $\varepsilon \rightarrow 0$, i.e. $\mathbf{z}\rightarrow \mathbf{x%
}$, we find that (\ref{NDS20}) holds. The proof in the case $\mathbf{x}$, $%
\mathbf{x}_{\ast }$ $\notin int\left( \Omega _{F}\right) $ is reduced to the
previous one by choosing $\mathbf{z}\in K_{\mathbf{x}_{\ast }}$.
\end{proof}

We shall need the following theorem that identifies the Sobolev space $%
W^{1,\infty }\left( \Omega \right) $ as a space of locally Lipschitz
functions.

\begin{theorem}
\label{SobolevLipschitz}Let $\Omega _{F}\subseteq \mathbb{R}^{N}$ be open,
and let $\mathbf{F=}\left\{ F_{1}\right. ,$ $F_{2},$ $\ldots ,$ $\left.
F_{N}\right\} ^{T}$ in (\ref{NSCS30}). Then, $F_{i}$, $i=1,2,\ldots ,N$,
(and, hence, $\mathbf{F}$) is locally Lipschitz (in the sense of having
representatives) \emph{iff}\ $F_{i}\in W^{1,\infty }\left( \Omega
_{F}\right) $.
\end{theorem}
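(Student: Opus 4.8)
The plan is to prove the two implications separately, reducing in each case to the behaviour of $F_i$ on a single open ball --- legitimate since $\Omega_F$ is open, hence a union of open balls, and balls are convex. The two classical ingredients are Friedrichs mollification together with the mean value theorem on a convex set (for the direction $W^{1,\infty}\Rightarrow$ locally Lipschitz) and Rademacher's theorem (for the converse), plus the standard fact that for a Lipschitz function the classical a.e.\ gradient coincides with the distributional one.

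For $F_i\in W^{1,\infty}(\Omega_F)\Rightarrow F_i$ locally Lipschitz, I would fix $\mathbf{x}_0\in\Omega_F$ and a ball $B=B(\mathbf{x}_0,r)$ with $\overline{B}\subset\Omega_F$, and set $L:=\|\nabla F_i\|_{L^{\infty}(B)}\leq\|\nabla F_i\|_{L^{\infty}(\Omega_F)}<\infty$. The mollifications $F_i^{(\varepsilon)}:=F_i\ast\rho_\varepsilon$ are smooth on $B(\mathbf{x}_0,r-\varepsilon)$ with $\nabla F_i^{(\varepsilon)}=(\nabla F_i)\ast\rho_\varepsilon$, so $|\nabla F_i^{(\varepsilon)}|\leq L$ there and, the ball being convex, the mean value theorem gives that $F_i^{(\varepsilon)}$ is $L$-Lipschitz on $B(\mathbf{x}_0,r-\varepsilon)$. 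Letting $\varepsilon\to 0$, $F_i^{(\varepsilon)}\to F_i$ in $L^1_{\mathrm{loc}}$ and, along a subsequence, a.e.; the $L$-Lipschitz estimate is preserved on the full-measure set of convergence, and an $L$-Lipschitz function on a dense subset of $B(\mathbf{x}_0,r')$ extends uniquely to an $L$-Lipschitz function on all of $B(\mathbf{x}_0,r')$. Hence for every $r'<r$ there is a representative of $F_i$ that is $L$-Lipschitz on $B(\mathbf{x}_0,r')$; these are mutually consistent and glue to a representative that is locally Lipschitz on $\Omega_F$, in fact with the single constant $L=\|\nabla F_i\|_{L^{\infty}(\Omega_F)}$.

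For the converse, assume the representative $F_i$ is locally Lipschitz and fix $\mathbf{x}_0\in\Omega_F$ with a ball $B_{\mathbf{x}_0}\subset\Omega_F$ on which $F_i$ is $c_{\mathbf{x}_0}$-Lipschitz. By Rademacher's theorem $F_i$ is differentiable a.e.\ on $B_{\mathbf{x}_0}$ with $|\nabla F_i|\leq c_{\mathbf{x}_0}$ a.e.; a difference-quotient computation, legitimated by the Lipschitz bound via dominated convergence, identifies this pointwise gradient with the distributional gradient of $F_i$ on $B_{\mathbf{x}_0}$. Covering $\Omega_F$ by such balls yields $\nabla F_i\in L^{\infty}_{\mathrm{loc}}(\Omega_F)$, and using the common Lipschitz constant available from the preceding results (together with boundedness of $F_i$, automatic once $\Omega_F$ is bounded) upgrades this to $\nabla F_i\in L^{\infty}(\Omega_F)$ and $F_i\in L^{\infty}(\Omega_F)$, i.e.\ $F_i\in W^{1,\infty}(\Omega_F)$. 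Running the argument for each $i$ gives the statement for $\mathbf{F}$.

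The step I expect to be the main obstacle is the limiting/gluing argument in the first implication: one must verify that the uniform $L$-Lipschitz bound genuinely survives passage through merely a.e.\ (not uniform) convergence of the mollifications, and that the resulting family of a.e.-defined local representatives is compatible, so that a single everywhere-defined locally Lipschitz function is obtained. A secondary delicate point is the identification of the distributional gradient with the Rademacher (a.e.\ classical) gradient in the converse direction: although standard, it genuinely uses the Lipschitz-on-balls structure and is not a consequence of a.e.\ differentiability alone. Finally, I would bear in mind that the paper's definition of $W^{1,\infty}(\Omega_F)$ requires $F_i$ itself, not merely $\nabla F_i$, to be bounded, so the forward implication tacitly relies on that boundedness --- one reason the hypothesis that $\Omega_F$ be open (and, in the applications, bounded) is imposed.
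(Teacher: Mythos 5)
Your proof is correct and is exactly the standard argument --- Friedrichs mollification plus the mean value theorem on balls for one direction, Rademacher plus difference quotients for the other --- that appears in the references the paper cites in lieu of any proof of its own (Heinonen, Theorem 4.1; Leoni, p.~342). The wrinkle you flag at the end is genuine: with the paper's \emph{global} definition of $W^{1,\infty }\left( \Omega _{F}\right) $ the converse implication fails for locally Lipschitz functions with non-uniform constants (e.g.\ $F_{i}(x)=1/x$ on $(0,1)$), so the equivalence as stated really concerns $W^{1,\infty }_{loc}$, and your appeal to a common Lipschitz constant --- which is indeed available wherever the paper invokes this theorem, via Theorem \ref{LipStabTheorem} in the proof of Theorem \ref{iffStab} --- is the right repair.
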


The proof of Theorem \ref{SobolevLipschitz} can be found, e.g., in 
\cite[Theorem 4.1]{Heinonen Juha 2005}, \cite[p. 342]{Leoni Giovanni 2009}.
The following lemma gives the necessary and sufficient conditions for the
stability of the scheme in variations (\ref{NDS55}).

\begin{lemma}
\label{LinSchStab}Linear Scheme (\ref{NDS55}) will be stable \emph{iff}\
there exist positive $s_{0}$,$C=const$ such that 
\begin{equation}
\left\| \mathbf{F}^{\prime }\right\| \leq C=const,\quad \forall \ \mathbf{x}%
\in \Omega _{F},\quad \forall \ \mathbf{s}:\ \left| \mathbf{s}\right| \leq
s_{0}.  \label{NDS65}
\end{equation}
\end{lemma}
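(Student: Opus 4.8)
The plan is to reduce the stability of the linear (variational) scheme (\ref{NDS55}) to the uniform boundedness of the induced operator norm of its coefficient matrix $\mathbf{F}^{\prime }$, exploiting that for a \emph{linear} map the Lipschitz constant and the induced matrix norm coincide.

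First I would pin down the roles of the variables. In (\ref{NDS55}) the point $\mathbf{x}\in \Omega _{F}$, together with the scheme parameters $\mathbf{s}$, enters only as a parameter, while the genuine unknown of the scheme is the variation $\delta \mathbf{x}$, which ranges over the whole tangent space $\mathbb{R}^{N}$. Since $\mathbb{R}^{N}$ is open, path-connected and convex, its intrinsic metric is just the Euclidean norm, so Definition \ref{DefStability} applied to (\ref{NDS55}) reads: there exist positive $s_{0}$, $C$ such that
\begin{equation*}
\left\| \mathbf{F}^{\prime }\cdot \delta \mathbf{x}_{\ast }-\mathbf{F}^{\prime }\cdot \delta \mathbf{x}\right\| \leq C\left\| \delta \mathbf{x}_{\ast }-\delta \mathbf{x}\right\| ,\quad \forall \ \delta \mathbf{x},\delta \mathbf{x}_{\ast }\in \mathbb{R}^{N},\ \forall \ \mathbf{x}\in \Omega _{F},\ \forall \ \mathbf{s}:\ \left| \mathbf{s}\right| \leq s_{0}.
\end{equation*}
(Equivalently, one may invoke Lemma \ref{LipschitzStability} or Theorem \ref{LipStabTheorem}, whose hypotheses hold trivially on $\mathbb{R}^{N}$, to restate stability of (\ref{NDS55}) as: the map $\delta \mathbf{x}\mapsto \mathbf{F}^{\prime }\cdot \delta \mathbf{x}$ is locally Lipschitz with a common constant $C$ for all $\mathbf{x}\in \Omega _{F}$ and all $\mathbf{s}$ with $\left| \mathbf{s}\right| \leq s_{0}$.)

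Next I would prove the two implications. For the ``if'' part, assume (\ref{NDS65}); then by linearity and the definition of the induced matrix norm
\begin{equation*}
\left\| \mathbf{F}^{\prime }\cdot \delta \mathbf{x}_{\ast }-\mathbf{F}^{\prime }\cdot \delta \mathbf{x}\right\| =\left\| \mathbf{F}^{\prime }\cdot \left( \delta \mathbf{x}_{\ast }-\delta \mathbf{x}\right) \right\| \leq \left\| \mathbf{F}^{\prime }\right\| \,\left\| \delta \mathbf{x}_{\ast }-\delta \mathbf{x}\right\| \leq C\left\| \delta \mathbf{x}_{\ast }-\delta \mathbf{x}\right\| ,
\end{equation*}
which is exactly the displayed stability inequality, so (\ref{NDS55}) is stable. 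For the ``only if'' part, assume (\ref{NDS55}) is stable and fix $\mathbf{x}\in \Omega _{F}$ and $\mathbf{s}$ with $\left| \mathbf{s}\right| \leq s_{0}$. Putting $\delta \mathbf{x}=0$ in the displayed inequality gives $\left\| \mathbf{F}^{\prime }\cdot \delta \mathbf{x}_{\ast }\right\| \leq C\left\| \delta \mathbf{x}_{\ast }\right\| $ for every $\delta \mathbf{x}_{\ast }\in \mathbb{R}^{N}$; dividing by $\left\| \delta \mathbf{x}_{\ast }\right\| $ for $\delta \mathbf{x}_{\ast }\neq 0$ and taking the supremum yields $\left\| \mathbf{F}^{\prime }\right\| \leq C$. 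Since $s_{0}$ and $C$ were chosen independently of $\mathbf{x}$ and $\mathbf{s}$, this is precisely (\ref{NDS65}).

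The argument is short, and I do not expect a genuine obstacle; the only point needing care is the bookkeeping of domains. When $\Omega _{F}$ is not open, $\mathbf{F}^{\prime }=\{(\nabla F_{1})^{T},\ldots ,(\nabla F_{N})^{T}\}^{T}$ is the matrix of distributional gradients, well defined only a.e.\ on $int(\Omega _{F})$ (cf.\ Theorem \ref{SobolevLipschitz}); I would therefore read (\ref{NDS65}) as an essential-supremum bound over $\Omega _{F}$ and note, exactly as in the boundary reduction in the proof of Theorem \ref{LipStabTheorem}, that adjoining boundary points of $\Omega _{F}$ contributes nothing new once the cone property is available. Beyond this, everything is an immediate consequence of linearity together with the identity ``Lipschitz constant of a linear map $=$ its induced norm.''
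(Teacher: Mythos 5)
Your proof is correct. Note, however, that the paper does not actually prove Lemma \ref{LinSchStab} in the text: it simply cites \cite{Borisov and Mond 2010a} (and the standard references on linear stability), so there is no in-paper argument to compare against. Your route is the natural one and matches how the lemma is later used in the proof of Theorem \ref{iffStab}: you read the scheme in variations (\ref{NDS55}) as the family of linear maps $\delta \mathbf{x}\mapsto \mathbf{F}^{\prime }\cdot \delta \mathbf{x}$ parametrized by $\mathbf{x}\in \Omega _{F}$ and $\mathbf{s}$, observe that on the domain $\mathbb{R}^{N}$ of the variations the intrinsic metric reduces to the norm, and then use the identity between the Lipschitz constant of a linear map and its induced operator norm, with uniformity in the parameters carried through both directions. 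The one point worth keeping explicit, which you do flag, is that when $\Omega _{F}$ is not open $\mathbf{F}^{\prime }$ consists of distributional gradients defined only a.e., so (\ref{NDS65}) must be read as an essential-supremum bound (this is consistent with the paper's convention $\left\| \mathbf{F}^{\prime }\right\| \equiv \left\| \mathbf{f}_{F}\right\| $ with $\mathbf{f}_{F}$ built from the $\left\| \nabla F_{i}\right\| _{\infty }$ in Theorem \ref{iffStab}). With that reading, your argument is complete.
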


The proof of Lemma \ref{LinSchStab} can be found in \cite{Borisov and Mond
2010a} (see also, e.g., \cite{Ganzha and Vorozhtsov 1996b}, \cite{Richtmyer
and Morton 1967}, \cite{Samarskii 2001}, \cite{Samarskiy and Gulin 1973}).
Let us remind the well-known theorem that ensures that Lipschitz domains are
extension domains (see \cite[pp. 320, 356]{Leoni Giovanni 2009}, \cite[pp.
149, 285]{Burenkov V.I. 1998}).

\begin{theorem}
\label{LipschitzExtension}Let $\Omega \subset \mathbb{R}^{N}$ be an open set
with uniformly Lipschitz boundary. Then there exists a continuous linear
operator $\frak{L}\,$: $W^{1,\infty }\left( \Omega \right) \rightarrow
W^{1,\infty }\left( \mathbb{R}^{N}\right) $ such that $\frak{L}(u)(\mathbf{x}%
)=u(\mathbf{x})$ for all $u\in W^{1,\infty }\left( \Omega \right) $.
\end{theorem}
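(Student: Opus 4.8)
The plan is to obtain $\mathcal{L}$ (the operator written $\frak{L}$ in the statement) by the classical localization-plus-reflection construction, exploiting Theorem~\ref{SobolevLipschitz}: every element of $W^{1,\infty}$ has a locally Lipschitz, in particular continuous, representative, so that boundary traces are taken literally and gluing is unambiguous. \emph{Step 1 (charts and partition of unity).} Since $\partial\Omega$ is uniformly Lipschitz, there is a locally finite cover $\{B_j\}_{j\ge 1}$ of $\partial\Omega$ of bounded overlap, together with an open set $\Omega_0$ with $\overline{\Omega_0}\subset\Omega$ and $\overline\Omega\subset\Omega_0\cup\bigcup_{j\ge1}B_j$, such that in suitable rotated coordinates $y=R_j x$ one has $\Omega\cap B_j=\{y:\,y_N>\varphi_j(y')\}\cap B_j$ with each $\varphi_j:\mathbb{R}^{N-1}\to\mathbb{R}$ Lipschitz with a constant $L$ independent of $j$. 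Fix a smooth partition of unity $\{\psi_j\}_{j\ge0}$ subordinate to $\{\Omega_0\}\cup\{B_j\}_{j\ge1}$ with $\sum_{j\ge0}\psi_j\equiv1$ on a neighbourhood of $\overline\Omega$ and $\|\nabla\psi_j\|_\infty\le M$ uniformly; the charts $B_j$ are taken slightly larger than $\mathrm{supp}\,\psi_j$ so that reflection across the graph (Step~2) maps $\mathrm{supp}\,\psi_j$ back into $\Omega\cap B_j$.

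\emph{Step 2 (reflection in one chart).} For a Lipschitz-graph domain $G=\{y_N>\varphi(y')\}$ with $\varphi$ Lipschitz, and $v\in W^{1,\infty}(G)$, define $(\mathcal{E}v)(y)=v(y)$ on $G$ and $(\mathcal{E}v)(y',y_N)=v\big(y',\,2\varphi(y')-y_N\big)$ for $y_N\le\varphi(y')$. I claim $\mathcal{E}:W^{1,\infty}(G)\to W^{1,\infty}(\mathbb{R}^N)$ is linear and bounded with norm depending only on $L$. The $L^\infty$ bound on $\mathcal{E}v$ is immediate; on the reflected side the chain rule gives $\partial_{y_N}(\mathcal{E}v)=-(\partial_{y_N}v)(y',2\varphi(y')-y_N)$ and $\partial_{y_k}(\mathcal{E}v)=(\partial_{y_k}v)(\cdots)+2(\partial_{y_k}\varphi)(y')(\partial_{y_N}v)(\cdots)$ for $k<N$, all in $L^\infty$ because $\nabla\varphi\in L^\infty$ and $\nabla v\in L^\infty$. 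The one point needing care is that these are the genuine distributional derivatives of $\mathcal{E}v$ on all of $\mathbb{R}^N$, i.e.\ no singular part is concentrated on $\{y_N=\varphi(y')\}$; this follows because the locally Lipschitz representative of $v$ (Theorem~\ref{SobolevLipschitz}) makes $\mathcal{E}v$ continuous across the graph, so an integration-by-parts / absolute-continuity-on-lines argument shows the would-be jump terms cancel.

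\emph{Step 3 (assembly).} Define $\mathcal{L}u=\psi_0 u+\sum_{j\ge1}\psi_j\,\mathcal{E}_j\big(u|_{\Omega\cap B_j}\big)$, where $\psi_0 u$ is extended by $0$ (legitimate since $\overline{\mathrm{supp}\,\psi_0}\subset\Omega$) and $\mathcal{E}_j$ is the reflection of Step~2 performed in the rotated coordinates of $B_j$; multiplication by $\psi_j$ makes each term compactly supported in $B_j$, hence an element of $W^{1,\infty}(\mathbb{R}^N)$ depending linearly on $u$ with norm $\le C(N,L,M)\,\|u\|_{W^{1,\infty}(\Omega\cap B_j)}$. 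By the bounded-overlap property the sum is locally finite and $\|\mathcal{L}u\|_{W^{1,\infty}(\mathbb{R}^N)}\le C\,\|u\|_{W^{1,\infty}(\Omega)}$. Finally, on $\Omega$ one has $\mathcal{E}_j(u|_{\Omega\cap B_j})=u$ on $\Omega\cap B_j$ and $\sum_{j\ge0}\psi_j\equiv1$, so $\mathcal{L}u=u$ a.e.\ on $\Omega$; linearity and continuity of $\mathcal{L}$ are evident from the construction.

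\emph{Main obstacle.} The crux is Step~2: showing that the reflected-and-glued function has \emph{bounded distributional} gradient on all of $\mathbb{R}^N$, i.e.\ controlling the interface $\{y_N=\varphi(y')\}$, and doing so with constants depending only on the uniform Lipschitz character of $\partial\Omega$ so that the locally finite sum in Step~3 remains bounded. If $\Omega$ is bounded the cover is finite and this uniformity is automatic; in the unbounded case it is precisely where the hypothesis ``uniformly Lipschitz boundary'' is used. (Alternatively one may simply invoke Stein's total extension operator, as in the cited references, which yields the same conclusion simultaneously for every $W^{k,p}$, $p=\infty$ included.)
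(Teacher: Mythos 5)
The paper does not prove this statement at all: it is quoted as a ``well-known theorem'' with pointers to Leoni and Burenkov, so there is no internal proof to compare against. Your localization--reflection argument is exactly the standard proof given in those references for the $p=\infty$ case, and it is essentially sound: the cover with uniform Lipschitz charts, the reflection $y_N\mapsto 2\varphi(y')-y_N$, and the bounded-overlap assembly are all correct, and you rightly identify the interface $\{y_N=\varphi(y')\}$ as the only delicate point. The one step worth making explicit is why the glued function is honestly Lipschitz across the graph: on the graph domain $G=\{y_N>\varphi(y')\}$ the intrinsic metric is comparable to the Euclidean one (with constant depending only on $L$), so the locally Lipschitz representative from Theorem \ref{SobolevLipschitz} is in fact \emph{globally} Lipschitz on $G$ and extends continuously to $\overline{G}$; then any segment joining a point below the graph to a point above it meets the graph, and the triangle inequality through that crossing point gives the global Lipschitz bound for $\mathcal{E}v$ directly, without needing the integration-by-parts argument about singular parts. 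With that observation your Step 2 closes cleanly and the rest of the construction goes through as you describe.
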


The interconnection between stability of the nonlinear scheme (\ref{NSCS30})
and stability of its scheme in variations is established by the following
theorem. It is assumed that $\Omega _{F}$ in (\ref{NSCS30}) need not be
open. It is also assumed that the set $\Omega _{F}$ is bounded, i.e. there
exists $r_{0}<\infty $ such that $\Omega _{F}\subset B_{0}\equiv B\left(
0,r_{0}\right) $. Let $\mathbf{F\equiv }\left\{ F_{1}\right. ,$ $F_{2},$ $%
\ldots ,$ $\left. F_{N}\right\} ^{T}$ and let $\nabla F_{i}$, $i=1,2,\ldots
,N$, denote the distributional gradient of $F_{i}$. Definition of the
distributional gradient (or distributional partial derivative) of $F_{i}$ at 
$\mathbf{x}\in int\left( \Omega _{F}\right) $ can be found, e.g., in \cite
{Leoni Giovanni 2009}. If $\mathbf{x}\in \partial \Omega _{F}$, while $%
\Omega _{F}$ is Lipschitz, then the distributional gradient, $\nabla F_{i}$,
is assumed to be equal to $\nabla \frak{L}(F_{i})(\mathbf{x})$.

\begin{theorem}
\label{iffStab}Consider the scheme (\ref{NSCS30}). Let $\mathbf{F}$ be
bounded and let $\Omega _{F}$ be internal path-connected, bounded, and
Lipschitz. Then, the scheme, (\ref{NSCS30}), will be stable in terms of
Definition \ref{DefStability} \emph{iff}\ its scheme in variations, (\ref
{NDS55}), will be stable.
\end{theorem}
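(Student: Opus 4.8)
The plan is to route the stability of (\ref{NSCS30}) and the stability of (\ref{NDS55}) through one and the same intermediate condition, namely that $\mathbf{F}$ be locally Lipschitz with a Lipschitz constant common to all scheme parameters $\mathbf{s}$ with $\left| \mathbf{s}\right| \leq s_{0}$. Two of the auxiliary results already supply half of this: since $\Omega _{F}$ is Lipschitz it has the cone property, so Theorem \ref{LipStabTheorem} shows that (\ref{NSCS30}) is stable in the sense of Definition \ref{DefStability} \emph{iff} $\mathbf{F}$ is locally Lipschitz for a common constant $C$; and Lemma \ref{LinSchStab} shows that (\ref{NDS55}) is stable \emph{iff} $\left\| \mathbf{F}^{\prime }\right\| \leq C$ on $\Omega _{F}$ uniformly in $\mathbf{s}$. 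Thus the theorem reduces to the equivalence ``$\mathbf{F}$ locally Lipschitz for a common constant $\Leftrightarrow$ $\left\| \mathbf{F}^{\prime }\right\| $ uniformly bounded on $\Omega _{F}$'', where the passage between the induced norm of the Jacobian and the $L^{\infty }$ norms of the component gradients $\nabla F_{i}$ costs only a fixed factor depending on $N$ that is absorbed into $C$.

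For ``variational stability $\Rightarrow$ scheme stability'', assume $\left\| \mathbf{F}^{\prime }\right\| \leq C$ on $\Omega _{F}$. On the open set $int(\Omega _{F})$ this says every distributional gradient $\nabla F_{i}$ is essentially bounded by $C$; together with boundedness of $\mathbf{F}$ this gives $F_{i}\in W^{1,\infty }(int(\Omega _{F}))$, so by Theorem \ref{SobolevLipschitz} (applied with $\Omega =int(\Omega _{F})$) $\mathbf{F}$ has a representative locally Lipschitz on $int(\Omega _{F})$, and integrating the bound $\left\| \mathbf{F}^{\prime }\right\| \leq C$ along line segments inside convex balls makes the local Lipschitz constant there $\leq C$. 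This handles interior points. For a point $\mathbf{x}\in \partial \Omega _{F}$ I would use Theorem \ref{LipschitzExtension}: since $\Omega _{F}$ is bounded its Lipschitz boundary is uniformly Lipschitz, so there is a bounded linear extension $\frak{L}(F_{i})\in W^{1,\infty }(\mathbb{R}^{N})$ with $\left\| \nabla \frak{L}(F_{i})\right\| _{L^{\infty }}\leq \left\| \frak{L}\right\| C$; by Theorem \ref{SobolevLipschitz} on $\mathbb{R}^{N}$, $\frak{L}(F_{i})$ is locally Lipschitz, hence Lipschitz on a ball $B_{\mathbf{x}}$ with constant $\leq \left\| \frak{L}\right\| C$. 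Since $\frak{L}(F_{i})\equiv F_{i}$ on $int(\Omega _{F})$, which contains the cone union $K_{\mathbf{x}}$, and since $\frak{L}(F_{i})(\mathbf{x})$ is the value assigned to $F_{i}$ at $\mathbf{x}$ (the boundary convention stated before the theorem), the estimate transfers to $\left\| \mathbf{F}(\mathbf{x})-\mathbf{F}(\mathbf{y})\right\| \leq C^{\prime }\left\| \mathbf{x}-\mathbf{y}\right\| $ for $\mathbf{y}\in B_{\mathbf{x}}\cap K_{\mathbf{x}}$. Hence $\mathbf{F}$ is locally Lipschitz for a common constant, and Theorem \ref{LipStabTheorem} gives stability of (\ref{NSCS30}).

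For the converse, assume (\ref{NSCS30}) is stable. By Theorem \ref{LipStabTheorem}, $\mathbf{F}$ is locally Lipschitz for a common constant $C$. Restricting to $int(\Omega _{F})$, local Lipschitz continuity plus boundedness gives $F_{i}\in W^{1,\infty }(int(\Omega _{F}))$, and since a Lipschitz function is differentiable a.e.\ with a.e.\ classical gradient equal to the distributional one and bounded by the Lipschitz constant, we get $\left\| \nabla F_{i}\right\| _{L^{\infty }}\leq C$ there (equivalently, apply Theorem \ref{SobolevLipschitz}). At a boundary point the distributional gradient is, by the convention preceding the theorem, $\nabla \frak{L}(F_{i})(\mathbf{x})$, bounded by $\left\| \frak{L}\right\| C$ because $\frak{L}$ is a bounded linear operator. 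Therefore $\left\| \mathbf{F}^{\prime }\right\| $ is bounded on all of $\Omega _{F}$ by a constant multiple of $C$, uniformly in $\mathbf{s}$, and Lemma \ref{LinSchStab} yields stability of the scheme in variations (\ref{NDS55}).

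The step I expect to be the main obstacle is the boundary analysis in both directions: one must verify that the extension operator $\frak{L}$ of Theorem \ref{LipschitzExtension} keeps the constant uniform in the scheme parameter $\mathbf{s}$ — which is where boundedness of $\Omega _{F}$ enters, turning ``Lipschitz boundary'' into ``uniformly Lipschitz boundary'' — and that the value of $\mathbf{F}$ at a boundary point is the one consistent with $\frak{L}(F_{i})$, so that the local-Lipschitz estimate obtained for the extension genuinely controls $\mathbf{F}$ along the admissible directions $K_{\mathbf{x}}$. The interior estimates, and the conversion between $\left\| \mathbf{F}^{\prime }\right\| $ and the $\left\| \nabla F_{i}\right\| $, are routine.
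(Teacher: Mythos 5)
Your proposal is correct and follows essentially the same route as the paper: both directions are run through the chain Theorem \ref{LipStabTheorem} (stability $\Leftrightarrow$ locally Lipschitz with common constant), Theorem \ref{SobolevLipschitz} ($W^{1,\infty}$ characterization), Theorem \ref{LipschitzExtension} (extension across the boundary), and Lemma \ref{LinSchStab} (uniform bound on $\mathbf{F}^{\prime }$). Your version merely makes explicit some quantitative bookkeeping (the factor $\left\| \frak{L}\right\| $, the equivalence-of-norms constant between $\left\| \mathbf{F}^{\prime }\right\| $ and the $\left\| \nabla F_{i}\right\| _{\infty }$) that the paper absorbs silently into the common constant $C$.
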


\begin{proof}
Let the scheme (\ref{NSCS30}) be stable, then, in view of Theorem \ref
{LipStabTheorem}, $\mathbf{F}$\ will be locally Lipschitz for a common
constant $C$, and, hence, in view of Theorem \ref{SobolevLipschitz}, $%
F_{i}\in W^{1,\infty }\left( int\left( \Omega _{F}\right) \right) $, $%
i=1,2,\ldots ,N$. By virtue of Theorem \ref{LipschitzExtension} we can find $%
F_{i}^{\ast }\in W^{1,\infty }\left( \mathbb{R}^{N}\right) $ such that $%
F_{i}^{\ast }(\mathbf{x})=F_{i}(\mathbf{x})$ on $int\left( \Omega
_{F}\right) $. Hence $F_{i}^{\ast }\in W^{1,\infty }\left( B_{0}\right) $,
where $B_{0}$ is an open ball such that $\Omega _{F}\subset B_{0}\subset 
\mathbb{R}^{N}$. Notice, $\left\| \nabla F_{i}^{\ast }\right\| _{\infty }$, $%
i=1,2,\ldots ,N$, is bounded on the ball $B_{0}$ and, naturally, on the
domain $\Omega _{F}\subset B_{0}$, since $F_{i}^{\ast }\in W^{1,\infty
}\left( B_{0}\right) $. In such a case there exists a common constant $C$
such that $\left\| \mathbf{F}^{\prime }\right\| \leq C$ for all $\mathbf{x}%
\in \Omega _{F}$, where $\left\| \mathbf{F}^{\prime }\right\| \equiv \left\| 
\mathbf{f}_{F}\right\| $, $\mathbf{f}_{F}\equiv \left\{ \left\| \nabla
F_{1}\right\| _{\infty }\right. $ $,$ $\left\| \nabla F_{2}\right\| _{\infty
},$ $\ldots ,$ $\left. \left\| \nabla F_{N}\right\| _{\infty }\right\} ^{T}$%
. Hence, in view of Lemma \ref{LinSchStab}, the scheme in variations, (\ref
{NDS55}), is stable.

Conversely, suppose that the scheme in variations, (\ref{NDS55}), is stable.
In view of Lemma \ref{LinSchStab}, there exists a common constant $C$ such
that $\left\| \mathbf{F}^{\prime }\right\| \leq C$ on $\Omega _{F}$, and,
hence, $\left\| \nabla F_{i}\right\| _{\infty }$, $i=1,2,\ldots ,N$, is
bounded on $\Omega _{F}$. Consequently, $F_{i}\in W^{1,\infty }\left(
int\left( \Omega _{F}\right) \right) $, $i=1,2,\ldots ,N$. Since $int\left(
\Omega _{F}\right) $ is an extension domain for $W^{1,\infty }\left(
int\left( \Omega _{F}\right) \right) $, we can find, in view of Theorem \ref
{LipschitzExtension}, $F_{i}^{\ast }\in W^{1,\infty }\left( \mathbb{R}%
^{N}\right) $ such that $F_{i}^{\ast }(\mathbf{x})=F_{i}(\mathbf{x})$ on $%
int\left( \Omega _{F}\right) $. In view of Theorem \ref{SobolevLipschitz}, $%
F_{i}^{\ast }$, $i=1,2,\ldots ,N$,\ will be locally Lipschitz on $\mathbb{R}%
^{N}$ (and, hence, on $\Omega _{F}$)\ for a common constant. Then $F_{i}$, $%
i=1,2,\ldots ,N$,\ and hence $\mathbf{F}$\ will be locally Lipschitz on $%
\Omega _{F}$, since $\Omega _{F}$ being Lipschitz\ satisfies the cone
property. Thus, in view of Theorem \ref{LipStabTheorem}, the scheme (\ref
{NSCS30}) will be stable.
\end{proof}

Notice, if $\mathbf{F}$ in the scheme (\ref{NSCS30}) is
Gateaux-differentiable, then $\nabla F_{i}$\ (see Theorem \ref{iffStab})\
denotes the classical gradient of $F_{i}$, and, hence, it may be taken that $%
\mathbf{f}_{F}=\left\{ \left\| \nabla F_{1}\right\| \right. $ $,$ $\left\|
\nabla F_{2}\right\| ,$ $\ldots ,$ $\left. \left\| \nabla F_{N}\right\|
\right\} ^{T}$.

\begin{remark}
Definition \ref{DefStab1} is ``restored in its rights'' for domains being
internal path-connected, bounded, and Lipschitz (see the proof of Theorem 
\ref{iffStab}). Namely, considering the scheme (\ref{NSCS30}) with such a
domain, we find that the scheme will be stable in terms of Definition \ref
{DefStability} \emph{iff} it will be stable in terms of Definition \ref
{DefStab1}.
\end{remark}

\section{Construction of stable central schemes\label{COSN}}

In this section we consider explicit schemes on a uniform grid with time
step $\Delta t$ and spatial mesh size $\Delta x$, as applied to 1-D
hyperbolic equations. In view of the CFL condition \cite{LeVeque 2002}, we
assume for the explicit schemes that $\Delta t=O\left( \Delta x\right) $.
Moreover, we will also assume that $\Delta x=O\left( \Delta t\right) $,
since a central scheme generates, in general, a conditional approximation
(see \cite{Borisov and Mond 2010b}). In such a case, the following
inequalities will be valid, for sufficiently small $\Delta t$ and $\Delta x$%
, 
\begin{equation}
\nu _{0}\Delta t\leq \Delta x\leq \mu _{0}\Delta t,\quad \nu _{0},\mu
_{0}=const,\ 0<\nu _{0}\leq \mu _{0}.  \label{CA05}
\end{equation}
We will focus on the following 1-D version of the problem (\ref{INA10}): 
\begin{equation}
\frac{\partial \mathbf{u}}{\partial t}+\frac{\partial }{\partial x}\mathbf{f}%
\left( \mathbf{u}\right) =0,\ t_{n}<t\leq t_{n+1}\equiv t_{n}+\Delta t,\quad 
\mathbf{u}\left( x,t_{n}\right) =\mathbf{u}^{n}\left( x\right) ,  \label{C10}
\end{equation}
Since the system (\ref{C10}) is hyperbolic, the Jacobian matrix of $\mathbf{f%
}\left( \mathbf{u}\right) $ possesses $M$ linearly independent eigenvectors
(see, e.g., \cite{Godlewski and Raviart 1996}). In addition, it is also
assumed that 
\begin{equation}
\underset{\mathbf{u\in }\Omega _{\mathbf{u}}}{\sup }\left\| \mathbf{A}%
\right\| _{2}\leq \lambda _{\max }<\infty ,\quad \mathbf{A=}\frac{\partial 
\mathbf{f}\left( \mathbf{u}\right) }{\partial \mathbf{u}}.  \label{IN80}
\end{equation}

Using the central differencing, we write 
\begin{equation}
\left. \frac{\partial \mathbf{u}}{\partial t}\right| _{t=t_{n+0.25},\
x=x_{i+0.5}}=\frac{\mathbf{u}_{i+0.5}^{n+0.5}-\mathbf{u}_{i+0.5}^{n}}{%
0.5\Delta t}+O\left( \left( \Delta t\right) ^{2}\right) ,  \label{C24}
\end{equation}
\begin{equation}
\left. \frac{\partial \mathbf{f}}{\partial x}\right| _{t=t_{n+0.25},\
x=x_{i+0.5}}=\frac{\mathbf{f}_{i+1}^{n+0.25}-\mathbf{f}_{i}^{n+0.25}}{\Delta
x}+O\left( \left( \Delta x\right) ^{2}\right) .  \label{C25}
\end{equation}
By virtue of (\ref{C24})-(\ref{C25}) we approximate (\ref{C10}) on the cell $%
\left[ x_{i},x_{i+1}\right] \times \left[ t_{n},t_{n+0.5}\right] $ by the
following difference equation 
\begin{equation}
\mathbf{v}_{i+0.5}^{n+0.5}=\mathbf{v}_{i+0.5}^{n}-\frac{\Delta t}{2\Delta x}%
\left( \mathbf{g}_{i+1}^{n+0.25}-\mathbf{g}_{i}^{n+0.25}\right) .
\label{C30}
\end{equation}
As usual, the mathematical treatment for the second step (i.e., on the cell $%
\left[ x_{i-0.5},x_{i+0.5}\right] \times \left[ t_{n+0.5},t_{n+1}\right] $)
of a staggered scheme will, in general, not be included in the text, because
it is quite similar to the treatment for the first step.

Considering that (\ref{C30}) approximates (\ref{C10}) with the accuracy $%
O(\left( \Delta x\right) ^{2}+\left( \Delta t\right) ^{2})$, the next
problem is to approximate $\mathbf{v}_{i+0.5}^{n}$ and $\mathbf{g}%
_{i}^{n+0.25}$ in such a way as to retain the accuracy of the approximation.
For instance, the following approximations 
\begin{equation}
\mathbf{v}_{i+0.5}^{n}=0.5\left( \mathbf{v}_{i}^{n}+\mathbf{v}%
_{i+1}^{n}\right) +O\left( \left( \Delta x\right) ^{2}\right) ,\quad \mathbf{%
g}_{i}^{n+0.25}=\mathbf{f}\left( \mathbf{v}_{i}^{n}\right) +O\left( \Delta
t\right) ,  \label{C50}
\end{equation}
leads to the staggered form of the famed LxF scheme that is of the
first-order approximation (see, e.g., \cite[p. 170]{Godlewski and Raviart
1996}). One way to obtain a higher-order scheme is to use a higher order
interpolation. At the same time it is required of the interpolant to be
monotonicity preserving. Notice, the classic cubic spline \cite{Press
William 1988} does not possess such a property (see Figure \ref{Fritsch}a).
In the following, subsection \ref{COS}, we consider the problem of
high-order interpolation of $\mathbf{v}_{i+0.5}^{n}$ in (\ref{C30}) with
closer inspection.

\subsection{Monotone $C^{1}$ piecewise cubics in construction of explicit
schemes\label{COS}}

We will consider some theoretical aspects for high-order interpolation and
employment of monotone $C^{1}$ piecewise cubics (see, e.g., \cite{Fritsch
and Carlson 1980}, \cite{Kocic and Milovanovic 1997}) in construction of
monotone explicit schemes.

Let $\mathbf{p}=\mathbf{p}\left( x\right) \equiv \left\{ p^{1}\left(
x\right) ,\ldots ,p^{k}\left( x\right) ,\ldots ,p^{m}\left( x\right)
\right\} ^{T}$ be an interpolant, and let 
\begin{equation*}
\mathbf{p}_{i}=\mathbf{p}\left( x_{i}\right) ,\quad \mathbf{p}_{i}^{\prime }=%
\mathbf{p}^{\prime }\left( x_{i}\right) ,\quad \Delta \mathbf{p}_{i}=\mathbf{%
p}_{i+1}-\mathbf{p}_{i},
\end{equation*}
\begin{equation}
\mathbf{p}_{i}^{\prime }=\mathbb{A}_{i}\cdot \frac{\Delta \mathbf{p}_{i}}{%
\Delta x},\quad \mathbf{p}_{i+1}^{\prime }=\mathbb{B}_{i}\cdot \frac{\Delta 
\mathbf{p}_{i}}{\Delta x},  \label{C80}
\end{equation}
where $\mathbf{p}_{i}^{\prime }$ denotes the derivative of the interpolant
at $x=x_{i}$. The diagonal matrices $\mathbb{A}_{i}$ and $\mathbb{B}_{i}$\
in (\ref{C80})\ are defined as follows 
\begin{equation}
\mathbb{A}_{i}=diag\left\{ \alpha _{i}^{1},\alpha _{i}^{2},\ldots ,\alpha
_{i}^{m}\right\} ,\ \mathbb{B}_{i}=diag\left\{ \beta _{i}^{1},\beta
_{i}^{2},\ldots ,\beta _{i}^{m}\right\} .  \label{C85}
\end{equation}
If $\mathbf{p}=\mathbf{p}\left( x\right) $ is a $C^{1}$ piecewise cubic
interpolant (see, e.g., \cite{Fritsch and Carlson 1980}, \cite{Kocic and
Milovanovic 1997}), then it will be component-wise monotone on $\left[
x_{i},x_{i+1}\right] $ \emph{iff} one of the following conditions (see \cite
{Fritsch and Carlson 1980}, \cite{Kocic and Milovanovic 1997}) is satisfied: 
\begin{equation}
\left( \alpha _{i}^{k}-1\right) ^{2}+\left( \alpha _{i}^{k}-1\right) \left(
\beta _{i}^{k}-1\right) +\left( \beta _{i}^{k}-1\right) ^{2}-3\left( \alpha
_{i}^{k}+\beta _{i}^{k}-2\right) \leq 0,  \label{C90}
\end{equation}
\begin{equation}
\alpha _{i}^{k}+\beta _{i}^{k}\leq 3,\quad \alpha _{i}^{k}\geq 0,\ \beta
_{i}^{k}\geq 0,\quad \forall i,k.  \label{C100}
\end{equation}
The region of monotonicity is shown in Figure \ref{Fritsch}b. The results of
implementing a monotone $C^{1}$ piecewise cubic interpolation when compared
with the classic cubic spline interpolation, are depicted in Figure \ref
{Fritsch}a. 
\begin{figure}[h]
\centerline{\includegraphics[width=11.50cm,height=9.6cm]{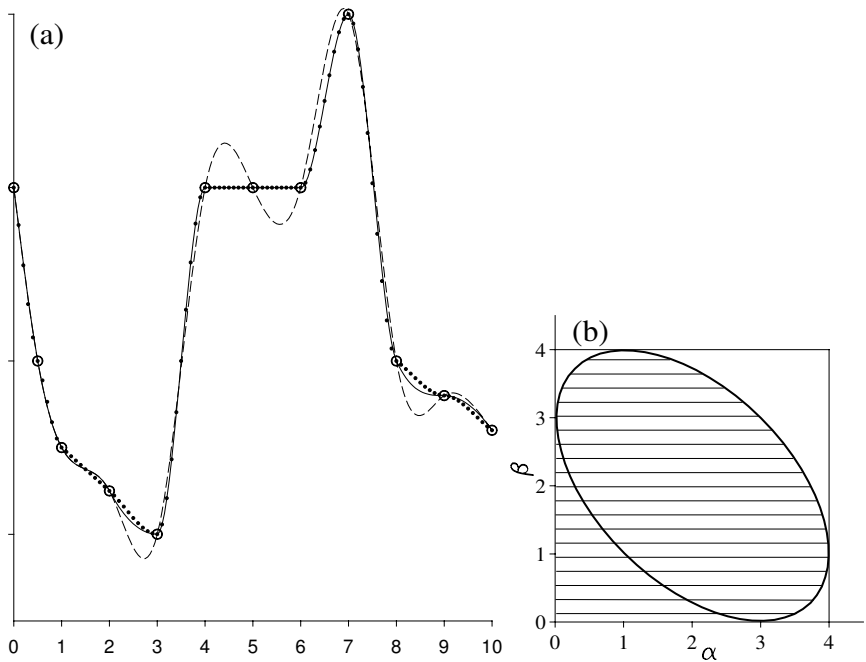}}
\caption{Monotone piecewise cubic interpolation. (a) Interpolation of a 1-D
tabulated function. Circles: prescribed tabulated values; Dashed line:
classic cubic spline; Solid line: monotone piecewise cubic under $\protect%
\alpha ,\protect\beta \leq 3$; Dotted line: monotone piecewise cubic under $%
\protect\alpha ,\protect\beta \leq 1$. (b) Necessary and sufficient
conditions for monotonicity. Horizontal hatching: region of monotonicity;
Unshaded: cubic is non-monotone. }
\label{Fritsch}
\end{figure}

We note (Figure \ref{Fritsch}a) that the constructed function produces
monotone interpolation and has a small, or even practically zero, deviation
from the classic cubic spline at some sections where the classic cubic
spline is monotone.

Let us consider the problem of monotone high-order approximation of $\mathbf{%
v}_{i+0.5}^{n}$ in (\ref{C30}). First, given the values, $\mathbf{p}_{i}$
and $\mathbf{p}_{i+1}$, of the interpolant $\mathbf{p}=\mathbf{p}\left(
x\right) $ and the approximate estimates, $\mathbf{d}_{i}$ and $\mathbf{d}%
_{i+1}$, of its derivatives, namely 
\begin{equation}
\mathbf{p}_{i}^{\prime }=\mathbf{d}_{i}+O\left( \left( \Delta x\right)
^{s}\right) ,  \label{CC110}
\end{equation}
we obtain the following interpolation formula 
\begin{equation}
\mathbf{p}_{i+0.5}=0.5\left( \mathbf{p}_{i}+\mathbf{p}_{i+1}\right) -\frac{%
\Delta x}{8}\left( \mathbf{d}_{i+1}-\mathbf{d}_{i}\right) +O\left( \left(
\Delta x\right) ^{r}\right) ,  \label{CC120}
\end{equation}
where 
\begin{equation}
r=\min \left( 4,s+1\right) .  \label{CC130}
\end{equation}
Actually, let $\mathbf{p}\left( x\right) $ be sufficiently smooth, then,
using the Taylor series expansion of the function $\mathbf{p}\left( x\right) 
$, we obtain: 
\begin{equation}
\mathbf{p}_{i+1}+\mathbf{p}_{i}=2\mathbf{p}_{i+0.5}+\mathbf{p}%
_{i+05}^{\prime \prime }\left( \frac{\Delta x}{2}\right) ^{2}+\frac{\mathbf{p%
}_{i+05}^{\prime \prime \prime \prime }}{12}\left( \frac{\Delta x}{2}\right)
^{4}+O\left( \left( \Delta x\right) ^{6}\right) .  \label{CC160}
\end{equation}
In a similar manner, using the Taylor series expansion of the function $%
\mathbf{p}^{\prime }\left( x\right) $, we can show that 
\begin{equation}
\mathbf{p}_{i+1}^{\prime }-\mathbf{p}_{i}^{\prime }=\mathbf{p}%
_{i+05}^{\prime \prime }\Delta x+\frac{\mathbf{p}_{i+05}^{\prime \prime
\prime \prime }}{3}\left( \frac{\Delta x}{2}\right) ^{3}+O\left( \left(
\Delta x\right) ^{5}\right)  \label{CC185}
\end{equation}
By virtue of (\ref{CC185}) we obtain from (\ref{CC160}) that 
\begin{equation}
\mathbf{p}_{i+0.5}=0.5\left( \mathbf{p}_{i}+\mathbf{p}_{i+1}\right) -\frac{%
\Delta x}{8}\left( \mathbf{p}_{i+1}^{\prime }-\mathbf{p}_{i}^{\prime
}\right) +\frac{\mathbf{p}_{i+05}^{\prime \prime \prime \prime }}{24}\left( 
\frac{\Delta x}{2}\right) ^{4}+O\left( \left( \Delta x\right) ^{6}\right)
\label{CC195}
\end{equation}
In view of (\ref{CC110}) we obtain from (\ref{CC195}) the following
interpolation formula 
\begin{equation}
\mathbf{p}_{i+0.5}=0.5\left( \mathbf{p}_{i}+\mathbf{p}_{i+1}\right) -\frac{%
\Delta x}{8}\left( \mathbf{d}_{i+1}-\mathbf{d}_{i}\right) +O\left( \left(
\Delta x\right) ^{4}+\left( \Delta x\right) ^{s+1}\right) .  \label{CC200}
\end{equation}
Hence, by virtue of (\ref{CC200}) and (\ref{CC120}), we conclude that (\ref
{CC130}) is, in general, valid. It is clear that if (\ref{CC110}) is valid
and $\mathbf{d}_{i+1}-\mathbf{d}_{i}$ $=$ $\left( \mathbf{p}_{i+1}^{\prime }-%
\mathbf{p}_{i}^{\prime }\right) $ $+$ $O(\left( \Delta x\right) ^{s+1})$,
then $r=\min \left( 4,s+2\right) $ in (\ref{CC120}). Thus, if $\mathbf{p}%
\left( x\right) $ has a continuous fourth derivative and $\mathbf{p}%
_{i}^{\prime }$ can be estimated with accuracy $O(\left( \Delta x\right)
^{3})$ (see, e.g., \cite[p. 112]{Kahaner et al. 1989}), then $r=4$ in (\ref
{CC120}). 

Let us note that instead of point values employed in the construction of the
scheme (\ref{C30}), it can be used the cell averages (see, e.g., \cite
{Balaguer and Conde 2005}, \cite{Kurganov and Tadmor 2000}, \cite{LeVeque
2002}). Then, given the values, $\overline{\mathbf{p}}_{i}$, of the averaged
interpolant $\overline{\mathbf{p}}=\overline{\mathbf{p}}\left( x\right) $
and the estimates, $\overline{\mathbf{d}}_{i}$ and $\overline{\mathbf{d}}%
_{i+1}$, of its derivatives, namely $\overline{\mathbf{p}}_{i}^{\prime }$ $=$
$\overline{\mathbf{d}}_{i}+O\left( \left( \Delta x\right) ^{s}\right) $ we
obtain the same formula (\ref{CC120}) written for the averaged values. 
Further, to estimate the flux at $x_{i}$, the point value $\mathbf{p}_{i}$
can be reconstructed using a piecewise polynomial interpolation (e.g., \cite
{Balaguer and Conde 2005}, \cite{Kurganov and Tadmor 2000}). Notice, we need
not any reconstruction when dealing with first- and second-order schemes,
since the point values agree with the corresponding cell averages to $%
O(\left( \Delta x\right) ^{2})$ (e.g., \cite{Kurganov and Tadmor 2000}, \cite
{LeVeque 2002}). We will therefore omit the bar notation when dealing with
such schemes.

Instead of (\ref{CC120}), we will employ the following interpolation formula 
\begin{equation}
\mathbf{p}_{i+0.5}=0.5\left( \mathbf{p}_{i}+\mathbf{p}_{i+1}\right)
-\varkappa \frac{\Delta x}{8}\left( \mathbf{d}_{i+1}-\mathbf{d}_{i}\right)
,\quad 0\leq \varkappa \leq 1,  \label{CC210}
\end{equation}
where, for some reason, a new parameter $\varkappa $ is used. Consider, for
instance, the case when the point value $\mathbf{v}_{i+0.5}^{n}$ in the
scheme (\ref{C30}) is estimated by the cell average calculated on the basis
of the monotone $C^{1}$ piecewise cubics. Note that this estimation is of
second-order. Such an approach leads to the interpolation of the point value
by formula (\ref{CC210}) with the parameter $\varkappa =2\diagup 3$.
Obviously, the interpolation (\ref{CC210}) is of second-order under $0\leq
\varkappa <1$, and (\ref{CC210}) coincides with (\ref{CC120}) under $%
\varkappa =1$ to give fourth order accuracy. Notice, if $0<\varkappa <1$,
then $\mathbf{p}_{i+0.5}$ can be estimated with accuracy $O(\left( \Delta
x\right) ^{4})$ by (\ref{CC210})\ provided that $1-\varkappa =O(\left(
\Delta x\right) ^{2})$. Actually, by virtue of (\ref{CC185}) and (\ref{CC110}%
), we obtain that 
\begin{equation*}
\varkappa \frac{\Delta x}{8}\left( \mathbf{d}_{i+1}-\mathbf{d}_{i}\right) =%
\frac{\Delta x}{8}\left( \mathbf{d}_{i+1}-\mathbf{d}_{i}\right)
+(1-\varkappa )\frac{\left( \Delta x\right) ^{2}}{8}\frac{\mathbf{d}_{i+1}-%
\mathbf{d}_{i}}{\Delta x}=\frac{\Delta x}{8}\left( \mathbf{d}_{i+1}-\mathbf{d%
}_{i}\right)
\end{equation*}
\begin{equation}
+\frac{O(\left( \Delta x\right) ^{4})}{8}\mathbf{p}_{i+05}^{\prime \prime
}+O\left( \left( \Delta x\right) ^{s+3}\right) =\frac{\Delta x}{8}\left( 
\mathbf{d}_{i+1}-\mathbf{d}_{i}\right) +O(\left( \Delta x\right) ^{4}).
\label{CC220}
\end{equation}

The approximation of derivatives $\mathbf{p}_{i}^{\prime }$ can be done by
the following three steps \cite{Fritsch and Carlson 1980}: (i)
initialization of the derivatives $\mathbf{p}_{i}^{\prime }=\mathbf{d}%
_{i}+O\left( \left( \Delta x\right) ^{s}\right) $; (ii) choice of subregion
of monotonicity; (iii) modification of the initialized derivatives, $\mathbf{%
d}_{i}$, to produce a monotone interpolant.

The matter of initialization of the derivatives is the most subtle issue of
this algorithm. Thus, using the two-point or the three-point (centered)
difference formula (e.g. \cite{Kocic and Milovanovic 1997}, \cite{Pareschi
Lorenzo 2001}) we obtain, in general, second-order approximation. Performing
the initialization of the derivatives $\mathbf{p}_{i}^{\prime }=\mathbf{d}%
_{i}+O\left( \left( \Delta x\right) ^{s}\right) $ in the interpolation
formula (\ref{CC210}) by the classic cubic spline \cite{Press William 1988}
interpolation, we obtain $s=3$ (e.g., \cite{Kahaner et al. 1989}, \cite
{Kocic and Milovanovic 1997}). The same accuracy can be achieved by using
the four-point approximation \cite{Kocic and Milovanovic 1997}. However, the
efficiency of the algorithm based on the classic cubic spline interpolation
is comparable with the one based on the four-point approximation, as the
number of multiplications and divisions per one node is approximately the
same for both algorithms.

Obviously, for each interval $\left[ x_{i},x_{i+1}\right] $ in which the
initialized derivatives $\mathbf{d}_{i}$, $\mathbf{d}_{i+1}$ such that at
least one point ($\alpha _{i}^{k}$, $\beta _{i}^{k}$) does not belong to the
region of monotonicity (\ref{C90})-(\ref{C100}), the derivatives $\mathbf{d}%
_{i}$, $\mathbf{d}_{i+1}$ must be modified to $\widetilde{\mathbf{d}}_{i}$, $%
\widetilde{\mathbf{d}}_{i+1}$ such that the point ($\widetilde{\alpha }%
_{i}^{k}$, $\widetilde{\beta }_{i}^{k}$) will be in the region of
monotonicity. Moreover, it is desirable to modify the derivatives $\mathbf{d}
$ $\equiv $ $\left\{ \ldots ,\mathbf{d}_{i-1}^{T},\mathbf{d}_{i}^{T},\mathbf{%
d}_{i+1}^{T},\ldots \right\} ^{T}$ in such a way that $\widetilde{\mathbf{d}}
$ $\equiv $ $\left\{ \ldots ,\widetilde{\mathbf{d}}_{i-1}^{T},\widetilde{%
\mathbf{d}}_{i}^{T},\widetilde{\mathbf{d}}_{i+1}^{T},\ldots \right\} ^{T}$
would be the solution to the following mathematical programming problem: 
\begin{equation}
\left\| \mathbf{d-}\widetilde{\mathbf{d}}\right\| \rightarrow \underset{%
\widetilde{\mathbf{d}}}{\min }.  \label{CA170}
\end{equation}
The modification of the initialized derivatives, would be much simplified if
we take a square as a subregion of monotonicity. In connection with this, we
will make use the subregions of monotonicity represented in the following
form: 
\begin{equation}
0\leq \alpha _{i}^{k}\leq 4\aleph ,\quad 0\leq \beta _{i}^{k}\leq 4\aleph
,\quad \forall i,k,  \label{CC230}
\end{equation}
where $\aleph $ is a monotonicity parameter. If $\aleph =0.75$, then the
subregion (\ref{CC230}), called de Boor-Swartz box \cite{Kocic and
Milovanovic 1997}, coincides with the one used by Fritsch and Carlson \cite
{Fritsch and Carlson 1980}. It is proven in \cite{Borisov and Mond 2010a}
that the interpolation (\ref{CC210}) will be monotone, i.e. the value of an
arbitrary component of $\mathbf{p}_{i+0.5}$ will be between the
corresponding components of $\mathbf{p}_{i}$ and $\mathbf{p}_{i+1}$, \emph{%
iff} (\ref{CC230}) will be valid provided that $0\leq $ $\aleph $ $\leq 1$.
Notice, in such a case the point ($\alpha _{i}^{k}$, $\beta _{i}^{k}$) may
be out of the region of monotonicity shown in Figure \ref{Fritsch}b.

To fulfill the conditions of monotonicity (\ref{CC230}), the modification of
derivatives $\mathbf{d}_{i}=\left\{ d_{i}^{1},d_{i}^{2},\ldots
,d_{i}^{m}\right\} $ can be done by the following algorithm suggested, in
fact, by Fritsch and Carlson \cite{Fritsch and Carlson 1980} (see also \cite
{Kocic and Milovanovic 1997}): 
\begin{equation}
S_{i}^{k}:=4\aleph \min \func{mod}(\Delta _{i-1}^{k},\Delta _{i}^{k}),\quad 
\widetilde{d}_{i}^{k}:=\min \func{mod}(d_{i}^{k},S_{i}^{k}),\quad \aleph
=const,  \label{CC240}
\end{equation}
where $\Delta _{i}^{k}=\left( p_{i+1}^{k}-p_{i}^{k}\right) \diagup \Delta x$%
, the function $\min \func{mod}(x,y)$ is defined (e.g., \cite{Kocic and
Milovanovic 1997}, \cite{Kurganov and Tadmor 2000}, \cite{Morton 2001}, \cite
{Pareschi Lorenzo 2001}, \cite{Serna and Marquina 2005}) as follows 
\begin{equation}
\min \func{mod}(x,y)\equiv \frac{1}{2}\left[ sgn(x)+sgn(y)\right] \min
\left( \left| x\right| ,\left| y\right| \right) .  \label{CC250}
\end{equation}

\subsection{Construction of nonstaggered central schemes\label{Construction}}

We will consider explicit central schemes on a uniform grid with time step $%
\Delta t$ and spatial mesh size $\Delta x$. In view of the interpolation
formula (\ref{CC210}), the staggered scheme (\ref{C30}) becomes 
\begin{equation}
\mathbf{v}_{i+0.5}^{n+0.5}=0.5\left( \mathbf{v}_{i+1}^{n}+\mathbf{v}%
_{i}^{n}\right) -\varkappa \frac{\Delta x}{8}\left( \mathbf{d}_{i+1}^{n}-%
\mathbf{d}_{i}^{n}\right) -\frac{\Delta t}{2}\frac{\mathbf{f}\left( \mathbf{v%
}_{i+1}^{n}\right) -\mathbf{f}\left( \mathbf{v}_{i}^{n}\right) }{\Delta x},
\label{CC260}
\end{equation}
where $\mathbf{d}_{i}^{n}$ denotes the derivative of the interpolant at $%
x=x_{i}$, the range of values for the parameter $\varkappa $ is the segment $%
0\leq \varkappa \leq 1$. If $\varkappa =0$, then the scheme (\ref{CC260})
coincides with the LxF scheme. It is shown in \cite{Borisov and Mond 2010a}
that the first-order, $O(\Delta t+\left( \Delta x\right) ^{r}\diagup \Delta
t+\left( \Delta x\right) ^{2})$, scheme (\ref{CC260}) generates a
conditional approximation, because it approximates (\ref{C10}) only if $%
\left( \Delta x\right) ^{r}\diagup \Delta t\rightarrow 0$ as $\Delta
x\rightarrow 0$ and $\Delta t\rightarrow 0$, where $r$\ is the order of
approximation of $\mathbf{v}_{i+0.5}^{n}$ by the interpolation formula (\ref
{CC210}). The scheme (\ref{CC260}) is abbreviated in \cite{Borisov and Mond
2010b} as COS1.

Let us develop new nonoscillatory central schemes, which are based on
regular, nonstaggered spatial grids. Using the central differencing, we
approximate (\ref{C10}) on the cell $\left[ x_{i-1},x_{i+1}\right] \times %
\left[ t_{n},t_{n+1}\right] $ by the following difference equation 
\begin{equation}
\mathbf{v}_{i}^{n+1}=\mathbf{v}_{i}^{n}-\frac{\Delta t}{2\Delta x}\left( 
\mathbf{g}_{i+1}^{n+0.5}-\mathbf{g}_{i-1}^{n+0.5}\right) .  \label{CC270}
\end{equation}
Notice, the approximation $\mathbf{g}_{i}^{n+0.5}=\mathbf{f}\left( \mathbf{v}%
_{i}^{n}\right) +O\left( \Delta t\right) $, leads to the first-order, $%
O(\Delta t+\left( \Delta x\right) ^{2})$, scheme 
\begin{equation}
\mathbf{v}_{i}^{n+1}=\mathbf{v}_{i}^{n}-\frac{\Delta t}{2\Delta x}\left( 
\mathbf{f}_{i+1}^{n}-\mathbf{f}_{i-1}^{n}\right) ,  \label{CC275}
\end{equation}
which is absolutely unstable \cite[p. 113]{Ganzha and Vorozhtsov 1996b}, 
\cite[p. 71]{LeVeque 2002}. Let us construct a stable scheme on the basis of
(\ref{CC270}). The interpolation formula (\ref{CC210}), as applied to the
cell $\left[ x_{i-1},x_{i+1}\right] \times \left[ t_{n},t_{n+1}\right] $,
becomes 
\begin{equation}
\mathbf{p}_{i}=0.5\left( \mathbf{p}_{i-1}+\mathbf{p}_{i+1}\right) -\varkappa 
\frac{\Delta x}{4}\left( \mathbf{d}_{i+1}-\mathbf{d}_{i-1}\right) .
\label{CC280}
\end{equation}
Equality (\ref{CC280}), after elementary transformations, leads to 
\begin{equation}
\mathbf{p}_{i}=0.25\left( \mathbf{p}_{i-1}+2\mathbf{p}_{i}+\mathbf{p}%
_{i+1}\right) -\varkappa \frac{\Delta x}{8}\left( \mathbf{d}_{i+1}-\mathbf{d}%
_{i-1}\right) .  \label{CC285}
\end{equation}
Using (\ref{CC285}) we obtain from (\ref{CC275}) the following nonstaggered
central scheme 
\begin{equation}
\mathbf{v}_{i}^{n+1}=\frac{\mathbf{v}_{i+1}^{n}+2\mathbf{v}_{i}^{n}+\mathbf{v%
}_{i-1}^{n}}{4}-\varkappa \frac{\Delta x}{8}\left( \mathbf{d}_{i+1}^{n}-%
\mathbf{d}_{i-1}^{n}\right) -\frac{\Delta t}{2\Delta x}\left( \mathbf{f}%
_{i+1}^{n}-\mathbf{f}_{i-1}^{n}\right) .  \label{CC290}
\end{equation}
where $0\leq \varkappa \leq 1$. In view of (\ref{CC110})-(\ref{CC130}), as
applied to the cell $\left[ x_{i-1},x_{i+1}\right] \times \left[
t_{n},t_{n+1}\right] $, the local truncation error, $\psi $, on a
sufficiently smooth solution $\mathbf{u}\left( x,t\right) $ to (\ref{C10})
is found to be 
\begin{equation}
\psi =O(\Delta t+\left( \Delta x\right) ^{2})+O\left( \frac{\left( \Delta
x\right) ^{r}}{\Delta t}\right) +(1-\varkappa )O\left( \frac{\left( \Delta
x\right) ^{2}}{\Delta t}\right) ,  \label{CC300}
\end{equation}
where $r=\min \left( 4,s+1\right) $, $s$ denotes the order of approximation
in (\ref{CC110}). Note that the scheme (\ref{CC290}) will be $O(\Delta
t+\left( \Delta x\right) ^{2})$ accurate if $r\geq 3$ as well as $%
1-\varkappa =O(\Delta x)$. The first-order nonstaggered central scheme (\ref
{CC290}), being a monotone approximation of the 1-D equation (\ref{C10}) by $%
C^{1}$ cubics, will be abbreviated to MAC1.

Let us consider the transformation from the scheme (\ref{CC275}) to (\ref
{CC290}) with closer inspection. We will use the, so called, \emph{first
differential approximation} of the scheme (\ref{CC275}) (\cite[p. 45]{Ganzha
and Vorozhtsov 1996}, \cite[p. 376]{Samarskiy and Gulin 1973}; see also
`modified equations' in \cite[p. 45]{Ganzha and Vorozhtsov 1996}, \cite
{LeVeque 2002}, \cite{Morton 1996}). As reported in \cite{Ganzha and
Vorozhtsov 1996}, \cite{Samarskiy and Gulin 1973}, this heuristic method was
originally presented by Hirt (1968) (see \cite[p. 45]{Ganzha and Vorozhtsov
1996}) as well as by Shokin and Yanenko (1968) (see \cite[p. 376]{Samarskiy
and Gulin 1973}), and has since been widely employed in the development of
stable difference schemes for PDEs.

The first differential approximation of the scheme (\ref{CC275}) is the
following: 
\begin{equation}
\frac{\partial \mathbf{u}}{\partial t}+\frac{\partial }{\partial x}\mathbf{f}%
\left( \mathbf{u}\right) =-\frac{\Delta t}{2}\frac{\partial }{\partial x}%
\left( \mathbf{A}^{2}\cdot \frac{\partial \mathbf{u}}{\partial x}\right)
,\quad \mathbf{A}^{2}\equiv \frac{\partial \mathbf{f}}{\partial \mathbf{u}}%
\cdot \frac{\partial \mathbf{f}}{\partial \mathbf{u}}.  \label{CC305}
\end{equation}
The negative diffusion term in (\ref{CC305}) is associated with instability,
as this therm is the source of energy resulting in an unlimited growth of
the amplitude of the solution (see, e.g., \cite{Strunin 2003}, \cite{Zhao
and Tang 2000}). Notice, for the sake of convenience we use the same
notation in (\ref{C10}) and in (\ref{CC305}) in spite of the fact that these
equations are different. We will use such an approach if it does not lead to
confusion.

To obtain (\ref{CC290}), we, in fact, added to the right-hand side of (\ref
{CC275}) the value 
\begin{equation}
\mathbf{E}_{i}\equiv \frac{\mathbf{v}_{i+1}^{n}-2\mathbf{v}_{i}^{n}+\mathbf{v%
}_{i-1}^{n}}{4}-\varkappa \frac{\Delta x}{8}\left( \mathbf{d}_{i+1}^{n}-%
\mathbf{d}_{i-1}^{n}\right) .  \label{CC310}
\end{equation}
Let the interpolant $\mathbf{p}=\mathbf{p}\left( x\right) $ be such that $%
\mathbf{p}_{i}=\mathbf{v}_{i}^{n}$. If $\varkappa =1$ and $\mathbf{d}_{i}=%
\mathbf{p}_{i}^{\prime }$, then, by virtue of (\ref{CC110})-(\ref{CC195}) as
applied to the cell $\left[ x_{i-1},x_{i+1}\right] \times \left[
t_{n},t_{n+1}\right] $, we find that 
\begin{equation}
\mathbf{E}_{i}=-\frac{\left( \Delta x\right) ^{4}}{48}\mathbf{p}_{i}^{\prime
\prime \prime \prime }+O\left( \left( \Delta x\right) ^{6}\right) .
\label{CC320}
\end{equation}
Thus, when we add $\mathbf{E}_{i}$ of (\ref{CC310}) to the right-hand side
of (\ref{CC275}), we, in fact, add the negative fourth-order diffusion term
of (\ref{CC320}) to the right-hand side of the first differential
approximation (\ref{CC305}). Such a term, in contrast to a negative
second-order diffusion term, stabilizes the system and produces dissipative
effect (see, e.g., \cite{Strunin 2003}, \cite{Zhao and Tang 2000}).

If $\varkappa =1$ and $\mathbf{p}_{i}^{\prime }$ is estimated by the
central-difference derivative, i.e. $\mathbf{d}_{i}=(\mathbf{p}_{i+1}-%
\mathbf{p}_{i-1})\diagup (2\Delta x)$, then 
\begin{equation}
\mathbf{p}_{i}^{\prime }=\mathbf{d}_{i}-\frac{\left( \Delta x\right) ^{2}}{6}%
\mathbf{p}_{i}^{\prime \prime \prime }+O\left( \left( \Delta x\right)
^{4}\right) .  \label{CC330}
\end{equation}
In such a case we obtain, instead of (\ref{CC320}), that 
\begin{equation}
\mathbf{E}_{i}=-\frac{3\left( \Delta x\right) ^{4}}{48}\mathbf{p}%
_{i}^{\prime \prime \prime \prime }+O\left( \left( \Delta x\right)
^{5}\right) .  \label{CC340}
\end{equation}
We can see from (\ref{CC340}) that the estimation of the derivative $\mathbf{%
p}_{i}^{\prime }$ by the central-difference derivative leads to the same
order of accuracy as it was in the previous example when $\mathbf{d}_{i}=%
\mathbf{p}_{i}^{\prime }$. Furthermore, the scheme (\ref{CC290}) will be
more dissipative in such a case, since the coefficient of the fourth-order
diffusion term in (\ref{CC340}) is three times bigger than the similar
coefficient in (\ref{CC320}).

If $0\leq \varkappa \leq 1$ and $\mathbf{p}_{i}^{\prime }$ is estimated by
the central-difference derivative, then we find 
\begin{equation}
\mathbf{E}_{i}=\left( 1-\varkappa \right) \frac{\left( \Delta x\right) ^{2}}{%
4}\mathbf{p}_{i}^{\prime \prime }+\left( 1-4\varkappa \right) \frac{\left(
\Delta x\right) ^{4}}{48}\mathbf{p}_{i}^{\prime \prime \prime \prime
}+O\left( \left( \Delta x\right) ^{5}\right) .  \label{CC350}
\end{equation}

Note that the estimations (\ref{CC320}), (\ref{CC340}), and (\ref{CC350})
will be valid if the derivative $\mathbf{d}_{i}$ need not be modified, i.e. (%
\ref{CC230}) will be valid without implementing the algorithm (\ref{CC240}).
Otherwise, there is a risk to add negative second-order diffusion terms to
the right-hand side of (\ref{CC305}). Let us consider, for instance, the
case when $\varkappa =1$ and the derivative $\mathbf{p}_{i+1}^{\prime }$ is
estimated by the right-difference derivative, i.e. $\mathbf{d}_{i+1}=(%
\mathbf{p}_{i+2}-\mathbf{p}_{i+1})\diagup \Delta x$, however $\mathbf{p}%
_{i-1}^{\prime }$ is estimated by the central-difference derivative. In such
a case we obtain: 
\begin{equation}
\mathbf{E}_{i}=-\frac{\left( \Delta x\right) ^{2}}{16}\mathbf{p}%
_{i+1}^{\prime \prime }-\frac{1.25\left( \Delta x\right) ^{4}}{16}\mathbf{p}%
_{i+1}^{\prime \prime \prime \prime }+O\left( \left( \Delta x\right)
^{5}\right) .  \label{CC360}
\end{equation}
In view of (\ref{CC360}) and (\ref{CC350}), we conclude that the part played
by the parameter $\varkappa $ in suppressing spurious oscillations produced
by negative second-order diffusion terms can be very important.

Interestingly, there is a possibility to improve the scheme (\ref{CC260}) by
introducing an additional positive numerical viscosity, i.e., using the
vanishing viscosity method \cite{Godlewski and Raviart 1996}, \cite{LeVeque
2002}, such that the scheme's order of accuracy would increase up to $%
O((\Delta t)^{2}+\left( \Delta x\right) ^{2})$. Such an approach \cite
{Borisov and Mond 2010b} leads to the following second order central scheme 
\begin{equation*}
\mathbf{v}_{i+0.5}^{n+0.5}=0.5\left( \mathbf{v}_{i+1}^{n}+\mathbf{v}%
_{i}^{n}\right) -\varkappa \frac{\Delta x}{8}\left( \mathbf{d}_{i+1}^{n}-%
\mathbf{d}_{i}^{n}\right) -\frac{\Delta t}{2}\frac{\mathbf{f}\left( \mathbf{v%
}_{i+1}^{n}\right) -\mathbf{f}\left( \mathbf{v}_{i}^{n}\right) }{\Delta x}
\end{equation*}
\begin{equation}
+\frac{\left( \Delta t\right) ^{2}}{8\Delta x}\left[ \left( \mathbf{A}%
_{i+1}^{n}\right) ^{2}\cdot \mathbf{d}_{i+1}^{n}-\left( \mathbf{A}%
_{i}^{n}\right) ^{2}\cdot \mathbf{d}_{i}^{n}\right] ,\quad \mathbf{A\equiv }%
\frac{\partial \mathbf{f}}{\partial \mathbf{u}},  \label{SA30}
\end{equation}
where $\mathbf{d}_{i}^{n}$ is the derivative of the interpolant at $x=x_{i}$%
. The scheme (\ref{SA30}) is abbreviated in \cite{Borisov and Mond 2010b} as
COS2.

Let us develop a nonstaggered central scheme that will be of second order.
Using the central differencing, we approximate (\ref{C10}) at the point $%
x=x_{i}$, $t=t_{n+05}$ by the following equation: 
\begin{equation}
\mathbf{v}_{i}^{n+1}=\mathbf{v}_{i}^{n}-\Delta t\left. \frac{\partial 
\mathbf{f}}{\partial x}\right| _{x=x_{i}\text{, }t=t_{n+05}}.  \label{CC380}
\end{equation}
Using Taylor series expansion, we approximate $\mathbf{g}_{i}^{n+0.5}$ in (%
\ref{CC270}) with the accuracy $O(\left( \Delta t\right) ^{2})$: 
\begin{equation}
\mathbf{g}_{i}^{n+0.5}=\mathbf{f}\left( \mathbf{v}_{i}^{n}\right) +\left. 
\frac{\partial \mathbf{f}\left( \mathbf{v}_{i}^{n}\right) }{\partial t}%
\right| _{t=t_{n}}\frac{\Delta t}{2}+O\left( \Delta t^{2}\right) .
\label{SA10}
\end{equation}
By virtue of the PDE system, (\ref{C10}), we find 
\begin{equation}
\frac{\partial \mathbf{f}}{\partial t}=\frac{\partial \mathbf{f}}{\partial 
\mathbf{u}}\cdot \frac{\partial \mathbf{u}}{\partial t}=-\frac{\partial 
\mathbf{f}}{\partial \mathbf{u}}\cdot \frac{\partial \mathbf{f}}{\partial 
\mathbf{u}}\cdot \frac{\partial \mathbf{u}}{\partial x}=-\left( \frac{%
\partial \mathbf{f}}{\partial \mathbf{u}}\right) ^{2}\cdot \frac{\partial 
\mathbf{u}}{\partial x}.  \label{SA20}
\end{equation}
By virtue of (\ref{CC285}) and (\ref{SA10})-(\ref{SA20}), we find 
\begin{equation*}
\mathbf{v}_{i}^{n+1}=\frac{\mathbf{v}_{i-1}^{n}+2\mathbf{v}_{i}^{n}+\mathbf{v%
}_{i+1}^{n}}{4}-\varkappa \frac{\Delta x}{8}\left( \mathbf{d}_{i+1}^{n}-%
\mathbf{d}_{i-1}^{n}\right) -\Delta t\left. \frac{\partial \mathbf{f}}{%
\partial x}\right| _{x=x_{i}\text{, }t=t_{n}}
\end{equation*}
\begin{equation}
+\frac{\left( \Delta t\right) ^{2}}{2}\left. \frac{\partial }{\partial x}%
\left( \mathbf{A}^{2}\frac{\partial \mathbf{u}}{\partial x}\right) \right|
_{x=x_{i}\text{, }t=t_{n}},\quad \mathbf{A}^{2}\equiv \frac{\partial \mathbf{%
f}}{\partial \mathbf{u}}\cdot \frac{\partial \mathbf{f}}{\partial \mathbf{u}}%
.  \label{CC390}
\end{equation}
Then, approximating the last two terms in the right-hand side of (\ref{CC390}%
) with the second order accuracy, we obtain 
\begin{equation*}
\mathbf{v}_{i}^{n+1}=\frac{\mathbf{v}_{i-1}^{n}+2\mathbf{v}_{i}^{n}+\mathbf{v%
}_{i+1}^{n}}{4}-\varkappa \frac{\Delta x}{8}\left( \mathbf{d}_{i+1}^{n}-%
\mathbf{d}_{i-1}^{n}\right) -\frac{\Delta t}{2\Delta x}\left( \mathbf{f}%
_{i+1}^{n}-\mathbf{f}_{i-1}^{n}\right)
\end{equation*}
\begin{equation}
+\frac{\varsigma \left( \Delta t\right) ^{2}}{2\left( \Delta x\right) ^{2}}%
\left[ \mathbf{D}_{i+0.5}^{n}\cdot \left( \mathbf{v}_{i+1}^{n}-\mathbf{v}%
_{i}^{n}\right) -\mathbf{D}_{i-0.5}^{n}\cdot \left( \mathbf{v}_{i}^{n}-%
\mathbf{v}_{i-1}^{n}\right) \right] ,\ \varsigma =const\geq 0,  \label{CC400}
\end{equation}
where $\mathbf{D}_{i+0.5}^{n}=0.5\left[ \left( \mathbf{A}_{i}^{n}\right)
^{2}+\left( \mathbf{A}_{i+1}^{n}\right) ^{2}\right] $, the parameter $%
\varsigma $ is introduced by analogy with $\varkappa $ in (\ref{CC260}).
Scheme (\ref{CC400}) coincides with the scheme MAC1, (\ref{CC290}), provided
that $\varsigma =0$. The last term in right-hand side of (\ref{CC400}) can
be seen as the non-negative numerical viscosity introduced into the first
order scheme (\ref{CC290}). Thus, we are dealing with the vanishing
viscosity method \cite{Godlewski and Raviart 1996}, \cite{LeVeque 2002} and,
hence, in view of \cite[Theorem 3.3]{Godlewski and Raviart 1996}, the
scheme, (\ref{CC400}), satisfies the entropy condition. To make this point a
little bit more clear, we consider a viscous perturbation of the system (\ref
{C10}). We associate with (\ref{C10}) the following parabolic system 
\begin{equation}
\frac{\partial \mathbf{u}_{\epsilon }}{\partial t}+\frac{\partial }{\partial
x}\mathbf{f}\left( \mathbf{u}_{\epsilon }\right) =\epsilon \frac{\partial }{%
\partial x}\left( \mathbf{A}^{2}\cdot \frac{\partial \mathbf{u}_{\epsilon }}{%
\partial x}\right) ,\quad \epsilon >0,  \label{SA40}
\end{equation}
where the right-hand side can be viewed as a viscosity term. It is assumed
that Mock's assumption of admissibility \cite[p. 32]{Godlewski and Raviart
1996} is valid, i.e. the matrix $U^{\prime \prime }\cdot \mathbf{A}^{2}$ is
positive-definite. Here $U=U\left( \mathbf{u}_{\epsilon }\right) $ denotes a
strictly convex entropy, $U^{\prime \prime }$ denotes the Hessian matrix of $%
U$. We recover solutions of (\ref{C10}) as the limits of solutions to (\ref
{SA40}) as $\epsilon \rightarrow 0$. If we divide the scheme COS1, (\ref
{CC290}), by $\Delta t$ and group all the terms of this scheme together in
its left-hand side, then this group, in view of the above, is the finite
difference approximation of the left-hand side of Eq. (\ref{SA40}) on the
cell $\left[ x_{i-1},x_{i+1}\right] \times \left[ t_{n},t_{n+1}\right] $.
The right-hand side of Eq. (\ref{SA40}) is approximated as following 
\begin{equation}
\epsilon \frac{\partial }{\partial x}\left( \mathbf{A}^{2}\cdot \frac{%
\partial \mathbf{u}_{\epsilon }}{\partial x}\right) \approx \epsilon \frac{%
\mathbf{D}_{i+0.5}^{n}\cdot \left( \mathbf{v}_{i+1}^{n}-\mathbf{v}%
_{i}^{n}\right) -\mathbf{D}_{i-0.5}^{n}\cdot \left( \mathbf{v}_{i}^{n}-%
\mathbf{v}_{i-1}^{n}\right) }{\left( \Delta x\right) ^{2}},  \label{SA42}
\end{equation}
where $\mathbf{D}_{i\pm 0.5}^{n}=0.5\left[ \left( \mathbf{A}_{i}^{n}\right)
^{2}+\left( \mathbf{A}_{i\pm 1}^{n}\right) ^{2}\right] $. If $\epsilon
=\varsigma \Delta t\diagup 2$, then we obtain the scheme (\ref{CC400}) as
the approximation of Eq. (\ref{SA40}) on the cell $\left[ x_{i-1},x_{i+1}%
\right] \times \left[ t_{n},t_{n+1}\right] $. Thus, owing to the last term
in the right-hand side of (\ref{CC400}), the grid function $\mathbf{v}%
_{i}^{n}$ in (\ref{CC400}) can be considered as Lipschitz-continuous.
Moreover, owing to this term, the scheme (\ref{CC400}) is $O(\left( \Delta
x\right) ^{2}+\left( \Delta t\right) ^{2})$ accurate provided that $%
\varkappa =\varsigma =1$. The nonstaggered central scheme (\ref{CC400}),
being a monotone approximation of the 1-D equation (\ref{C10}) with the
second order, will be abbreviated to MAC2.

\subsection{Stability of the developed schemes\label%
{Stability of the developed schemes}}

It is proven in \cite{Borisov and Mond 2010b} that the scheme COS2, (\ref
{SA30}), will be, in general, stable if 
\begin{equation}
\left| \varkappa -\varsigma C_{r}^{2}\right| \aleph +C_{r}\leq 1,\quad C_{r}=%
\frac{\Delta t\lambda _{\max }}{\Delta x}.  \label{SSA10}
\end{equation}
Let us note that the stability condition (\ref{SSA10}) is not exactly
correct, namely, there exist infrequent situations when (\ref{SSA10}) is
valid nevertheless the scheme is not stable provided $\varkappa ,\varsigma
,\aleph =const$. In particular, there could be a grid node $(x_{\ast
},t_{\ast \ast })$ where the following inequality must be, but not valid: 
\begin{equation}
\varsigma \aleph C_{r}^{2}+C_{r}\leq 1,\quad C_{r}=\frac{\Delta t\lambda
_{\max }}{\Delta x}.  \label{SSA20}
\end{equation}
However, there is a possibility to take the parameters $\varkappa =\varkappa
(x,t),$ $\varsigma =\varsigma (x,t),$ and $\aleph =\aleph (x,t)$ such that
the scheme COS2, (\ref{SA30}), will be stable under (\ref{SSA10}). For
instance, if we take $\varkappa ,\varsigma ,C_{r}=1$, $\aleph =0.5$ at $%
(x,t)\neq (x_{\ast },t_{\ast \ast })$ and $\aleph =0$ at $(x,t)=(x_{\ast
},t_{\ast \ast })$, then (\ref{SSA20}) will be also valid.

In this subsection we will use analogous technic \cite{Borisov and Mond
2010b} to prove the stability of the developed nonstaggered central schemes.
In view of Theorem \ref{iffStab}, the stability of the scheme MAC2, (\ref
{CC400}), will be investigated on the basis of its variational scheme.\ It
is assumed that the bounded operator $\mathbf{A}$ $(=\partial \mathbf{f}%
\left( \mathbf{u}\right) \diagup \partial \mathbf{u)}$ in (\ref{C10}) is
Fr\'{e}chet-differentiable on the set $\Omega _{\mathbf{u}}$ $\subset $ $%
\mathbb{R}^{M}$, and its derivative is bounded on $\Omega _{\mathbf{u}}$.
Considering that $\mathbf{v}_{i}^{n}$ in (\ref{SA30}) is
Lipschitz-continuous, we write 
\begin{equation}
\left\| \mathbf{v}_{i+1}^{n}-\mathbf{v}_{i}^{n}\right\| _{2}\leq C_{v}\Delta
x,\quad C_{v}=const.  \label{SB40}
\end{equation}
By virtue of (\ref{C80}), the second term in right-hand side of (\ref{CC400}%
) can be written in the form 
\begin{equation}
\varkappa \frac{\Delta x}{8}\left( \mathbf{d}_{i+1}^{n}-\mathbf{d}%
_{i-1}^{n}\right) =\frac{\varkappa }{8}\left[ \mathbb{B}_{i}^{n}\cdot \left( 
\mathbf{v}_{i+1}^{n}-\mathbf{v}_{i}^{n}\right) -\mathbb{A}_{i-1}^{n}\cdot
\left( \mathbf{v}_{i}^{n}-\mathbf{v}_{i-1}^{n}\right) \right] .  \label{SB10}
\end{equation}
Then, the variational scheme corresponding to (\ref{CC400}) is the following 
\begin{equation*}
\delta \mathbf{v}_{i}^{n+1}-0.25\left( \delta \mathbf{v}_{i+1}^{n}+2\delta 
\mathbf{v}_{i}^{n}+\delta \mathbf{v}_{i-1}^{n}\right)
\end{equation*}
\begin{equation*}
+\frac{\Delta t}{2\Delta x}\left( \mathbf{A}_{i+1}^{n}\cdot \delta \mathbf{v}%
_{i+1}^{n}-\mathbf{A}_{i-1}^{n}\cdot \delta \mathbf{v}_{i-1}^{n}\right)
\end{equation*}
\begin{equation*}
+\frac{\varkappa }{8}\left[ \mathbb{B}_{i}^{n}\cdot \left( \delta \mathbf{v}%
_{i+1}^{n}-\delta \mathbf{v}_{i}^{n}\right) -\mathbb{A}_{i-1}^{n}\cdot
\left( \delta \mathbf{v}_{i}^{n}-\delta \mathbf{v}_{i-1}^{n}\right) \right]
\end{equation*}
\begin{equation*}
-\varsigma \frac{\left( \Delta t\right) ^{2}}{2\Delta x^{2}}\left[ \mathbf{D}%
_{i+0.5}^{n}\cdot \left( \delta \mathbf{v}_{i+1}^{n}-\delta \mathbf{v}%
_{i}^{n}\right) -\mathbf{D}_{i-0.5}^{n}\cdot \left( \delta \mathbf{v}%
_{i}^{n}-\delta \mathbf{v}_{i-1}^{n}\right) \right] =
\end{equation*}
\begin{equation*}
+\varsigma \frac{\left( \Delta t\right) ^{2}}{2\Delta x^{2}}\left[ \left(
\delta \mathbf{D}_{i+0.5}^{n}\right) \cdot \left( \mathbf{v}_{i+1}^{n}-%
\mathbf{v}_{i}^{n}\right) -\left( \delta \mathbf{D}_{i-0.5}^{n}\right) \cdot
\left( \mathbf{v}_{i}^{n}-\mathbf{v}_{i-1}^{n}\right) \right]
\end{equation*}
\begin{equation}
-\frac{\varkappa }{8}\left[ \left( \delta \mathbb{B}_{i}^{n}\right) \cdot
\left( \mathbf{v}_{i+1}^{n}-\mathbf{v}_{i}^{n}\right) -\left( \delta \mathbb{%
A}_{i-1}^{n}\right) \cdot \left( \mathbf{v}_{i}^{n}-\mathbf{v}%
_{i-1}^{n}\right) \right] .  \label{SSB20}
\end{equation}
By virtue of (\ref{CC230}) and (\ref{IN80}), we find 
\begin{equation}
\left\| \delta \mathbb{A}_{i}^{n}\right\| _{2},\left\| \delta \mathbb{B}%
_{i}^{n}\right\| _{2}\leq 4\aleph ,\quad \left\| \left( \mathbf{A}%
_{i}^{n}\right) ^{2}\right\| _{2},\left\| \left( \mathbf{A}_{i+1}^{n}\right)
^{2}\right\| _{2}\leq \lambda _{\max }^{2}.  \label{SSB30}
\end{equation}
Thus, we may write that 
\begin{equation}
\left\| \delta \mathbf{D}_{i+0.5}^{n}\right\| _{2},\left\| \delta \left[ 
\mathbf{D}_{i-0.5}^{n}\right] \right\| _{2}\leq 2\lambda _{\max }^{2}.
\label{SSB40}
\end{equation}
Then, by virtue of (\ref{CA05}), (\ref{SB40}), (\ref{SSB30})-(\ref{SSB40}),
and since $0\leq $ $\varkappa ,\varsigma ,\aleph $ $\leq 1$, we find the
following estimate for the right-hand side of (\ref{SSB20}): 
\begin{equation*}
\varsigma \frac{\left( \Delta t\right) ^{2}}{2\Delta x^{2}}\left\| \left(
\delta \mathbf{D}_{i+0.5}^{n}\right) \cdot \left( \mathbf{v}_{i+1}^{n}-%
\mathbf{v}_{i}^{n}\right) -\left( \delta \mathbf{D}_{i-0.5}^{n}\right) \cdot
\left( \mathbf{v}_{i}^{n}-\mathbf{v}_{i-1}^{n}\right) \right\| _{2}
\end{equation*}
\begin{equation*}
+\left\| \frac{\varkappa }{8}\left[ \left( \delta \mathbb{B}_{i}^{n}\right)
\cdot \left( \mathbf{v}_{i+1}^{n}-\mathbf{v}_{i}^{n}\right) -\left( \delta 
\mathbb{A}_{i-1}^{n}\right) \cdot \left( \mathbf{v}_{i}^{n}-\mathbf{v}%
_{i-1}^{n}\right) \right] \right\| _{2}\leq
\end{equation*}
\begin{equation*}
\varsigma \frac{\left( \Delta t\right) ^{2}}{2\Delta x^{2}}\left( \left\|
\delta \mathbf{D}_{i+0.5}^{n}\right\| _{2}\left\| \mathbf{v}_{i+1}^{n}-%
\mathbf{v}_{i}^{n}\right\| _{2}+\left\| \delta \mathbf{D}_{i-0.5}^{n}\right%
\| _{2}\cdot \left\| \mathbf{v}_{i}^{n}-\mathbf{v}_{i-1}^{n}\right\|
_{2}\right)
\end{equation*}
\begin{equation*}
+\frac{1}{8}\left( \left\| \delta \mathbb{B}_{i}^{n}\right\| _{2}\cdot
\left\| \mathbf{v}_{i+1}^{n}-\mathbf{v}_{i}^{n}\right\| _{2}+\left\| \delta 
\mathbb{A}_{i-1}^{n}\right\| _{2}\cdot \left\| \mathbf{v}_{i}^{n}-\mathbf{v}%
_{i-1}^{n}\right\| _{2}\right) \leq
\end{equation*}
\begin{equation}
\left( 1+2C_{r}^{2}\right) C_{v}\mu _{0}\Delta t,\quad C_{r}=\frac{\Delta
t\lambda _{\max }}{\Delta x}.  \label{SSB50}
\end{equation}
Since the uniform stability with respect to the initial data implies the
stability of scheme \cite[pp. 390-392]{Samarskii 2001} (see also \cite[Sec.
5]{Borisov and Mond 2010b}), we conclude, in view of (\ref{SSB50}), that the
scheme (\ref{SSB20}) will be stable if the following scheme (i.e. (\ref
{SSB20}) without the right-hand side) will be stable 
\begin{equation*}
\delta \mathbf{v}_{i}^{n+1}=0.25\left( \delta \mathbf{v}_{i+1}^{n}+2\delta 
\mathbf{v}_{i}^{n}+\delta \mathbf{v}_{i-1}^{n}\right) -\frac{\Delta t}{%
2\Delta x}\left( \mathbf{A}_{i+1}^{n}\cdot \delta \mathbf{v}_{i+1}^{n}-%
\mathbf{A}_{i-1}^{n}\cdot \delta \mathbf{v}_{i-1}^{n}\right)
\end{equation*}
\begin{equation*}
+\varsigma \frac{\left( \Delta t\right) ^{2}}{2\Delta x^{2}}\left[ \mathbf{D}%
_{i+0.5}^{n}\cdot \left( \delta \mathbf{v}_{i+1}^{n}-\delta \mathbf{v}%
_{i}^{n}\right) -\mathbf{D}_{i-0.5}^{n}\cdot \left( \delta \mathbf{v}%
_{i}^{n}-\delta \mathbf{v}_{i-1}^{n}\right) \right]
\end{equation*}
\begin{equation}
-\frac{\varkappa }{8}\left[ \mathbb{B}_{i}^{n}\cdot \left( \delta \mathbf{v}%
_{i+1}^{n}-\delta \mathbf{v}_{i}^{n}\right) -\mathbb{A}_{i-1}^{n}\cdot
\left( \delta \mathbf{v}_{i}^{n}-\delta \mathbf{v}_{i-1}^{n}\right) \right] .
\label{SSB60}
\end{equation}
Since the bounded operator $\mathbf{A}$ $(=\partial \mathbf{f}\left( \mathbf{%
u}\right) \diagup \partial \mathbf{u)}$ in (\ref{C10}) is
Fr\'{e}chet-differentiable on the set $\Omega _{\mathbf{u}}$ $\subset $ $%
\mathbb{R}^{M}$, and its derivative is bounded on $\Omega _{\mathbf{u}}$,
the operator $\mathbf{A}$ is Lipschitz. In view of (\ref{SB40}), we can
write that $\left\| (\Delta t\diagup \Delta x)\left( \mathbf{A}_{i-1}^{n}-%
\mathbf{A}_{i+1}^{n}\right) \cdot \delta \mathbf{v}_{i}^{n}\right\| \leq
C_{a}\Delta t\left\| \delta \mathbf{v}_{i}^{n}\right\| $, $C_{a}=const$.
Thus, adding the term $(\Delta t\diagup \Delta x)\left( \mathbf{A}_{i-1}^{n}-%
\mathbf{A}_{i+1}^{n}\right) \cdot \delta \mathbf{v}_{i}^{n}$ to the
right-hand side of (\ref{SSB60}) has no effect on the stability of (\ref
{SSB60}). To investigate the stability of (\ref{SSB60}) we rewrite it in the
following form, where the term $0.5(\Delta t\diagup \Delta x)\left( \mathbf{A%
}_{i-1}^{n}-\mathbf{A}_{i+1}^{n}\right) \cdot \delta \mathbf{v}_{i}^{n}$ is
added. 
\begin{equation*}
\delta \mathbf{v}_{i}^{n+1}=\left( 0.25\mathbf{I-E}_{i,i-1}^{n}\right) \cdot
\delta \mathbf{v}_{i-1}^{n}
\end{equation*}
\begin{equation}
+\left( 0.5\mathbf{I}+\mathbf{E}_{i,i-1}^{n}+\mathbf{E}_{i,i+1}^{n}\right)
\cdot \delta \mathbf{v}_{i}^{n}+\left( 0.25\mathbf{I-E}_{i,i+1}^{n}\right)
\cdot \delta \mathbf{v}_{i+1}^{n},  \label{SSB80}
\end{equation}
where 
\begin{equation}
\mathbf{E}_{i,i-1}^{n}=\frac{\varkappa }{8}\mathbb{A}_{i-1}^{n}-\varsigma 
\frac{\left( \Delta t\right) ^{2}}{2\Delta x^{2}}\mathbf{D}_{i-0.5}^{n}-%
\frac{\Delta t}{2\Delta x}\mathbf{A}_{i-1}^{n},  \label{SSB90}
\end{equation}
\begin{equation}
\mathbf{E}_{i,i+1}^{n}=\frac{\varkappa }{8}\mathbb{B}_{i}^{n}-\varsigma 
\frac{\left( \Delta t\right) ^{2}}{2\Delta x^{2}}\mathbf{D}_{i+0.5}^{n}+%
\frac{\Delta t}{2\Delta x}\mathbf{A}_{i+1}^{n}.  \label{SSB100}
\end{equation}
We write, in view of (\ref{IN80}) and (\ref{CC230}), that the spectrum $%
s\left( \mathbf{E}_{i,j}^{n}\right) \subseteq \left[ L_{\min ,i}^{n},L_{\max
,i}^{n}\right] $, where $j=i\pm 1$ and 
\begin{equation}
L_{\min ,i}^{n}\geq -0.5\varsigma C_{r}^{2}-0.5C_{r},\ L_{\max ,i}^{n}\leq
0.5\varkappa \aleph -0.25\varsigma C_{r}^{2}+0.5C_{r}.  \label{SSB110}
\end{equation}
Hence, by virtue of \cite[Theorem 2.10]{Borisov and Mond 2010a} we find that
the scheme (\ref{SSB80}) will be stable if 
\begin{equation}
\underset{\lambda \in \left[ L_{\min ,i}^{n},L_{\max ,i}^{n}\right] }{\max }%
\left( \left| 0.25-{\lambda }\right| +\left| 0.25+{\lambda }\right| \right)
\leqslant 0.5,\quad \forall i,n.  \label{SSB120}
\end{equation}
We obtain from (\ref{SSB120}) the following sufficient conditions for the
stability of the variational scheme (\ref{SSB20}) 
\begin{equation}
-0.5\varsigma C_{r}^{2}-0.5C_{r}\geq -0.25,  \label{SSB125}
\end{equation}
\begin{equation}
0.5\varkappa \aleph -0.25\varsigma C_{r}^{2}+0.5C_{r}\leq 0.25,\quad C_{r}=%
\frac{\Delta t\lambda _{\max }}{\Delta x}.  \label{SSB130}
\end{equation}
Let us note that the value of $L_{\max ,i}^{n}$ in (\ref{SSB110}) can rarely
be attained. It could be the case at the point $(x_{i},t_{n})$ only if $%
\left\| \mathbb{B}_{i}^{n}\right\| _{2}=4\aleph $, $\left\| \mathbf{A}%
_{i+1}^{n}\right\| _{2}=\lambda _{\max }$, and $\left\| \mathbf{A}%
_{i}^{n}\right\| _{2}=0$ (or, what is the same, $\left\| \mathbb{A}%
_{i-1}^{n}\right\| _{2}=4\aleph $, $\left\| \mathbf{A}_{i-1}^{n}\right\|
_{2}=\lambda _{\max }$, and $\left\| \mathbf{A}_{i}^{n}\right\| _{2}=0$).
Moreover, considering the modification of the scheme (\ref{SSB80}), where $%
\mathbf{D}_{i-0.5}^{n}$ is replaced by $(\mathbf{A}_{i-1}^{n})^{2}$, namely 
\begin{equation}
\mathbf{E}_{i,i-1}^{n}=\frac{\varkappa }{8}\mathbb{A}_{i-1}^{n}-\varsigma 
\frac{\left( \Delta t\right) ^{2}}{2\Delta x^{2}}(\mathbf{A}_{i-1}^{n})^{2}-%
\frac{\Delta t}{2\Delta x}\mathbf{A}_{i-1}^{n},  \label{SSB90a}
\end{equation}
\begin{equation}
\mathbf{E}_{i,i+1}^{n}=\frac{\varkappa }{8}\mathbb{B}_{i}^{n}-\varsigma 
\frac{\left( \Delta t\right) ^{2}}{2\Delta x^{2}}(\mathbf{A}_{i+1}^{n})^{2}+%
\frac{\Delta t}{2\Delta x}\mathbf{A}_{i+1}^{n},  \label{SSB100a}
\end{equation}
we note (see Proof in \cite[Theorem 2.10]{Borisov and Mond 2010a}) that the
stability of the modified scheme, (\ref{SSB80}), (\ref{SSB90a})-(\ref
{SSB100a}), implies the stability of the scheme (\ref{SSB80})-(\ref{SSB100}%
), since the operator $\mathbf{A}$ is Lipschitz. Hence, instead of (\ref
{SSB130}), we will consider less rigid requirement: 
\begin{equation}
0.5\varkappa \aleph -0.5\varsigma C_{r}^{2}+0.5C_{r}\leq 0.25.
\label{SSB131}
\end{equation}
Let $\varsigma =1$, then, by virtue of (\ref{SSB125}) and (\ref{SSB131}), we
find the stability condition for the variational scheme (\ref{SSB20}): 
\begin{equation}
C_{r}\leq 0.5\left( \sqrt{3}-1\right) ,\quad \varkappa \aleph \leq 2-\sqrt{3}%
.  \label{SSB132}
\end{equation}
Thus, in view of Theorem \ref{iffStab}, the scheme MAC2, (\ref{CC400}), will
be stable if (\ref{SSB132}) will be valid.

Notice, if $\varsigma \rightarrow 0$ in (\ref{SSB125}) and (\ref{SSB130}),
then we obtain the stability condition for the scheme MAC1, (\ref{CC290}): 
\begin{equation}
\varkappa \aleph +C_{r}\leq 0.5,\quad C_{r}=\frac{\Delta t\lambda _{\max }}{%
\Delta x}.  \label{SSB135}
\end{equation}

It is significant that the stability conditions (\ref{SSB125}), (\ref{SSB131}%
), and hence (\ref{SSB132}), is found by virtue of \cite[Theorem 2.10]
{Borisov and Mond 2010a} and, therefore, it is assumed that $\mathbf{E}%
_{i,j}^{n}$ is Lipschitz, i.e. $\mathbf{E}_{i,i+1}^{n}=\mathbf{E}%
_{i,i-1}^{n}+O(\Delta x)$. For the sake of brevity, we associate the Landau
symbol $O(\Delta x)$ with Lipschitz-continuity of a grid function whose
increment can be estimated in norm by a constant times $\Delta x$ for $%
\Delta x$ small enough. Note that $\mathbf{E}_{i,j}^{n}$ is Lipschitz if $%
\mathbf{A}_{i}^{n}$, $\mathbb{A}_{i}^{n}$, and $\mathbb{B}_{i}^{n}$ are
Lipschitz. It is easy to see that $\mathbf{A}_{i}^{n}$ will be Lipschitz if $%
\mathbf{u}_{\epsilon }^{n}\left( x\right) $ (the solution to (\ref{SA40}) at 
$t=t_{n}$) will be Lipschitz-continuous. Actually, since the bounded
operator $\mathbf{A}\left( \mathbf{u}_{\epsilon }\right) $ $(\left.
=(\partial \mathbf{f}\diagup \partial \mathbf{u)}\right| _{\mathbf{u}=%
\mathbf{u}_{\epsilon }}\mathbf{)}$ is Fr\'{e}chet-differentiable on the set $%
\Omega _{\mathbf{u}}$ $\subset $ $\mathbb{R}^{M}$, and its derivative is
bounded on $\Omega _{\mathbf{u}}$, we write: $\left\| \mathbf{A}_{i+1}^{n}-%
\mathbf{A}_{i}^{n}\right\| $ $\leq $ $C_{A}\left\| \mathbf{u}_{\epsilon
,i+1}^{n}-\mathbf{u}_{\epsilon ,i}^{n}\right\| $ $\leq $ $C_{A}C_{u}\Delta x$%
, where $C_{A}$, $C_{u}$ $=$ $const$. We may guarantee that $\mathbb{A}%
_{i}^{n}$ and $\mathbb{B}_{i}^{n}$ will be Lipschitz if $\mathbf{u}%
_{\epsilon }^{n}\left( x\right) $ will be differentiable. Actually, we may
write, by virtue of (\ref{C80}), that $\mathbb{A}_{i}=\mathbf{I}+O(\Delta x)$%
, and $\mathbb{B}_{i}=\mathbf{I}+O(\Delta x)$. Hence, Lipschitz-continuity
of $\mathbb{A}_{i}^{n}$ and $\mathbb{B}_{i}^{n}$ is obvious. However, if $%
\mathbf{u}_{\epsilon }^{n}\left( x\right) $ is only Lipschitz-continuous,
then we can not guarantee that $\mathbb{A}_{i}^{n}$ and $\mathbb{B}_{i}^{n}$
are also Lipschitz-continuous. In such a case a generalization of 
\cite[Theorem 2.10]{Borisov and Mond 2010a} can be useful, namely the case
when the scheme coefficients depend on several matrices. Then we obtain that
(\ref{SSB80}), (\ref{SSB90a})-(\ref{SSB100a}) will be stable if 
\begin{equation}
\underset{\lambda _{1},\lambda _{2}\in \left[ L_{\min ,i}^{n},L_{\max ,i}^{n}%
\right] }{\max }\left( \left| 0.25{-\lambda }_{1}\right| +\left| 0.25{%
-\lambda }_{2}\right| +\left| 0.5{+\lambda }_{1}+{\lambda }_{2}\right|
\right) \leqslant 1,\quad \forall i,n.  \label{SSB140}
\end{equation}
By virtue of (\ref{SSB140}), we find that the scheme MAC2, (\ref{CC400}),
will be stable under the same conditions (\ref{SSB125}), (\ref{SSB131}),
and, hence, under (\ref{SSB132}).

Let us note that the stability conditions, (\ref{SSB132}), are, in general,
apt to be unduly rigid. Assuming that $\aleph =\aleph (x_{i},t_{n})$ as well
as $C_{r}=C_{r}(t_{n})$, and using Weyl's inequalities (see, e.g., 
\cite[Chap. 3]{Horn and Johnson 1991}), we can find less rigid stability
conditions than (\ref{SSB132}). Let $\lambda _{i}^{n,k}$ denote the $k-th$
singular value of $\mathbf{A}_{i}^{n}$, $\lambda _{\max ,i}^{n}=\underset{k}{%
\max }\lambda _{i}^{n,k}$, $\lambda _{\max }^{n}=\underset{i}{\max }\lambda
_{\max ,i}^{n}$, and let $\alpha _{i}^{n,k}$ denote the $k-th$ singular
value of $\mathbb{A}_{i}^{n}$, $\alpha _{\min ,i}^{n}=\underset{k}{\min }%
\alpha _{i}^{n,k}$. Then, given the pre-assigned $\aleph $ $=$ $\widehat{%
\aleph }$ $\equiv $ $const$ and $C_{r}$ $=$ $\widehat{C}_{r}$ $\equiv $ $%
const\leq 0.5$, we find the sought-after stability conditions at $t=t_{n}$,
employing the following two-step procedure.

\begin{algorithm}
\label{Algorithm01}
\end{algorithm}

Step 1. Given $C_{r}$ $=$ $\widehat{C}_{r}$ $\leq $ $0.5$ and $\tau
^{n}=\Delta x\widehat{C}_{r}\diagup \lambda _{\max }^{n}$, we, by virtue of
Weyl's inequalities \cite[Chap. 3]{Horn and Johnson 1991}, estimate $L_{\max
,i}^{n}$: 
\begin{equation}
L_{\max ,i}^{n}\leq 0.5\varkappa \aleph _{i}^{n}-0.5\left( \frac{\tau ^{n}}{%
\Delta x}\right) ^{2}\left( \lambda _{\max ,i}^{n}\right) ^{2}+0.5\frac{\tau
^{n}}{\Delta x}\lambda _{\max ,i}^{n}.  \label{SSB150}
\end{equation}
In view of (\ref{SSB140}) and (\ref{SSB150}), we find 
\begin{equation}
\aleph _{i}^{n}=\min \left[ \widehat{\aleph },\left( \frac{\tau ^{n}}{\Delta
x}\right) ^{2}\frac{\left( \lambda _{\max ,i}^{n}\right) ^{2}}{\varkappa }-%
\frac{\tau ^{n}}{\varkappa \Delta x}\lambda _{\max ,i}^{n}+\frac{1}{%
2\varkappa }\right] .  \label{SSB160}
\end{equation}

Step 2.Given $\aleph _{i}^{n}$, (\ref{SSB160}), we find the derivative $%
\mathbf{d}_{i}^{n}$ using (\ref{CC240}) and, by virtue of (\ref{C80}), the
diagonal matrix $\mathbb{A}_{i}^{n}$, and hence $\alpha _{\min ,i}^{n}$. We,
by virtue of Weyl's inequalities \cite[Chap. 3]{Horn and Johnson 1991},
estimate $L_{\min ,i}^{n}$: 
\begin{equation}
L_{\min ,i}^{n}\geq \frac{\varkappa \alpha _{\min ,i}^{n}}{8}-0.5\left( 
\frac{\tau ^{n}}{\Delta x}\right) ^{2}\left( \lambda _{\max ,i}^{n}\right)
^{2}-0.5\frac{\tau ^{n}}{\Delta x}\lambda _{\max ,i}^{n}.  \label{SSB170}
\end{equation}
Let $\tau _{i}^{n}$ be used instead of $\tau ^{n}$ in (\ref{SSB170}). In
view of (\ref{SSB140}) and (\ref{SSB170}), the following inequality must be
valid. 
\begin{equation}
\frac{\varkappa \alpha _{\min ,i}^{n}}{4}-\left( \frac{\tau _{i}^{n}}{\Delta
x}\right) ^{2}\left( \lambda _{\max ,i}^{n}\right) ^{2}-\frac{\tau _{i}^{n}}{%
\Delta x}\lambda _{\max ,i}^{n}\geq -0.5.  \label{SSB180}
\end{equation}
We obtain from (\ref{SSB180}): 
\begin{equation}
\tau _{i}^{n}\frac{\lambda _{\max ,i}^{n}}{\Delta x}\leq \frac{\sqrt{%
3+\varkappa \alpha _{\min ,i}^{n}}-1}{2}.  \label{SSB185}
\end{equation}

Thus, given $\tau ^{n}$ ($=\Delta x\widehat{C}_{r}\diagup \lambda _{\max
}^{n}$) and $\tau _{i}^{n}$, (\ref{SSB185}), we find the time increment, $%
\Delta t^{n}$, at $t=t_{n}$: 
\begin{equation}
\Delta t^{n}=\min \left( \tau ^{n},\underset{i}{\min }\tau _{i}^{n}\right) \
\Rightarrow \ C_{r}^{n}=\frac{\Delta t^{n}\lambda _{\max }^{n}}{\Delta x}.
\label{SSB200}
\end{equation}

\section{Operator splitting schemes\label{OSS}}

By virtue of the operator-splitting idea \cite{Bereux and Sainsaulieu 1997}, 
\cite{Du Tao et al. 2003}, \cite{Gosse L. 2000}, \cite{LeVeque 2002} (see
also LOS in \cite{Samarskii 2001}), the following chain of equations
corresponds to the problem (\ref{INA10}) 
\begin{equation}
\frac{1}{2}\frac{\partial \mathbf{U}}{\partial t}=\frac{1}{\tau }\mathbf{q}%
\left( \mathbf{U}\right) ,\quad t_{n}<t\leq t_{n+0.5},\quad \mathbf{U}\left( 
\mathbf{x},t_{n}\right) =\mathbf{U}^{n}\left( \mathbf{x}\right) ,
\label{OS20}
\end{equation}
\begin{equation}
\frac{1}{2}\frac{\partial \mathbf{U}}{\partial t}+\sum_{j=1}^{N}\frac{%
\partial }{\partial x_{j}}\mathbf{f}_{j}\left( \mathbf{U}\right)
=0,\,t_{n+0.5}<t\leq t_{n+1},\,\mathbf{U}\left( \mathbf{x},t_{n+0.5}\right) =%
\mathbf{U}^{n+0.5}\left( \mathbf{x}\right) ,  \label{OS10}
\end{equation}
where $\mathbf{U}^{n}\left( \mathbf{x}\right) $ denotes the solution to (\ref
{OS10}) at $t=t_{n}$, $\mathbf{U}^{n+0.5}\left( \mathbf{x}\right) $ denotes
the solution to (\ref{OS20}) at $t=t_{n+0.5}$. If a high-resolution method
is used directly for the homogeneous conservation law (\ref{OS10}), then it
is natural to use a high-order scheme for (\ref{OS20}). As applied to, in
general, stiff ($\tau \ll 1$) System (\ref{INA10}), the second order schemes
can be constructed on the basis of operator-splitting techniques with ease
if (\ref{OS20}) will be approximated by an implicit scheme and (\ref{OS10})
by an explicit one, see Proposition 4.2 in \cite{Borisov and Mond 2010a}. As
an example, let us develop a central scheme for a 1-D version of (\ref{INA10}%
). After operator-splitting, the 1-D equation can be represented by the
chain of equations, namely (\ref{OS20}) and 
\begin{equation}
\frac{1}{2}\frac{\partial \mathbf{U}}{\partial t}+\frac{\partial }{\partial x%
}\mathbf{f}\left( \mathbf{U}\right) =0,\ t_{n+0.5}<t\leq t_{n+1},\ \mathbf{U}%
\left( x,t_{n+0.5}\right) =\mathbf{U}^{n+0.5}\left( x\right) .  \label{OS40}
\end{equation}
Let us first consider the case when the following first-order implicit
scheme be used for (\ref{OS20}) 
\begin{equation}
\mathbf{v}_{i}^{n+0.5}=\mathbf{v}_{i}^{n}+\frac{\Delta t}{\tau }\mathbf{q}%
\left( \mathbf{v}_{i}^{n+0.5}\right) ,  \label{OS30}
\end{equation}
and a central scheme with nonstaggered grid cells will be used for (\ref
{OS40}). We rewrite the scheme MAC1, (\ref{CC290}), to read 
\begin{equation*}
\mathbf{v}_{i}^{n+1}=0.25\left( \mathbf{v}_{i-1}^{n+0.5}+2\mathbf{v}%
_{i}^{n+0.5}+\mathbf{v}_{i+1}^{n+0.5}\right) -\varkappa \frac{\Delta x}{8}%
\left( \mathbf{d}_{i+1}^{n+0.5}-\mathbf{d}_{i-1}^{n+0.5}\right)
\end{equation*}
\begin{equation}
-\frac{\Delta t}{2\Delta x}\left( \mathbf{f}_{i+1}^{n+0.5}-\mathbf{f}%
_{i-1}^{n+0.5}\right) ,\quad \varkappa =const,\ 0\leq \varkappa \leq 1,
\label{OS60}
\end{equation}
where $\mathbf{d}_{i}^{n+0.5}$ denotes the derivative of interpolant at $%
x=x_{i}$. It is clear that the scheme (\ref{OS60}) approximates (\ref{OS40})
with the accuracy $O(\Delta t+\left( \Delta x\right) ^{2})$, however, in
view of Proposition 4.2 in \cite{Borisov and Mond 2010a}, the scheme (\ref
{OS30})-(\ref{OS60}), taken as a whole, is of the second order approximation
for the 1-D version of (\ref{INA10}).

Let us develop another nonstaggered central scheme approximating a 1-D
version of (\ref{INA10}) with the accuracy $O(\left( \Delta t\right)
^{2}+\left( \Delta x\right) ^{2})$ and such that its components (after
operator splitting) will be of the second order. It can be done on the basis
of the second order scheme (\ref{OS30}), (\ref{OS60}) with ease. Notice,
adding to and subtracting from Equation (4.10) in \cite{Borisov and Mond
2010a} (rewritten for $t_{n}<t\leq t_{n+1}$) the same quantity is equivalent
to adding this quantity to (\ref{OS60}) and subtracting it from (\ref{OS30}%
). Let $0.125\left( \Delta t\right) ^{2}(\partial ^{2}\mathbf{U\diagup }%
\partial t^{2})_{i}^{n+0.5}$ be this quantity, then we obtain the following
scheme, instead of (\ref{OS30}), (\ref{OS60}), 
\begin{equation}
\mathbf{v}_{i}^{n+0.5}=\mathbf{v}_{i}^{n}+\frac{\Delta t}{\tau }\mathbf{q}%
_{i}^{n+0.5}-\frac{\left( \Delta t\right) ^{2}}{8}\left( \frac{\partial ^{2}%
\mathbf{U}}{\partial t^{2}}\right) _{i}^{n+0.5},  \label{SOS10}
\end{equation}
\begin{equation*}
\mathbf{v}_{i}^{n+1}=0.25\left( \mathbf{v}_{i-1}^{n+0.5}+2\mathbf{v}%
_{i}^{n+0.5}+\mathbf{v}_{i+1}^{n+0.5}\right) -\varkappa \frac{\Delta x}{8}%
\left( \mathbf{d}_{i+1}^{n+0.5}-\mathbf{d}_{i-1}^{n+0.5}\right)
\end{equation*}
\begin{equation}
-\frac{\Delta t}{2\Delta x}\left( \mathbf{f}_{i+1}^{n+0.5}-\mathbf{f}%
_{i-1}^{n+0.5}\right) +\frac{\left( \Delta t\right) ^{2}}{8}\left( \frac{%
\partial ^{2}\mathbf{U}}{\partial t^{2}}\right) _{i}^{n+0.5}.  \label{SOS20}
\end{equation}
Thus, the scheme (\ref{SOS10}) as well as the scheme (\ref{SOS20}) are of
the second order, and the scheme (\ref{SOS10})-(\ref{SOS20}), taken as a
whole, is of the second order as well. Using Taylor series expansion, and
central differencing, we find 
\begin{equation*}
\mathbf{v}_{i}^{n+0.25}=\mathbf{v}_{i}^{n+0.5}-\frac{\Delta t}{4}\left( 
\frac{\partial \mathbf{U}}{\partial t}\right) _{i}^{n+0.5}+
\end{equation*}
\begin{equation}
\frac{1}{2}\left( \frac{\Delta t}{4}\right) ^{2}\left( \frac{\partial ^{2}%
\mathbf{U}}{\partial t^{2}}\right) _{i}^{n+0.5}+O\left( \left( \Delta
t\right) ^{3}\right) ,  \label{SOS25}
\end{equation}
\begin{equation}
\mathbf{v}_{i}^{n+0.5}=\mathbf{v}_{i}^{n}+\frac{\Delta t}{2}\left( \frac{%
\partial \mathbf{U}}{\partial t}\right) _{i}^{n+0.25}+O\left( \left( \Delta
t\right) ^{3}\right) .  \label{SOS27}
\end{equation}
We obtain, by virtue of (\ref{SA20}), (\ref{OS40}), that 
\begin{equation}
\frac{\partial ^{2}\mathbf{U}}{\partial t^{2}}=-2\frac{\partial }{\partial t}%
\left( \frac{\partial \mathbf{f}}{\partial x}\right) =-2\frac{\partial }{%
\partial x}\left( \frac{\partial \mathbf{f}}{\partial t}\right) =4\frac{%
\partial }{\partial x}\left( \mathbf{A}^{2}\cdot \frac{\partial \mathbf{U}}{%
\partial x}\right) ,  \label{SOS40}
\end{equation}
where $\mathbf{A=}\partial \mathbf{f\diagup }\partial \mathbf{U}$. Then 
\begin{equation*}
\left[ \frac{\partial }{\partial x}\left( \mathbf{A}^{2}\cdot \frac{\partial 
\mathbf{U}}{\partial x}\right) \right] _{i}^{n+0.5}=\frac{1}{\Delta x}\left[ 
\mathbf{D}_{i+0.5}^{n+0.5}\cdot \frac{\mathbf{v}_{i+1}^{n+0.5}-\mathbf{v}%
_{i}^{n+0.5}}{\Delta x}\right.
\end{equation*}
\begin{equation}
-\left. \mathbf{D}_{i-0.5}^{n+0.5}\cdot \frac{\mathbf{v}_{i}^{n+0.5}-\mathbf{%
v}_{i-1}^{n+0.5}}{\Delta x}\right] +O\left( \left( \Delta x\right)
^{2}\right) ,  \label{SOS50}
\end{equation}
where $\mathbf{D}_{i+0.5}^{n+0.5}=0.5\left( (\mathbf{A}_{i+1}^{n+0.5})^{2}+(%
\mathbf{A}_{i}^{n+0.5})^{2}\right) $. By virtue of (\ref{OS20}), (\ref{SOS25}%
)-(\ref{SOS50}), we rewrite Scheme (\ref{SOS10})-(\ref{SOS20}) to read 
\begin{equation}
\mathbf{v}_{i}^{n+0.25}=\mathbf{v}_{i}^{n+0.5}-\frac{\Delta t}{4\tau }\left( 
\mathbf{q}_{i}^{n+0.25}+\mathbf{q}_{i}^{n+0.5}\right) ,  \label{SOS55}
\end{equation}
\begin{equation}
\mathbf{v}_{i}^{n+0.5}=\mathbf{v}_{i}^{n}+\frac{\Delta t}{\tau }\mathbf{q}%
_{i}^{n+0.25},  \label{SOS60}
\end{equation}
\begin{equation*}
\mathbf{v}_{i}^{n+1}=0.25\left( \mathbf{v}_{i-1}^{n+0.5}+2\mathbf{v}%
_{i}^{n+0.5}+\mathbf{v}_{i+1}^{n+0.5}\right) -\varkappa \frac{\Delta x}{8}%
\left( \mathbf{d}_{i+1}^{n+0.5}-\mathbf{d}_{i-1}^{n+0.5}\right)
\end{equation*}
\begin{equation*}
+\frac{\varsigma \left( \Delta t\right) ^{2}}{2\left( \Delta x\right) ^{2}}%
\left[ \mathbf{D}_{i+0.5}^{n+0.5}\cdot \left( \mathbf{v}_{i+1}^{n+0.5}-%
\mathbf{v}_{i}^{n+0.5}\right) \right. -\left. \mathbf{D}_{i-0.5}^{n+0.5}%
\cdot \left( \mathbf{v}_{i}^{n+0.5}-\mathbf{v}_{i-1}^{n+0.5}\right) \right]
\end{equation*}
\begin{equation}
-\frac{\Delta t}{2\Delta x}\left( \mathbf{f}_{i+1}^{n+0.5}-\mathbf{f}%
_{i-1}^{n+0.5}\right) .  \label{SOS70}
\end{equation}
If $\varsigma =0$, then (\ref{SOS70}) coincides with (\ref{OS60}), being $%
O(\Delta t+\left( \Delta x\right) ^{2})$ accurate. If $\varkappa =1$ and $%
\varsigma =1$, then Scheme (\ref{SOS55})-(\ref{SOS70}) approximates the 1-D
version of (\ref{INA10}) with the accuracy $O(\left( \Delta t\right)
^{2}+\left( \Delta x\right) ^{2})$. The scheme (\ref{SOS70}) coincides, in
fact, with the scheme MAC2, (\ref{CC400}), and hence the scheme (\ref{SOS70}%
) will be stable if (\ref{SSB132}) will be valid. Notice, the less rigid
conditions for the stability of the scheme (\ref{SOS70}) can be found by
means of Algorithm \ref{Algorithm01}.

Let us note that in practice (e.g., \cite{LeVeque 2002}, \cite{Samarskii
2001}) the operator-splitting techniques find a wide range of application in
designing economical schemes for Eq. (\ref{OS10}) in the domains of
complicated geometry. The resulting method, in general, will be only
first-order accurate in time because of the splitting \cite{LeVeque 2002}, 
\cite{Samarskii 2001}. Thus, in line with established practice we will
replace the multidimensional Eq. (\ref{OS10}) by the chain of the
one-dimensional equations: 
\begin{equation}
\frac{1}{2N}\frac{\partial \mathbf{U}_{j}}{\partial t}+\frac{\partial }{%
\partial x_{j}}\mathbf{f}_{j}\left( \mathbf{U}_{j}\right) =0,\
t_{n+0.5+(j-1)\diagup (2N)}<t\leq t_{n+0.5+j\diagup (2N)},  \label{SOS90}
\end{equation}
where$\ \mathbf{U}_{j}\left( x,t_{n+0.5+(j-1)\diagup (2N)}\right) =\mathbf{U}%
_{j-1}\left( x,t_{n+0.5+(j-1)\diagup (2N)}\right) $, $\ j=1,2,\ldots N$, $%
\mathbf{U}_{0}\left( x,t_{n+0.5}\right) $ denotes the solution to (\ref{OS20}%
) at $t=t_{n+0.5}$. Eq. (\ref{OS20}) will be approximated by the first-order
implicit scheme, (\ref{OS30}), or a second-order implicit Runge-Kutta
scheme. In particular, it can be used (\ref{SOS55})-(\ref{SOS60}) or the
following implicit Runge-Kutta scheme 
\begin{equation}
\mathbf{v}_{i}^{n+0.25}=\mathbf{v}_{i}^{n+0.5}-\frac{\Delta t}{2\tau }%
\mathbf{q}_{i}^{n+0.5},\ \mathbf{v}_{i}^{n+0.5}=\mathbf{v}_{i}^{n}+\frac{%
\Delta t}{\tau }\mathbf{q}_{i}^{n+0.25},  \label{SOS100}
\end{equation}
since these schemes possess a discrete analogy to the continuous asymptotic
limit (see, e.g., \cite{Jin Shi 1995}, \cite{Naldi and Pareschi 2000}, \cite
{Pareschi Lorenzo 2001}). Let us note that the scheme (\ref{SOS100}) is an
implicit analogue to the well known Runge-Kutta scheme, which is referred,
originally due to Runge, as modified Euler method \cite{Griffiths and Higham
2010}:

\begin{equation}
\mathbf{v}_{i}^{n+0.25}=\mathbf{v}_{i}^{n}+\frac{\Delta t}{2\tau }\mathbf{q}%
_{i}^{n},\ \mathbf{v}_{i}^{n+0.5}=\mathbf{v}_{i}^{n}+\frac{\Delta t}{\tau }%
\mathbf{q}_{i}^{n+0.25}.  \label{SOS110}
\end{equation}

\section{Examples\label{Exemplification and discussion}}

In this section, we are mainly concerned with verification of the second
order central schemes COS2, (\ref{SA30}) and MAC2, (\ref{CC400}), as well as
the splitting schemes (\ref{SOS55})-(\ref{SOS70}), and (\ref{SOS90}).

\subsection{Scalar non-linear equation}

As the first stage in the verification, we will focus on the following
scalar 1-D version of the problem (\ref{INA10}): 
\begin{equation}
\frac{\partial u}{\partial t}+\frac{\partial }{\partial x}f\left( u\right)
=0,\quad x\in \mathbb{R},\ 0<t\leq T_{\max };\quad \left. u\left( x,t\right)
\right| _{t=0}=u^{0}\left( x\right) .  \label{VS10}
\end{equation}

Let us first compare the schemes MAC2, (\ref{CC400}), and COS2, (\ref{SA30}%
), and demonstrate that these schemes are of the second order. We consider
the linear transport equation, i.e. (\ref{VS10}) with $f\left( u\right)
\equiv u$, subject to the initial data: $u^{0}\left( x\right) =\sin \left(
\pi x\right) $. The numerical solutions were computed under $\varkappa
=\varsigma =1$. The scheme COS2, (\ref{SA30}), will be stable, in view of (%
\ref{SSA10}), (\ref{SSA20}), if we take the CFL number $Cr$ $=$ $\sqrt{3}-1$%
, and $\aleph =2-\sqrt{3}$. The scheme MAC2, (\ref{CC400}), will be stable,
in view of (\ref{SSB132}), under $Cr=0.5(\sqrt{3}-1)$ and $\aleph =2-\sqrt{3}
$. We will also verify the scheme COS1, (\ref{CC260}), under $Cr$ $=$ $0.5$
and $\aleph =0.25$, as well as the scheme MAC1, (\ref{CC290}), under $Cr$ $=$
$0.25$ and $\aleph =0.25$. Notice, in the case of schemes COS2 and COS1 the
CFL numbers are two times higher than in the case of schemes MAC2 and MAC1,
respectively. The reason is that the staggered schemes, COS2 and COS1, are
solved twice during the time increment, $\Delta t$, in contrast to the
nonstaggered schemes MAC2 and MAC1. $L_{1}$ errors, at $t=10$, $0\leq x\leq
2 $, versus the number of nodes are depicted in Table \ref{Table01}.

\begin{table}[tbph]
\caption{$L_{1}$ errors versus the number of nodes ($N$)}
\label{Table01}$
\begin{tabular}{|c|c|c|c|c|c|}
\hline
$N$ & $1280$ & $640$ & $320$ & $160$ & $80$ \\ \hline
COS1 & $2.4\times 10^{-3}$ & $4.9\times 10^{-3}$ & $9.8\times 10^{-3}$ & $%
2.0\times 10^{-2}$ & $3.9\times 10^{-2}$ \\ \hline
COS2 & $1.5\times 10^{-5}$ & $6.2\times 10^{-5}$ & $2.5\times 10^{-4}$ & $%
1.0\times 10^{-3}$ & $3.9\times 10^{-3}$ \\ \cline{1-1}\cline{2-6}
MAC1 & $2.5\times 10^{-3}$ & $4.9\times 10^{-3}$ & $1.0\times 10^{-2}$ & $%
2.1\times 10^{-2}$ & $4.3\times 10^{-2}$ \\ \cline{1-1}\cline{2-6}
MAC2 & $5.6\times 10^{-5}$ & $2.4\times 10^{-4}$ & $9.8\times 10^{-4}$ & $%
3.9\times 10^{-3}$ & $1.5\times 10^{-2}$ \\ \hline
\end{tabular}
$%
\end{table}
Note (Table \ref{Table01}) that $L_{1}$ errors in the case of the scheme
COS2 are approximately coincide with those appeared in the case of the
scheme MAC2 under the double number of nodes. It is associated with the fact
that in the case of the staggered scheme COS2, (\ref{SA30}), the space
increment is, actually, two times less than in the case of the nonstaggered
scheme MAC2, (\ref{CC400}).

Now we will solve the inviscid Burgers equation (i.e. $f\left( u\right)
\equiv u^{2}\diagup 2$) with the following initial condition 
\begin{equation}
u\left( x,0\right) =\left\{ 
\begin{array}{cc}
u_{0}, & x\in \left( h_{L},h_{R}\right) \\ 
0, & x\notin \left( h_{L},h_{R}\right)
\end{array}
\right. ,\quad h_{R}>h_{L},\ u_{0}=const\neq 0.  \label{VS70}
\end{equation}
The exact solution to (\ref{VS10}), (\ref{VS70}) is given by 
\begin{equation}
u\left( x,t\right) =\left\{ 
\begin{array}{cc}
u_{1}\left( x,t\right) , & 0<t\leq T \\ 
u_{2}\left( x,t\right) , & t>T
\end{array}
\right. ,  \label{VS80}
\end{equation}
where $T=2S\diagup u_{0}$, $S=h_{R}-h_{L}$, 
\begin{equation}
u_{1}\left( x,t\right) =\left\{ 
\begin{array}{cc}
\frac{x-h_{L}}{b-h_{L}}u_{0}, & h_{L}<x\leq b,\ b=u_{0}t+h_{L} \\ 
u_{0}, & b<x\leq 0.5u_{0}t+h_{R} \\ 
0, & x\leq h_{L}\ or\ x>0.5u_{0}t+h_{R}
\end{array}
\right. ,  \label{VS90}
\end{equation}
\begin{equation}
u_{2}\left( x,t\right) =\left\{ 
\begin{array}{cc}
\frac{2S\left( x-h_{L}\right) }{\left( L-h_{L}\right) ^{2}}u_{0}, & 
h_{L}<x\leq L \\ 
0, & x\leq h_{L}\ or\ x>L
\end{array}
\right. ,  \label{VS100}
\end{equation}
\begin{equation}
L=2\sqrt{S^{2}+0.5u_{0}S\left( t-T\right) }+h_{L}.  \label{VS110}
\end{equation}

First, it will be used the first order in time schemes MAC1, (\ref{CC290}),
and COS1, (\ref{CC260}). The numerical solutions were computed on a uniform
grid with spatial increments of $\Delta x=0.01$, the velocity $u_{0}=1$ in (%
\ref{VS70}), $h_{L}=0.2$, $h_{R}=1$. In view of the stability condition, (%
\ref{SSB135}), for the scheme MAC1, we take the CFL number $Cr$ $=$ $0.25$,
the monotonicity parameter $\aleph =0.25$, and the parameter $\varkappa =1$.
Having regard to the stability condition (\ref{SSA10}) under $\varsigma =0$,
we verify the scheme COS1, (\ref{CC260}), under $Cr$ $=$ $0.5$, $\aleph
=0.25 $, and $\varkappa =1$. The results of simulations are depicted with
the exact solution in Figure \ref{COSMAC1}. 
\begin{figure}[h]
\centerline{\includegraphics[width=11.50cm,height=3.8cm]{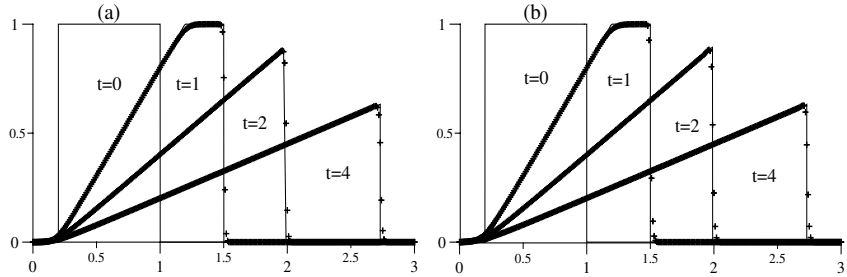}}
\caption{Inviscid Burgers equation. The schemes COS1 (a) at given $C_{r}=0.5$
and MAC1 (b) at given $C_{r}=0.25$ versus the analytical solution. Crosses:
numerical solution; Solid line: analytical solution and initial data. $%
\Delta x=0.01$, $\aleph =0.25$.}
\label{COSMAC1}
\end{figure}

We note (Figure \ref{COSMAC1}) that the schemes COS1, (\ref{CC260}), and
MAC1, (\ref{CC290}), exhibit similar results in spite of the fact that in
the case of the scheme COS1 the space increments are, virtually, two times
less than those in the case of MAC1. It should also be mentioned the absence
of spurious oscillations in the numerical solutions (see Figure \ref{COSMAC1}%
). However, if we take $\aleph =0.5$, then the spurious oscillations can be
produced by the scheme COS1 (see \cite[p. 2804]{Borisov and Mond 2010b}).
Using Algorithm \ref{Algorithm01} at given $\widehat{\aleph }=0.5$, in the
case of scheme MAC1 we obtain very slight but significant spurious
oscillations, which indicate that the boundary maximum principle is
violated. For instance, the maximum positive value of the dependent
variable, $v$, at $t=1$ is $0.4\%$ above the maximum value of $v$ at the
boundary $t=0$. These spurious oscillations can be eradicated by decreasing
the parameter $\varkappa $. Particularly, the spurious oscillations
disappear if $\varkappa =0.7$, however, this introduces an additional
numerical smearing. The results of simulations are not depicted here.

To test the schemes MAC2, (\ref{CC400}), and COS2, (\ref{SA30}), the
inviscid Burgers equation was solved under the initial condition (\ref{VS70}%
). The numerical solutions were computed under the same values of parameters
as in the case of the schemes MAC1 and COS1, but $C_{r}$ and $\aleph $. In
view of the stability condition, (\ref{SSB132}), for the scheme MAC2, we
take the CFL number $Cr$ $=$ $0.5(\sqrt{3}-1)$. It is also used Algorithm 
\ref{Algorithm01} at given $\widehat{\aleph }=0.5$. The scheme COS2 will be
stable, in view of (\ref{SSA10}), (\ref{SSA20}), if we take $Cr$ $=$ $\sqrt{3%
}-1$ and $\aleph =0.5$. The results of simulation are depicted with the
exact solution in Figure \ref{COSMAC2}. 
\begin{figure}[h]
\centerline{\includegraphics[width=11.50cm,height=3.8cm]{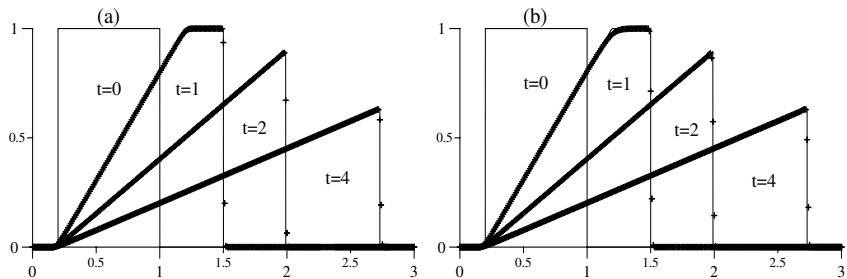}}
\caption{Inviscid Burgers equation. The schemes COS2 (a) at given $C_{r}=%
\protect\sqrt{3}-1$ and MAC2 (b) at given $C_{r}=0.5\left( \protect\sqrt{3}%
-1\right) $ versus the analytical solution. Crosses: numerical solution;
Solid line: analytical solution and initial data. $\Delta x=0.01$, $\aleph =%
\widehat{\aleph }=0.5$. }
\label{COSMAC2}
\end{figure}

We note (Figure \ref{COSMAC2}) that the schemes COS2 and MAC2 exhibit a
typical second-order nature without any spurious oscillations. The scheme
MAC2 exhibits more numerical smearing than the scheme COS2. Such a
phenomenon has, at least, two reasons. First, in the case of the staggered
scheme COS2 the space increment is, in fact, two times less than in the case
of the nonstaggered scheme MAC2. Second place, using Algorithm \ref
{Algorithm01} at given $\widehat{\aleph }=0.5$ we, in general, obtain $%
\aleph _{i}^{n}<\widehat{\aleph }$ at the grid node $(x_{i},t_{n})$,
resulting in an additional numerical smearing.

\subsection{Hyperbolic conservation laws with relaxation}

Let us consider the model system of hyperbolic conservation laws with
relaxation developed in \cite{Pember 1993a}: 
\begin{equation}
\frac{\partial w}{\partial t}+\frac{\partial }{\partial x}\left( \frac{1}{2}%
u^{2}+aw\right) =0,  \label{VS240}
\end{equation}
\begin{equation}
\frac{\partial z}{\partial t}+\frac{\partial }{\partial x}az=\frac{1}{\tau }%
Q(w,z),  \label{VS250}
\end{equation}
where 
\begin{equation}
Q(w,z)=z-m(u-u_{0}),\quad u=w-q_{0}z,  \label{VS260}
\end{equation}
$\tau $ denotes the relaxation time of the system, $q_{0}$, $m$, $a$, and $%
u_{0}$ are constants. The Jacobian, $\mathbf{A}$, can be written in the form 
\begin{equation}
\mathbf{A=}\left\{ 
\begin{array}{cc}
w-q_{0}z+a & -q_{0}\left( w-q_{0}z\right) \\ 
0 & a
\end{array}
\right\} .  \label{VS270}
\end{equation}
The system (\ref{VS240})-(\ref{VS250}) has the following frozen \cite{Pember
1993a} characteristic speeds $\lambda _{1}$ $=$ $a$, $\lambda _{2}$ $=$ $u+a$%
. The equilibrium equation for (\ref{VS240})-(\ref{VS250}) is 
\begin{equation}
\frac{\partial w}{\partial t}+\frac{\partial }{\partial x}\left( \frac{1}{2}%
u_{\ast }^{2}+aw\right) =0,  \label{VS280}
\end{equation}
where 
\begin{equation}
u_{\ast }=w-q_{0}z_{\ast },\quad z_{\ast }=\frac{m}{1+mq_{0}}\left(
w-u_{0}\right) .  \label{VS290}
\end{equation}
The equilibrium characteristic speed $\lambda _{\ast }$ can be written in
the form 
\begin{equation}
\lambda _{\ast }\left( w\right) =\frac{u_{\ast }\left( w\right) }{1+mq_{0}}%
+a.  \label{VS300}
\end{equation}

Pember's rarefaction test problem is to find the solution $\left\{
w,z\right\} $ to (\ref{VS240})-(\ref{VS250}), and hence the function $%
u=u\left( x,t\right) $, under $\tau \rightarrow 0$, and where 
\begin{equation}
\left\{ w,z\right\} =\left\{ 
\begin{array}{cc}
\left\{ w_{L},z_{\ast }\left( w_{L}\right) \right\} , & x<x_{0} \\ 
\left\{ w_{R},z_{\ast }\left( w_{R}\right) \right\} , & x>x_{0}
\end{array}
\right. ,  \label{VS310}
\end{equation}
\begin{equation}
0<u_{L}=w_{L}-q_{0}z_{\ast }\left( w_{L}\right) <u_{R}=w_{R}-q_{0}z_{\ast
}\left( w_{R}\right) .  \label{VS320}
\end{equation}
The analytical solution of this problem can be found in \cite{Pember 1993a}.
The parameters of the model system are assumed as follows: $q_{0}=-1$, $m=-1$%
, $u_{0}=3$, $a=1$, $\tau =10^{-8}$. The initial conditions of the
rarefaction problem are defined by 
\begin{equation}
u_{L}=2,\ \Longrightarrow \ z_{L}=m\left( u_{L}-u_{0}\right) =1,\
w_{L}=u_{L}+q_{0}z_{L}=1,  \label{VS330}
\end{equation}
\begin{equation}
u_{R}=3,\ \Longrightarrow \ z_{R}=m\left( u_{R}-u_{0}\right) =0,\
w_{R}=u_{R}+q_{0}z_{R}=3.  \label{VS340}
\end{equation}
The position of the initial discontinuity, $x_{0}$, is set according to the
value of $a$ so that the solutions of all the rarefaction problems are
identical \cite{Pember 1993a}. Let a position, $x_{R}^{t}$, of leading edge
or a position, $x_{L}^{t}$, of trailing edge of the rarefaction be known
(e.g., $x_{R}^{t}=0.85$, $x_{L}^{t}=0.7$ in \cite{Pember 1993a}), then 
\begin{equation}
x_{0}=x_{R}^{t}-\left( \frac{u_{R}}{1+mq_{0}}+a\right) t=x_{L}^{t}-\left( 
\frac{u_{L}}{1+mq_{0}}+a\right) t.  \label{VS350}
\end{equation}
At $t=0.3$, under (\ref{VS330})-(\ref{VS340}) we have \cite{Pember 1993a} 
\begin{equation}
u=\left\{ 
\begin{array}{cc}
2, & x\leq 0.7 \\ 
2+\frac{x-0.7}{0.85-0.7}, & 0.7<x<0.85 \\ 
3, & x\geq 0.85
\end{array}
\right. .  \label{VS360}
\end{equation}

The results of simulations, based upon the schemes MAC2, (\ref{CC400}), and
COS2, (\ref{SA30}), together with (\ref{SOS100}), under different values of
a grid spacing ($\Delta x=10^{-3}$, $5\times 10^{-4}$) are depicted in
Figure \ref{RELAX01}. In view of the stability condition, (\ref{SSB132}),
for the scheme MAC2, we take the CFL number $Cr$ $=$ $0.5(\sqrt{3}-1)$. It
is also used Algorithm \ref{Algorithm01} at given $\widehat{\aleph }=0.5$.
The scheme COS2 will be stable, in view of (\ref{SSA10}), (\ref{SSA20}), if
we take $Cr$ $=$ $\sqrt{3}-1$ and $\aleph =0.5$.

\begin{figure}[tbph]
\centerline{\includegraphics[width=14.cm,height=6.cm]{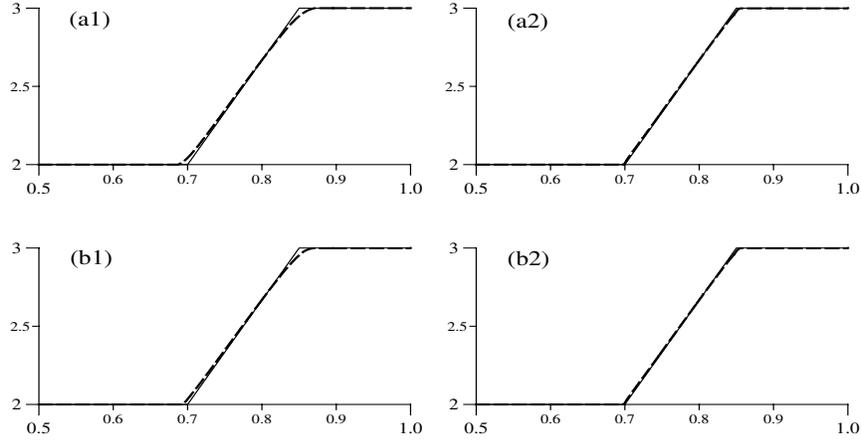}}
\caption{ Pember's rarefaction test problem. The schemes COS2 (a1, a2) at
given $C_{r}=\protect\sqrt{3}-1$ and MAC2 (b1, b2) at given $C_{r}=0.5\left( 
\protect\sqrt{3}-1\right) $ versus the analytical solution for $u$. Dashed
line: numerical solution; Solid line: analytical solution. Time $t=0.3$, $%
\aleph =\widehat{\aleph }=0.5$, (a1, b1): $\Delta x=0.001$, (a2, b2): $%
\Delta x=0.0005$. }
\label{RELAX01}
\end{figure}

One can clearly see (Figure \ref{RELAX01}) that the schemes MAC2 together
with the Runge-Kutta scheme (\ref{SOS100}) as well as COS2 together with (%
\ref{SOS100}) are free from spurious oscillations. Recall that in the case
of the staggered scheme COS2 the space increment is, in fact, two times less
than in the case of the nonstaggered scheme MAC2. Moreover, using Algorithm 
\ref{Algorithm01} in the case of scheme MAC2, we, in general, obtain $\aleph
_{i}^{n}<\widehat{\aleph }$ at the grid node $(x_{i},t_{n})$, resulting in
an additional numerical smearing. Nevertheless, the scheme MAC2 can exhibit
(Figure \ref{RELAX01}) even less numerical viscosity than the scheme COS2.

\subsection{1-D Euler equation of gas dynamics}

In this subsection we apply the second order schemes COS2, (\ref{SA30}), and
MAC2, (\ref{CC400}), to the Euler equations of gamma-law gas: 
\begin{equation}
\frac{\partial \mathbf{u}\left( x,t\right) }{\partial t}+\frac{\partial }{%
\partial x}\mathbf{F}\left( \mathbf{u}\right) =0,\quad x\in \mathbb{R},\
t>0;\quad \mathbf{u}\left( x,0\right) =\mathbf{u}^{0}\left( x\right) ,
\label{VE10}
\end{equation}
\begin{equation}
\mathbf{u\equiv }\left\{ u_{1},u_{2},u_{3}\right\} ^{T}=\left\{ \rho ,\rho
v,e\right\} ^{T},\quad \mathbf{F}\left( \mathbf{u}\right) =\left\{ \rho
v,\rho v^{2}+p,\left( e+p\right) v\right\} ^{T},  \label{VE20}
\end{equation}
\begin{equation}
e=\frac{p}{\gamma -1}+\frac{1}{2}\rho v^{2},\quad \gamma =const,
\label{VE30}
\end{equation}
where $\rho $, $v$, $p$, $e$ denote the density, velocity, pressure, and
total energy respectively. We consider the Riemann problem subject to
Riemann initial data 
\begin{equation}
\mathbf{u}^{0}\left( x\right) =\left\{ 
\begin{array}{cc}
\mathbf{u}_{L} & x<x_{0} \\ 
\mathbf{u}_{R} & x>x_{0}
\end{array}
\right. ,\quad \mathbf{u}_{L},\mathbf{u}_{R}=const.  \label{VE40}
\end{equation}
The analytic solution to the Riemann problem can be found in \cite[Sec. 14]
{LeVeque 2002}.

We solve the shock tube problem (see, e.g., \cite{Balaguer and Conde 2005}, 
\cite{Jiang et al. 1998}, \cite{LeVeque 2002}, \cite{Liu and Tadmor 1998})
with Sod's initial data: 
\begin{equation}
\mathbf{u}_{L}=\left\{ 
\begin{array}{c}
1 \\ 
0 \\ 
2.5
\end{array}
\right\} ,\quad \mathbf{u}_{R}=\left\{ 
\begin{array}{c}
0.125 \\ 
0 \\ 
0.25
\end{array}
\right\} .  \label{VE50}
\end{equation}
Following Balaguer and Conde \cite{Balaguer and Conde 2005} as well as Liu
and Tadmor \cite{Liu and Tadmor 1998} we assume that the computational
domain is $0\leq x\leq 1$; the point $x_{0}$ is located at the middle of the
interval $\left[ 0,1\right] $, i.e. $x_{0}=0.5$; the equations (\ref{VE10})
are integrated up to $t=0.16$ on a spatial grid with 200 nodes as in \cite
{Balaguer and Conde 2005} and in \cite{Liu and Tadmor 1998}. Aiming to
compare the schemes COS2 and MAC2, we, in view of the stability conditions (%
\ref{SSA10}), (\ref{SSA20}), and (\ref{SSB132}), take the CFL number $Cr$ $=$
$\sqrt{3}-1$ (the scheme COS2), and $Cr$ $=$ $0.5(\sqrt{3}-1)$ (the scheme
MAC2). In both cases we take $\varkappa \aleph =2-\sqrt{3}$, where $%
\varkappa =0.94$. The reason why the parameter $\varkappa $ has to be less
than unity is discussed in Section \ref{Construction} (see also \cite[Sec. 4]
{Borisov and Mond 2010b}). The results of simulations are depicted in Figure 
\ref{C732D5} (left column) and Figure \ref{M366D5D} (left column).

\begin{figure}[tbph]
\centerline{\includegraphics[width=11.9cm,height=15.4cm]{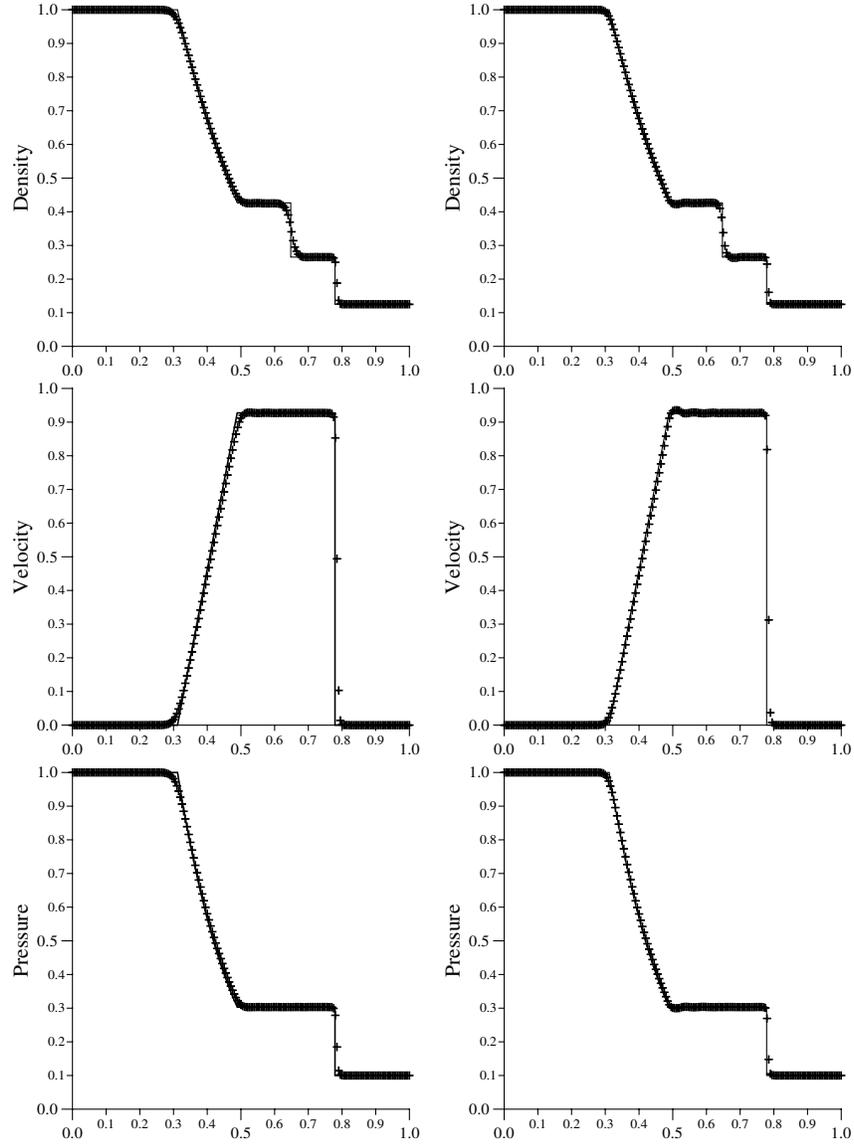}}
\caption{Sod's problem. Time $t=0.16$. The scheme COS2 versus the analytical
solution. $\varkappa \aleph =2-\protect\sqrt{3}$, $\varkappa =0.94$ (left
column) and $\aleph =0.35$, $\varkappa =1$ (right column). Crosses:
numerical solution; Solid line: analytical solution. $C_{r}=\protect\sqrt{3}%
-1$, $\Delta x=0.005$}
\label{C732D5}
\end{figure}

\begin{figure}[tbph]
\centerline{\includegraphics[width=11.9cm,height=15.4cm]{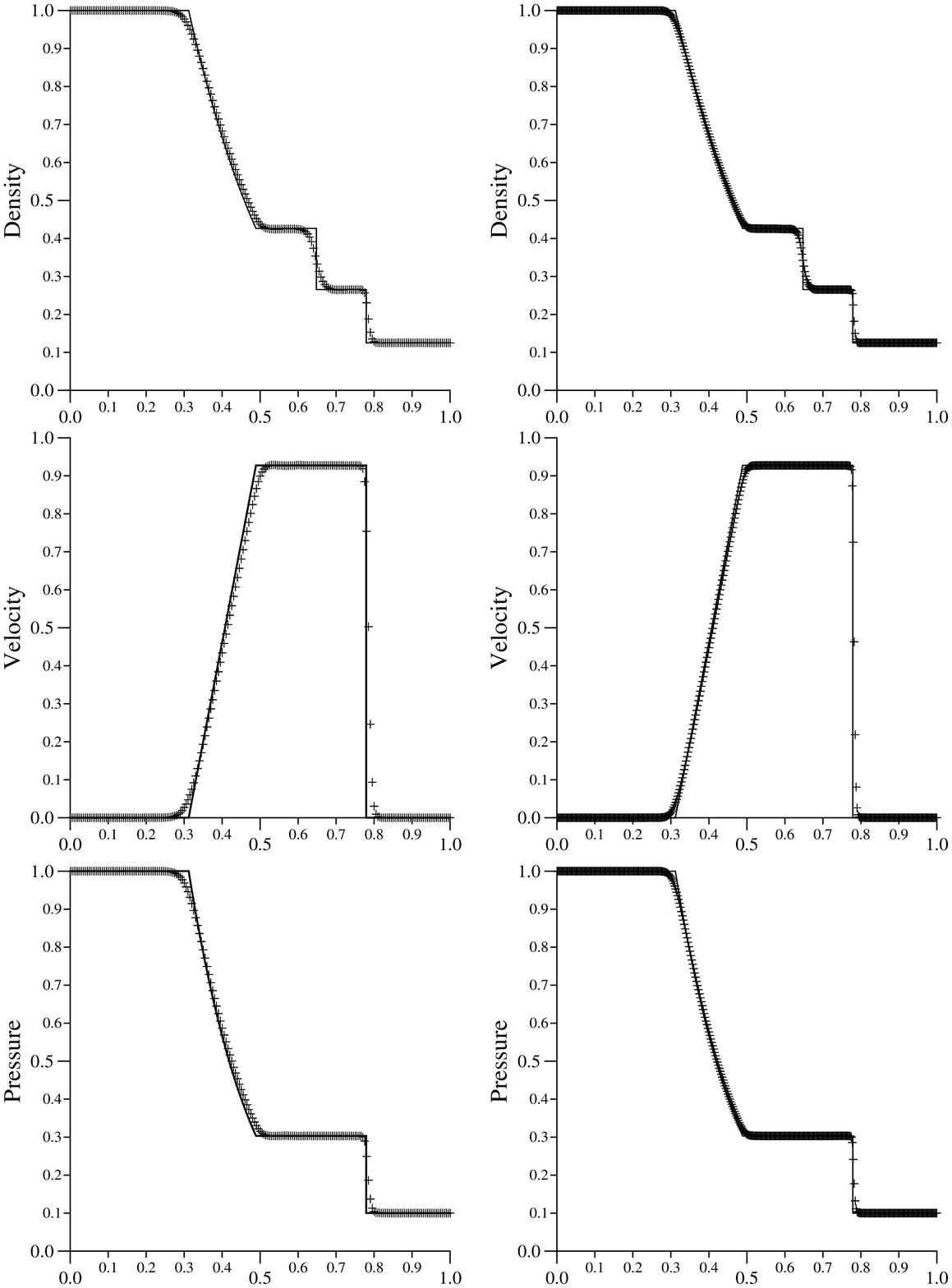}}
\caption{Sod's problem. The scheme MAC2 versus the analytical solution. Time 
$t=0.16$. $C_{r}=0.5(\protect\sqrt{3}-1)$, $\varkappa \aleph =2-\protect%
\sqrt{3}$, $\varkappa =0.94$. Crosses: numerical solution; Solid line:
analytical solution. $\Delta x=0.005$ (left column), $\Delta x=0.0025$
(right column).\ \ \ \ \ \ \ \ }
\label{M366D5D}
\end{figure}

The results using the scheme COS2 (Figure \ref{C732D5}, left column) are not
worse in comparison to the corresponding third-order central results of 
\cite[p. 418]{Liu and Tadmor 1998} as well as to the results obtained by the
fourth-order non-oscillatory scheme in \cite[p. 472]{Balaguer and Conde 2005}%
. The simulations in \cite{Balaguer and Conde 2005} and \cite{Liu and Tadmor
1998} were done under $\Delta t=0.1\Delta x$ (i.e. $0.13\lesssim Cr\lesssim
0.22$). The fourth-order scheme \cite[p. 472]{Balaguer and Conde 2005} gives
a better resolution but, in contrast to the scheme COS2 (Figure \ref{C732D5}%
, left column), can produce spurious oscillations. Let us note that the
second-order scheme COS2 can give results analogous to those obtained by the
fourth-order scheme \cite[p. 472]{Balaguer and Conde 2005}. Namely, the
higher will be the values of $\varkappa $ and $\aleph $, the better will be
the resolution (see, e.g., Figure \ref{C732D5}, right column). However, if $%
\varkappa =1$, then the scheme COS2 will be oscillatory \cite[p. 2816]
{Borisov and Mond 2010b}. It is apparent that $\varkappa $ should be
reasonably less than unity to suppress spurious oscillations, while it
should be as close to unity as possible to lose less in the order of
accuracy of the scheme \cite[p. 2816]{Borisov and Mond 2010b}. The results
of using the scheme MAC2 (Figure \ref{M366D5D}, left column) are not worse
in comparison to the corresponding second-order central results of \cite[p.
418]{Liu and Tadmor 1998}, in spite of the fact in the case of staggered
scheme the space increment is, in effect, two times less than in the case of
nonstaggered scheme. For the sake of illustration, the results using the
scheme MAC2 under $\Delta x=0.0025$ are depicted in \ Figure \ref{M366D5D}
(right column). It is easily seen that the scheme MAC2 gives a better
resolution than the scheme COS2 (Figure \ref{C732D5}, left column).

\subsection{3-D axial symmetric gas dynamics}

We consider an adiabatic expansion of a vapor-plasma plume, produced by a
laser beam on a target plane surface, into vacuum \cite{Anisimov et al. 1993}%
, i.e., the so called Anisimov's problem. It is assumed \cite{Anisimov et
al. 1993} that the expansion of the plasma plume may be described by the
ideal gas dynamic equations with a constant adiabatic exponent $\gamma $.
Let the target surface be perpendicular to the axis $z$ and located at $z=0$%
. Taking into account the symmetry of the plume with respect to the axis $z$%
, the gas-dynamic equations can be written as follows. 
\begin{equation}
\frac{\partial \rho }{\partial t}+\frac{1}{r}\frac{\partial \left( r\rho
v_{r}\right) }{\partial r}+\frac{\partial \left( \rho v_{z}\right) }{%
\partial z}=0,  \label{FA10}
\end{equation}
\begin{equation}
\frac{\partial }{\partial t}\left( \rho v_{r}\right) +\frac{1}{r}\frac{%
\partial }{\partial r}\left[ r\rho \left( v_{r}\right) ^{2}\right] +\frac{%
\partial }{\partial z}\left( \rho v_{z}v_{r}\right) +\frac{\partial p}{%
\partial r}=0,  \label{FA20}
\end{equation}
\begin{equation}
\frac{\partial }{\partial t}\left( \rho v_{z}\right) +\frac{\partial }{%
\partial z}\left[ \rho \left( v_{z}\right) ^{2}\right] +\frac{1}{r}\frac{%
\partial }{\partial r}\left( r\rho v_{z}v_{r}\right) +\frac{\partial p}{%
\partial z}=0,  \label{FA30}
\end{equation}
\begin{equation}
\frac{\partial \rho E}{\partial t}+\frac{1}{r}\frac{\partial }{\partial r}%
\left[ rv_{r}\left( \rho E+p\right) \right] +\frac{\partial }{\partial z}%
\left[ v_{z}\left( \rho E+p\right) \right] =0.  \label{FA40}
\end{equation}
\begin{equation}
\rho E=\frac{p}{\gamma -1}+0.5\rho v^{2},\ v^{2}=v_{r}^{2}+v_{z}^{2},\
\gamma =const.  \label{FA50}
\end{equation}
Let $R_{0}$ and $Z_{0}$\ will be the initial radius of the plume at $z=0$
and the initial height of the plume in $z$-direction, respectively. It is
also used in \cite{Anisimov et al. 1993} the mass $M_{P}$ of the plume and
its initial energy $E_{P}$ as the input data. We will use the following
values as the reference quantities: $l_{\ast }$ $=$ $R_{0}$, $v_{\ast }$ $=$ 
$\sqrt{\left( 5\gamma -3\right) E_{P}\diagup M_{P}}$, $t_{\ast }$ $=$ $%
l_{\ast }\diagup v_{\ast }$, $\rho _{\ast }$ $=$ $M_{P}\diagup \left(
R_{0}^{2}Z_{0}\right) $, $p_{\ast }$ $=$ $\rho _{\ast }v_{\ast }^{2}$. It is
assumed that the system (\ref{FA10})-(\ref{FA40}) is written in the
non-dimensional variables, since the dimensionless form of the gas dynamic
equations coincides with (\ref{FA10})-(\ref{FA40}), up to notations. The
pseudo-analytic solution to (\ref{FA10})-(\ref{FA40}) can be written in the
following form \cite{Anisimov et al. 1993}: 
\begin{equation}
\rho \left( r,z,t\right) =I_{1}^{-1}\left( \gamma \right) \frac{\sigma }{\xi
^{2}\eta }\left( 1-\frac{r^{2}}{\xi ^{2}}-\frac{z^{2}}{\eta ^{2}}\right)
^{1\diagup \left( \gamma -1\right) },\ \xi =\xi \left( t\right) ,\ \eta
=\eta \left( t\right) ,  \label{AS10}
\end{equation}
\begin{equation}
p\left( r,z,t\right) =\frac{I_{2}^{-1}\left( \gamma \right) }{5\gamma -3}%
\left( \frac{\sigma }{\xi ^{2}\eta }\right) ^{\gamma }\left( 1-\frac{r^{2}}{%
\xi ^{2}}-\frac{z^{2}}{\eta ^{2}}\right) ^{\gamma \diagup \left( \gamma
-1\right) },\quad \sigma =\frac{Z_{0}}{R_{0}},  \label{AS20}
\end{equation}
\begin{equation}
v_{r}\left( r,z,t\right) =r\frac{\dot{\xi}}{\xi },\ v_{z}\left( r,z,t\right)
=z\frac{\dot{\eta}}{\eta },\quad \dot{\xi}\equiv \frac{d\xi }{dt},\ \dot{\eta%
}\equiv \frac{d\eta }{dt},  \label{AS30}
\end{equation}
\begin{equation}
\xi \ddot{\xi}=\eta \ddot{\eta}=\left( \frac{\sigma }{\xi ^{2}\eta }\right)
^{\gamma -1},\quad \xi \left( 0\right) =1,\ \eta \left( 0\right) =\sigma ,\ 
\dot{\xi}\left( 0\right) =\dot{\eta}\left( 0\right) =0,  \label{AS35}
\end{equation}
where 
\begin{equation}
I_{1}\left( \gamma \right) =\frac{\pi ^{3\diagup 2}}{2}\Gamma \left( \frac{%
\gamma }{\gamma -1}\right) \diagup \,\Gamma \left( \frac{\gamma }{\gamma -1}+%
\frac{3}{2}\right) ,  \label{AS50}
\end{equation}
\begin{equation}
I_{2}\left( \gamma \right) =\frac{\pi ^{3\diagup 2}}{2\left( \gamma
-1\right) }\Gamma \left( \frac{\gamma }{\gamma -1}+1\right) \diagup \,\Gamma
\left( \frac{\gamma }{\gamma -1}+\frac{5}{2}\right) ,  \label{AS60}
\end{equation}
$\Gamma (\cdot )$ is the Gamma-function. Out of the plume, i.e. provided $%
(r\diagup \xi )^{2}$ $+$ $(z\diagup \eta )^{2}$ $>$ $1$, the values of
density ($\rho $) and pressure ($p$) as well as the components of velocity $%
\left( v_{r},v_{z}\right) $ are equal to zero.

Thus, the problem, (\ref{FA10})-(\ref{FA40}), is reduced to the system of
ordinary differential equations (ODEs) (\ref{AS35}). The system of ODEs, (%
\ref{AS35}), is solved numerically by the Runge-Kutta method with the
adaptive step-size control \cite{Press William 1988}. To check the accuracy
of the calculations, the integral of energy \cite{Anisimov et al. 1993} is
used: 
\begin{equation}
\dot{\xi}^{2}+0.5\dot{\eta}^{2}+\frac{1}{\gamma -1}\left( \frac{\sigma }{\xi
^{2}\eta }\right) ^{\gamma -1}=\frac{1}{\gamma -1}.  \label{AS90}
\end{equation}

Aiming to use the schemes COS2, (\ref{SA30}), and MAC2, (\ref{CC400}), for
solving the system (\ref{FA10})-(\ref{FA40}), we rewrite (\ref{FA10})-(\ref
{FA40}) to read 
\begin{equation}
\frac{\partial \rho }{\partial t}+\frac{\partial \left( \rho v_{r}\right) }{%
\partial r}+\frac{\partial \left( \rho v_{z}\right) }{\partial z}=-\frac{%
\rho v_{r}}{r},  \label{FA62}
\end{equation}
\begin{equation}
\frac{\partial }{\partial t}\left( \rho v_{r}\right) +\frac{\partial }{%
\partial r}\left[ \rho \left( v_{r}\right) ^{2}+p\right] +\frac{\partial }{%
\partial z}\left( \rho v_{z}v_{r}\right) =-\frac{\rho \left( v_{r}\right)
^{2}}{r},  \label{FA64}
\end{equation}
\begin{equation}
\frac{\partial }{\partial t}\left( \rho v_{z}\right) +\frac{\partial }{%
\partial z}\left[ \rho \left( v_{z}\right) ^{2}+p\right] +\frac{\partial }{%
\partial r}\left( \rho v_{z}v_{r}\right) =-\frac{\rho v_{z}v_{r}}{r},
\label{FA66}
\end{equation}
\begin{equation}
\frac{\partial \rho E}{\partial t}+\frac{\partial }{\partial r}\left[
v_{r}\left( \rho E+p\right) \right] +\frac{\partial }{\partial z}\left[
v_{z}\left( \rho E+p\right) \right] =-\frac{v_{r}\left( \rho E+p\right) }{r}.
\label{FA68}
\end{equation}

The initial conditions are the following (in details, see (\ref{AS10})-(\ref
{AS60}) under $t=0$): 
\begin{equation}
v_{r}=v_{z}=0,\ \rho =I_{1}^{-1}\left( \gamma \right) \left( 1-r^{2}-\left(
z\diagup \sigma \right) ^{2}\right) ^{1\diagup \left( \gamma -1\right) },\
p\diagup \rho ^{\gamma }=const.  \label{F160}
\end{equation}
The boundary conditions are established on the basis of the reflection
concept \cite{Anderson et al. 1984.}. At $r=0$ we assume that the axis $z$
is a reflection line. It prohibits any normal flux of mass through the
boundary $r=0$, i.e. $v_{r}=0$. Moreover, it is assumed that the pressure ($%
p $), density ($\rho $), and tangential velocity ($v_{z}$) are even
functions of normal distance to the axis $z$ while the normal velocity ($%
v_{r}$) is an odd function of $r$. It is also assumed that the plane $z=0$
is a reflection surface, i.e. the pressure ($p$), density ($\rho $), and
tangential velocity ($v_{r}$) are even functions of normal distance above
the target surface while the normal velocity ($v_{z}$) is an odd function of 
$z$.

By virtue of the operator-splitting method \cite{Bereux and Sainsaulieu 1997}%
, \cite{Du Tao et al. 2003}, \cite{Gosse L. 2000}, \cite{LeVeque 2002} (see
also LOS in \cite{Samarskii 2001}), the system (\ref{FA62})-(\ref{FA68}) may
be approximated by the following chain of equations: 
\begin{equation}
\frac{1}{4}\frac{\partial \mathbf{\check{u}}}{\partial t}+\frac{\partial }{%
\partial z}\mathbf{\check{f}}\left( \mathbf{\check{u}}\right) =0,\quad
t_{n}<t\leq t_{n+0.25},\ \left. \mathbf{\check{u}}\right| _{t=t_{n}}=\mathbf{%
\hat{u}}^{n},  \label{FA190}
\end{equation}
\begin{equation}
\frac{1}{4}\frac{\partial \mathbf{\bar{u}}}{\partial t}+\frac{\partial }{%
\partial r}\mathbf{\bar{f}}\left( \mathbf{\bar{u}}\right) =0,\
t_{n+0.25}<t\leq t_{n+0.5},\ \left. \mathbf{\bar{u}}\right| _{t=t_{n+0.25}}=%
\mathbf{\check{u}}^{n+0.25},  \label{FA300}
\end{equation}
\begin{equation}
\frac{1}{2}\frac{\partial \mathbf{\hat{u}}}{\partial t}=-\frac{1}{r}\mathbf{q%
}\left( \mathbf{\hat{u}}\right) ,\ t_{n+0.5}<t\leq t_{n+1},\ \left. \mathbf{%
\hat{u}}\right| _{t=t_{n+0.5}}=\mathbf{\bar{u}}^{n+0.5},  \label{FA360}
\end{equation}
where $\mathbf{\check{u}}=\left\{ \check{\rho}\right. ,$ $\check{\rho}\check{%
v}_{r},$ $\check{\rho}\check{v}_{z},$ $\left. \check{\rho}\check{E}\right\}
^{T}$, $\mathbf{\check{f}}\left( \mathbf{\check{u}}\right) =\left\{ \check{%
\rho}\check{v}_{z}\right. ,$ $\check{\rho}\check{v}_{z}\check{v}_{r},$ $%
\check{\rho}\left( \check{v}_{z}\right) ^{2}+\check{p},$ $\left. \check{v}%
_{z}(\check{\rho}\check{E}+\check{p})\right\} ^{T}$, $\mathbf{\bar{u}}%
=\left\{ \bar{\rho}\right. ,$ $\bar{\rho}\bar{v}_{r},$ $\bar{\rho}\bar{v}%
_{z},$ $\left. \bar{\rho}\bar{E}\right\} ^{T}$, $\mathbf{\bar{f}}\left( 
\mathbf{\bar{u}}\right) =\left\{ \bar{\rho}\bar{v}_{r}\right. ,$ $\bar{\rho}%
\left( \bar{v}_{r}\right) ^{2}+\bar{p},$ $\bar{\rho}\bar{v}_{z}\bar{v}_{r},$ 
$\left. \bar{v}_{r}\left( \bar{\rho}\bar{E}+\bar{p}\right) \right\} ^{T}$, $%
\mathbf{\hat{u}}=\left\{ \hat{\rho}\right. ,$ $\hat{\rho}\hat{v}_{r},$ $\hat{%
\rho}\hat{v}_{z},$ $\left. \hat{\rho}\hat{E}\right\} ^{T}$, $\mathbf{q}%
\left( \mathbf{\hat{u}}\right) =\left\{ \hat{\rho}\hat{v}_{r}\right. ,$ $%
\hat{\rho}\left( \hat{v}_{r}\right) ^{2},$ $\hat{\rho}\hat{v}_{z}\hat{v}%
_{r}, $ $\left. \hat{v}_{r}(\hat{\rho}\hat{E}+\hat{p})\right\} ^{T}$.

We will use the schemes COS2, (\ref{SA30}), and MAC2, (\ref{CC400}), fo
solving Eqs. (\ref{FA190})-(\ref{FA300}). To solve Eq. (\ref{FA360}) it will
be used the Runge-Kutta scheme (modified Euler method), (\ref{SOS110}). Let
us note that every point on the axis $r=0$ is a singular point for Eq. (\ref
{FA360}). Assuming that all terms at left-hand side of Equation (\ref{FA62})
are bounded values at a vicinity of $r=0$, we find that $v_{r}\rightarrow 0$
as $r\rightarrow 0$ and, hence, $\left. \mathbf{q}\left( \mathbf{\hat{u}}%
\right) \right| _{r=0}=0$. Thus 
\begin{equation}
\underset{r\rightarrow 0}{\lim }\frac{\left. \mathbf{q}\left( \mathbf{\hat{u}%
}\right) \right| _{r>0}}{r}=\underset{r\rightarrow 0}{\lim }\frac{\left. 
\mathbf{q}\left( \mathbf{\hat{u}}\right) \right| _{r>0}-\left. \mathbf{q}%
\left( \mathbf{\hat{u}}\right) \right| _{r=0}}{r}=\left. \frac{\partial 
\mathbf{q}\left( \mathbf{\hat{u}}\right) }{\partial r}\right| _{r=0}.
\label{FA362}
\end{equation}
Thus, we have at $r=0$: 
\begin{equation}
\frac{1}{2}\frac{\partial \mathbf{\hat{u}}}{\partial t}=-\frac{\partial 
\mathbf{q}\left( \mathbf{\hat{u}}\right) }{\partial r},\ t_{n+0.5}<t\leq
t_{n+1},\ \left. \mathbf{\hat{u}}\right| _{t=t_{n+0.5}}=\mathbf{\bar{u}}%
^{n+0.5}.  \label{FA364}
\end{equation}

Let $\Delta r$, $\Delta z$ denote the spatial increments and let $%
r_{i}=i\Delta r$, $z_{j}=j\Delta z$, $i,j=0$, $1$, $2$, $\ldots $ . The
Runge-Kutta scheme, for Eq. (\ref{FA360}) can be written in the form: 
\begin{equation*}
\mathbf{\hat{v}}_{i,j}^{n+0.75}=\mathbf{\hat{v}}_{i,j}^{n+0.5}-\frac{\Delta t%
}{2r_{i}}\mathbf{q}\left( \mathbf{\hat{v}}_{i,j}^{n+0.5}\right) ,
\end{equation*}
\begin{equation}
\mathbf{\hat{v}}_{i,j}^{n+1}=\mathbf{\hat{v}}_{i,j}^{n+0.5}-\frac{\Delta t}{%
r_{i}}\mathbf{q}\left( \mathbf{\hat{v}}_{i,j}^{n+0.75}\right) ,\
i=1,2,\ldots ,\ j=0,1,2,\ldots \ ,  \label{FA370}
\end{equation}
where $\mathbf{\hat{v}}_{i,j}^{n+\theta }=\left\{ \left( \hat{\rho}\right)
_{i,j}^{n+\theta }\right. ,$ $\left( \hat{\rho}\hat{v}_{r}\right)
_{i,j}^{n+\theta },$ $\left( \hat{\rho}\hat{v}_{z}\right) _{i,j}^{n+\theta
}, $ $\left. (\hat{\rho}\hat{E})_{i,j}^{n+\theta }\right\} ^{T}$, $\theta
=0.5$, $0.75$, $1$. If $i=0$, then, in view of (\ref{FA364}), the
Runge-Kutta scheme is the following 
\begin{equation}
\mathbf{\hat{v}}_{0,j}^{n+0.75}=\mathbf{\hat{v}}_{0,j}^{n+0.5}-\frac{\Delta t%
}{2}\left. \frac{\partial \mathbf{q}}{\partial r}\right| _{r=0}^{n+05},\ 
\mathbf{\hat{v}}_{0,j}^{n+1}=\mathbf{\hat{v}}_{0,j}^{n+0.5}-\Delta t\left. 
\frac{\partial \mathbf{q}}{\partial r}\right| _{r=0}^{n+0.75},\ j\geq 0\ .
\label{FA375}
\end{equation}

Notice, we are using, for the sake of convenience, the same notation for the
components of, in general, different vectors $\mathbf{\hat{u}}$ and $\mathbf{%
\hat{v}}$. To derive the scheme at $i=0$, we will use the reflection concept 
\cite{Anderson et al. 1984.}. We introduce a node $i=-1,$ i.e., $%
r_{-1}=-\Delta r$. At this node, in view of the reflection concept, we have: 
$\mathbf{\hat{v}}_{-1,j}^{n+\theta }=\left\{ \left( \hat{\rho}\right)
_{1,j}^{n+\theta }\right. ,$ $-\left( \hat{\rho}\hat{v}_{r}\right)
_{1,j}^{n+\theta },$ $\left( \hat{\rho}\hat{v}_{z}\right) _{1,j}^{n+\theta
}, $ $\left. (\hat{\rho}\hat{E})_{1,j}^{n+\theta }\right\} ^{T}$, $\theta
=0.5$, $0.75$, $j=0,1,2,\ldots $ . Then, using central differencing, we
obtain, by virtue of (\ref{FA375}), the following difference scheme for Eq. (%
\ref{FA364}): 
\begin{equation*}
\mathbf{\hat{v}}_{0,j}^{n+0.75}=\mathbf{\hat{v}}_{0,j}^{n+0.5}-\frac{\Delta t%
}{4\Delta r}\left[ \mathbf{q}\left( \mathbf{\hat{v}}_{1,j}^{n+0.5}\right) -%
\mathbf{q}\left( \mathbf{\hat{v}}_{-1,j}^{n+0.5}\right) \right] ,
\end{equation*}
\begin{equation}
\mathbf{\hat{v}}_{0,j}^{n+1}=\mathbf{\hat{v}}_{0,j}^{n+0.5}-\frac{\Delta t}{%
2\Delta r}\left[ \mathbf{q}\left( \mathbf{\hat{v}}_{1,j}^{n+0.75}\right) -%
\mathbf{q}\left( \mathbf{\hat{v}}_{-1,j}^{n+0.75}\right) \right] ,\ \
j=0,1,2,\ldots \ .  \label{FA380}
\end{equation}

The equations (\ref{FA10})-(\ref{FA40}) are integrated up to $t=1$ with $%
\sigma \equiv Z_{0}\diagup R_{0}=0.1$. It is assumed for the schemes COS2, (%
\ref{SA30}), and MAC2, (\ref{CC400}), that the monotonicity parameter $%
\aleph =0.2$ and the parameter $\varkappa =1$. In view of the stability
conditions, (\ref{SSA10}) we take $Cr$ $=$ $0.95$ for the scheme COS2.
Following to \cite{Jiang et al. 1998}, we take $Cr$ $=$ $0.475$ for the
scheme MAC2, in contrast to the sufficient (but not necessary) stability
conditions (\ref{SSB132}). It is assumed, in the case of scheme COS2, that
the spatial increments are the following: $\Delta r=10^{-3}$, $\Delta
z=5\times 10^{-4}$ if $0<t\leq 0.1$; $\Delta r=2\times 10^{-3}$, $\Delta
z=10^{-3}$ if $0.1<t\leq 0.4$; $\Delta r=2.0\times 10^{-3}$, $\Delta
z=2.0\times 10^{-3}$ if $0.4<t\leq 1$. Taking into account the fact that in
the case of the staggered scheme COS2 the space increment is, actually, two
times less than in the case of the nonstaggered scheme MAC2, we assume for
the scheme MAC2: $\Delta r=5\times 10^{-4}$, $\Delta z=2.5\times 10^{-4}$ if 
$0<t\leq 0.1$; $\Delta r=10^{-3}$, $\Delta z=5\times 10^{-4}$ if $0.1<t\leq
0.4$; $\Delta r=10^{-3}$, $\Delta z=10^{-3}$ if $0.4<t\leq 1$. The results
of simulations as well as the analytical solution are depicted in Figures 
\ref{DA2C475T04}, \ref{MA2C475T04}, \ref{DA2C475T1}, \ref{MRA2C475T1}. 
\begin{figure}[tbph]
\centerline{\includegraphics[width=12.75cm,height=16.5cm]{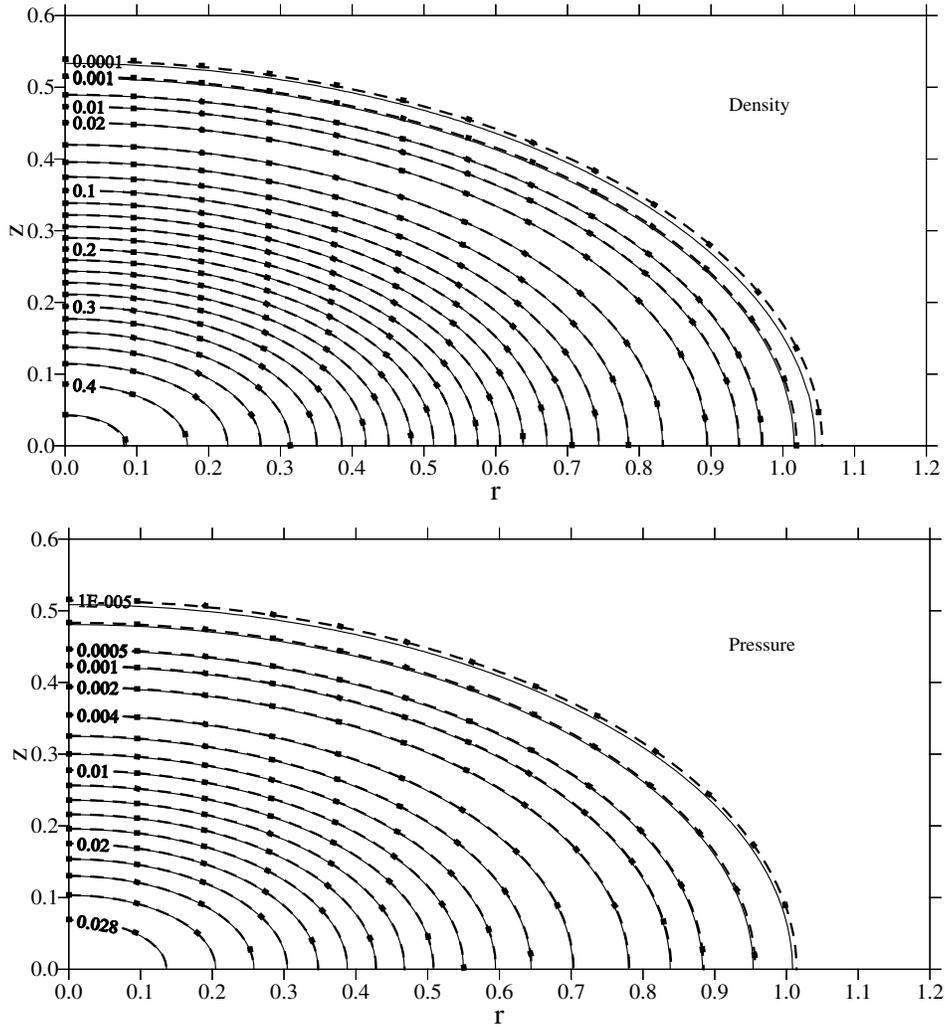}}
\caption{Anisimov's problem, density and pressure distribution. COS2 ($Cr$ $%
= $ $0.95$) and MAC2 ($Cr$ $=$ $0.475$) schemes versus the analytical
solution. $\protect\sigma \equiv Z_{0}\diagup R_{0}=0.1$, time $t=0.4$. The
monotonicity parameter $\aleph =0.2$ and the parameter $\varkappa =1$.
Dashed lines: numerical solution (COS2); Square dotted lines: numerical
solution (MAC2); Solid lines: analytical solution.}
\label{DA2C475T04}
\end{figure}
\begin{figure}[tbph]
\centerline{\includegraphics[width=12.75cm,height=16.5cm]{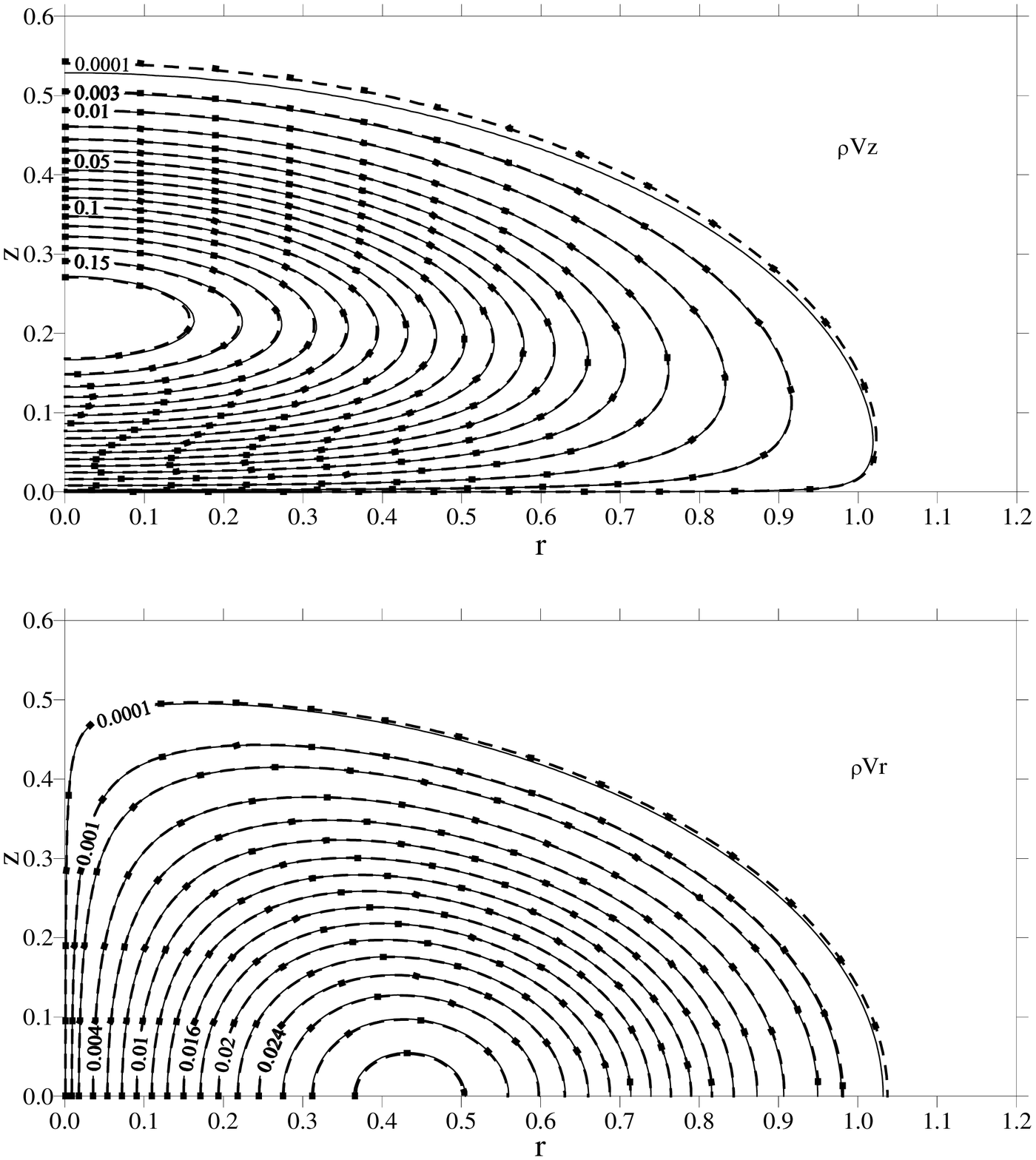}}
\caption{Anisimov's problem, momenta ($\protect\rho V_{z}$\ and $\protect%
\rho V_{r}$) distribution. COS2 ($Cr$ $=$ $0.95$) and MAC2 ($Cr$ $=$ $0.475$%
) schemes versus the analytical solution. $\protect\sigma \equiv
Z_{0}\diagup R_{0}=0.1$, time $t=0.4$. The monotonicity parameter $\aleph
=0.2$ and the parameter $\varkappa =1$. Dashed lines: numerical solution
(COS2); Square dotted lines: numerical solution (MAC2); Solid lines:
analytical solution.}
\label{MA2C475T04}
\end{figure}

\begin{figure}[tbph]
\centerline{\includegraphics[width=12.75cm,height=16.5cm]{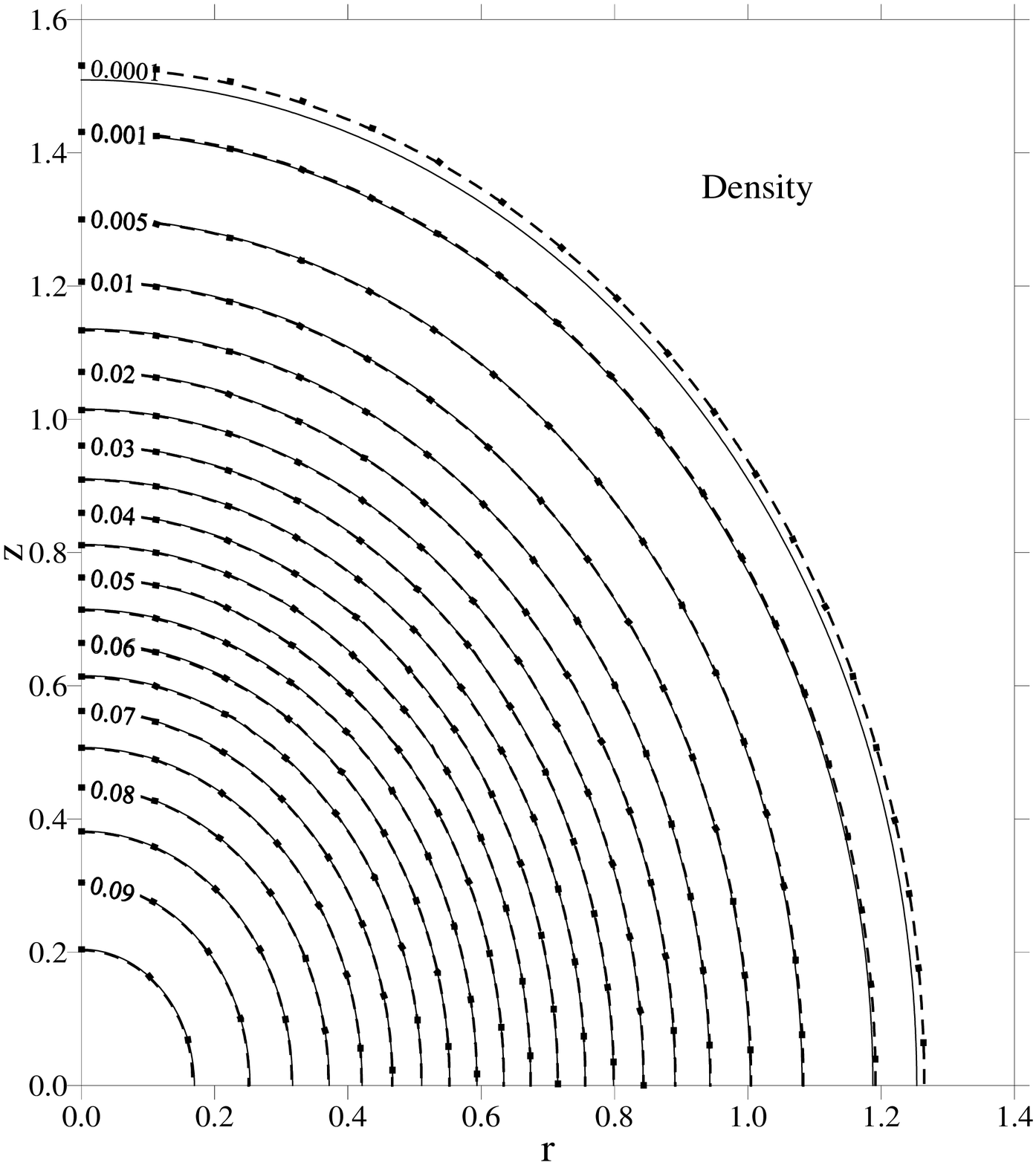}}
\caption{Anisimov's problem, density distribution. COS2 ($Cr$ $=$ $0.95$)
and MAC2 ($Cr$ $=$ $0.475$) schemes versus the analytical solution. $\protect%
\sigma \equiv Z_{0}\diagup R_{0}=0.1$, time $t=1$. The monotonicity
parameter $\aleph =0.2$ and the parameter $\varkappa =1$. Dashed lines:
numerical solution (COS2); Square dotted lines: numerical solution (MAC2);
Solid lines: analytical solution.}
\label{DA2C475T1}
\end{figure}
\begin{figure}[tbph]
\centerline{\includegraphics[width=12.75cm,height=16.5cm]{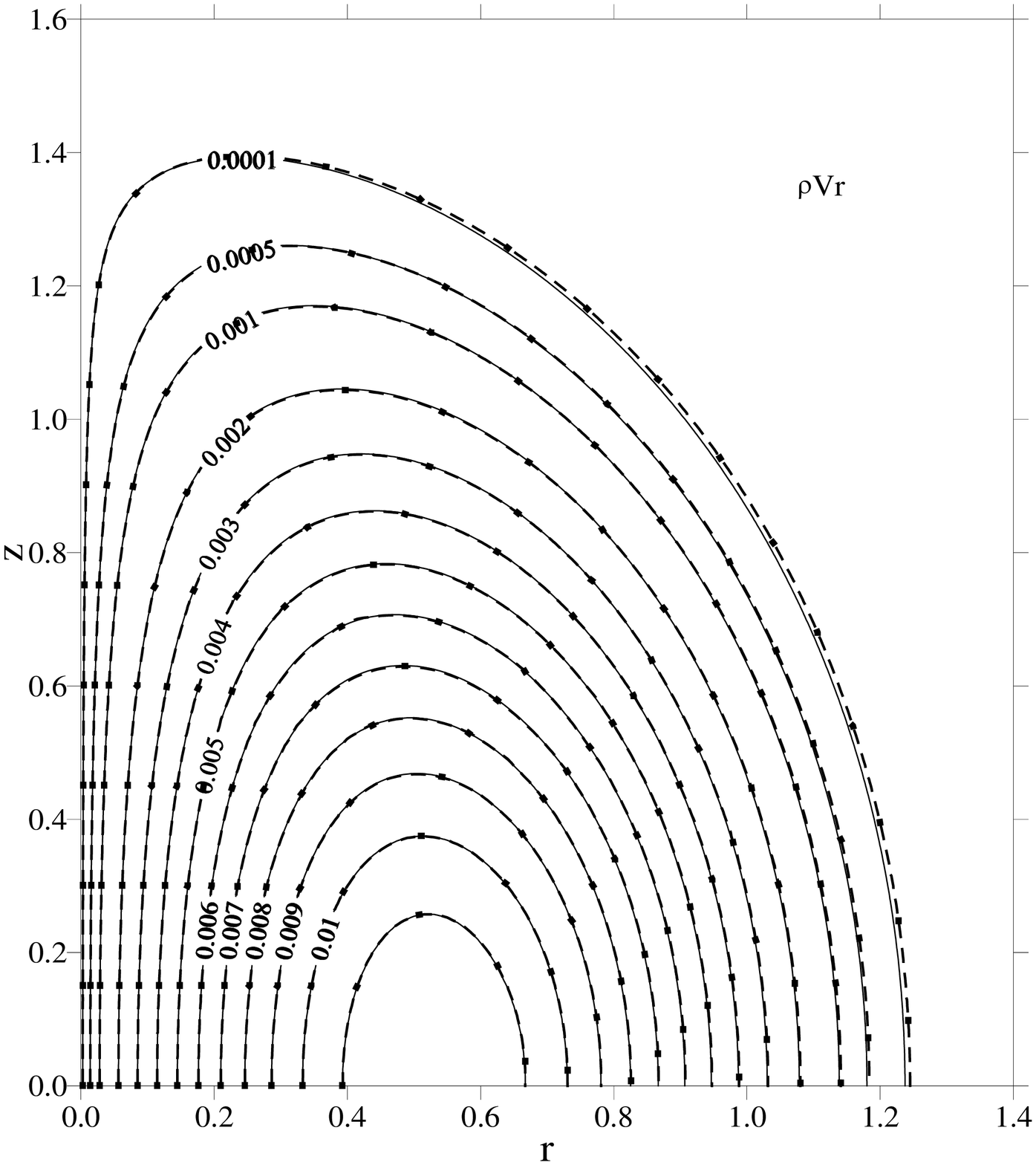}}
\caption{Anisimov's problem, momentum, $\protect\rho V_{r}$, distribution.
COS2 ($Cr$ $=$ $0.95$) and MAC2 ($Cr$ $=$ $0.475$) schemes versus the
analytical solution. $\protect\sigma \equiv Z_{0}\diagup R_{0}=0.1$, time $%
t=1$. The monotonicity parameter $\aleph =0.2$ and the parameter $\varkappa
=1$. Dashed lines: numerical solution (COS2); Square dotted lines: numerical
solution (MAC2); Solid lines: analytical solution.}
\label{MRA2C475T1}
\end{figure}

We observe (Figures \ref{DA2C475T04}, \ref{MA2C475T04}, \ref{DA2C475T1}, \ref
{MRA2C475T1}) that the numerical results obtained via the schemes COS2 and
MAC2 are practically coincide. We also observe that the relatively large
deviations between numerical and analytical solutions are occurred in the
area of the relatively large second derivatives (i.e. their absolute values)
of the primitive variables over the spatial coordinates. Hence, these
relatively large deviations may be occurred in the area of small values of
the variables in the vicinity of the front (see Figures \ref{DA2C475T04}, 
\ref{MA2C475T04}, \ref{DA2C475T1}, \ref{MRA2C475T1}) as well as in the area
of relatively large values as well (see Figure \ref{MA2C475T04}).


\begin{thebibliography}{99}
\bibitem{Ahmad and Berzins 2001}  I. Ahmad., M. Berzins, MOL solvers for
hyperbolic PDEs with source terms, Mathematics and Computers in Simulation
56 (2001) 115-125

\bibitem{Anderson et al. 1984.}  D. A. Anderson, J. C. Tannehill and R. H.
Pletcher\textit{, Computational Fluid Mechanics and Heat Transfer},
Hemisphere Publishing Corporation, New York, 1984.

\bibitem{Anisimov et al. 1993}  S. I. Anisimov, D. B\"{a}uerle, B. S.
Luk'yanchuk, Gas dynamics and film profiles in pulsed-laser deposition of
materials, Phys. Rev. B 48 (16) (1993) 12 076-12 081.

\bibitem{Aves Mark A. et al. 2000}  Mark A. Aves, David F. Griffiths, and
Desmond J. Higham, Runge-Kutta solutions of a hyperbolic conservation law
with source term, SIAM J. Sci. Comput., Vol. 22, No. 1, pp. 20-38 (2000)

\bibitem{Balaguer and Conde 2005}  Angel Balaguer and Carlos Conde,
Fourth-order non-oscillatory upwind and central schemes for hyperbolic
conservation laws, SIAM J. Numer. Anal. , Vol. 43, No. 2, pp. 455-473, (2005)

\bibitem{Bereux and Sainsaulieu 1997}  Fran\c{c}ois Bereux, Lionel
Sainsaulieu, A roe-type Riemann solver for hyperbolic systems with
relaxation based on time-dependent wave decomposition, Numer. Math. 77:
143-185 (1997).

\bibitem{Borisov and Mond 2010a}  V. S. Borisov and M Mond, On stability,
monotonicity, and construction of difference schemes I: Theory, SIAM J. Sci.
Comput., 32 (2010), pp. 2765-2792.

\bibitem{Borisov and Mond 2010b}  V. S. Borisov and M Mond, On stability,
monotonicity, and construction of difference schemes II: Applications, SIAM
J. Sci. Comput., 32 (2010), pp. 2793-2819.

\bibitem{Borisov and Sorek 2004}  V. S. Borisov and S. Sorek, - On
monotonicity of difference schemes for computational physics, SIAM J. Sci.
Comput., Vol. 25, No. 5 (2004), pp. 1557-1584.

\bibitem{Borisov V.S. 2003}  V. S. Borisov, On discrete maximum principles
for linear equation systems and monotonicity of difference schemes, SIAM J.
Matrix Anal. Appl., Vol. 24, No. 4 (2003), pp. 1110-1135.

\bibitem{Burenkov V.I. 1998}  Victor I. Burenkov, \textit{Sobolev Spaces on
Domains}, B. G. Teubner, Leipzig, 1998.

\bibitem{Caflisch at el. 1997}  Russel E. Caflisch, Shi Jin, and Giovanni
Russo, Uniformly accurate schemes for hyperbolic systems with relaxation,
SIAM J. Numer. Anal., Vol. 34, No. 1 (1997), pp. 246-281.

\bibitem{Du Tao et al. 2003}  Tao Du, Jing Shi, Zi-Niu Wu, Mixed
analytical/numerical method for flow equations with a source term, Computers
\& Fluids 32 (2003) 659-690.

\bibitem{Fritsch and Carlson 1980}  F.N. Fritsch and R.E. Carlson, Monotone
piecewise cubic interpolation, SIAM J. Numer. Anal. 17, No. 2, 238-246
(1980).

\bibitem{Ganzha and Vorozhtsov 1996}  V. G. Ganzha and E. V. Vorozhtsov, 
\textit{Numerical Solutions for Partial Differential Equations, }CRC Press%
\textit{, }New York, 1996.

\bibitem{Ganzha and Vorozhtsov 1996b}  V. G. Ganzha and E. V. Vorozhtsov, 
\textit{Computer-aided analysis of difference schemes for partial
differential equations}, John Wiley \& Sons, New York, 1996.

\bibitem{Gil' 2007}  M. I. Gil', \textit{Difference Equations in Normed
Spaces, Stability and Oscillations}, Elsevier, Amsterdam, 2007.

\bibitem{Godlewski and Raviart 1996}  Edwige Godlewski and Pierre-Arnaud
Raviart, \textit{Numerical Approximation of Hyperbolic Systems of
Conservation Laws,} Springer-Verlag, New York, 1996.

\bibitem{Gosse L. 2000}  L. Gosse, A Well-Balanced Flux-Vector Splitting
Scheme Designed for Hyperbolic Systems of Conservation Laws with Source
Terms, Computers and Mathematics with Applications 39 (2000) 135-159

\bibitem{Griffiths and Higham 2010}  David F. Griffiths, Desmond J. Higham, 
\textit{Numerical Methods for Ordinary Differential Systems: Initial Value
Problems} \textit{(Springer Undergraduate Mathematics Series)},
Springer-Verlag, London, 2010.

\bibitem{Heinonen Juha 2005}  Heinonen Juha, \textit{Lectures on Lipschitz
analysis}, Rep. Univ. Jyv\"{a}skyl\"{a} Dept. Math. Stat. 100 (2005), 1-77.

\bibitem{Horn and Johnson 1991}  Roger A. Horn and Charles R. Johnson, 
\textit{Topics in Matrix Analysis}, Cambridge Univ. Press, 1991.

\bibitem{Jiang et al. 1998}  G.-S. Jiang, D. Levy, C.-T. Lin, S. Osher, E.
Tadmor, High-Resolution Nonoscillatory Central Schemes with Nonstaggered
Grids for Hyperbolic Conservation Laws, SIAM Journal on Numerical Analysis,
Vol. 35, No. 6. (Dec., 1998), pp. 2147-2168.

\bibitem{Jin Shi 1995}  Shi Jin, Runge-Kutta Methods for Hyperbolic
Conservation Laws with Stiff Relaxation Terms, J. Comp. Phys. 122 (1995),
51-67.

\bibitem{Jin Shi et al. 2000}  Shi Jin, Lorenzo Pareschi, and Gioseppe
Toscani, Uniformly accurate diffusive schemes for multiscale transport
equations, SIAM J. Numer. Anal. Vol. 38, No. 3 (2000), pp. 913-936.

\bibitem{Kahaner et al. 1989}  David Kahaner, Cleve Moler, and Stephen Nash, 
\textit{Numerical methods and software}, Prentice-Hall, New Jersey, 1989.

\bibitem{Kocic and Milovanovic  1997}  L.M. Koci\'{c} and G.V.
Milovanovi\'{c}, Shape Preserving Approximations by Polynomials and Splines,
Computers Math. Applic. Vol. 33, No. 11, pp. 59-97, 1997.

\bibitem{Kolmogorov and Fomin 1970}  Kolmogorov A. N., Fomin S. V.
Introductory Real Analysis, Prentice Hall, USA, 1970.

\bibitem{Kurganov and Tadmor 2000}  Alexander Kurganov and Eitan Tadmor, New
High-Resolution Central Schemes for Nonlinear Conservation Laws and
Convection-Diffusion Equations, Journal of Computational Physics 160,
241-282 (2000)

\bibitem{Kurganov and Levy 2000}  Alexander Kurganov and Doron Levy, A
third-order semidiscrete central scheme for conservation laws and
convection-diffusion equations, SIAM J. Sci. Comput., Vol. 22, No. 4 (2000),
pp. 1461-1488

\bibitem{Leoni Giovanni 2009}  Giovanni Leoni, \textit{A first course in
Sobolev spaces}, American Mathematical Society, Providence, Rhode Island,
2009.

\bibitem{LeVeque 2002}  Randall J. LeVeque, \textit{Finite volume methods
for hyperbolic problems}, Cambridge University Press, Cambridge, 2002.

\bibitem{Liu and Tadmor 1998}  Xu-Dong Liu and Eitan Tadmor, Third order
nonoscillatory central scheme for hyperbolic conservation laws, Numer. Math.
(1998) 79: 397-425.

\bibitem{Monthe 2003}  L. A. Month\'{e}, A study of splitting scheme for
hyperbolic conservation laws, Journal of Computational and Applied
Mathematics 137 (2001) 1-12.

\bibitem{Morton 1996}  K. W. Morton, \textit{Numerical Solution of
Convection-Diffusion Problems}, Chapman \& Hall, London, 1996.

\bibitem{Morton 2001}  K. W. Morton, Discretization of unsteady hyperbolic
conservation laws, SIAM J. Numer. Anal., Vol. 39, No. 5 (2001), pp.
1556-1597.

\bibitem{Naldi and Pareschi 2000}  Giovanni Naldi and Lorenzo Pareschi,
Numerical schemes for hyperbolic systems of conservation laws with stiff
diffusive relaxation, SIAM J. Numer Anal., Vol. 37, No. 4 (2000), pp.
1246-1270.

\bibitem{Naterer and Camberos 2008}  Greg F. Naterer and Jose A. Camberos%
\textit{, Entropy Based Design and Analysis of Fluids Engineering Systems},
), Taylor and Francis Group, Boca Raton, USA, 2008.

\bibitem{Nessyahu and Tadmor 1990}  Haim Nessyahu and Eitan Tadmor,
Non-oscillatory Central Differencing for Hyperbolic Conservation Laws,
Journal of Computational Physics, Vol. 87, No 2., April 1990, pp. 408-463.

\bibitem{Ortega and Rheinboldt 1970}  J. M. Ortega and W. C. Rheinboldt, 
\textit{Iterative solution of nonlinear equations in several variables},
Academic Press, New York, London, 1970.

\bibitem{Pareschi Lorenzo 2001}  Lorenzo Pareschi, Central differencing
based numerical schemes for hyperbolic conservation laws with relaxation
terms, SIAM J. Numer. Anal., Vol. 39, No. 4 (2001), pp. 1395-1417.

\bibitem{Pareschi and Russo 2005}  Lorenzo Pareschi and Giovanni Russo,
Implicit-Explicit Runge-Kutta Schemes and Applications to Hyperbolic Systems
with Relaxation, Journal of Scientific Computing, Vol. 25, Nos. 1/2,
November 2005, pp. 129-155.

\bibitem{Pareschi et al. 2005}  Lorenzo Pareschi, Gabriella Puppo, and
Giovanni Russo, Central Runge-Kutta Schemes for conservation laws, SIAM J.
Sci. Comput., Vol. 26, No. 3 (2005), pp. 979-999.

\bibitem{Pember 1993}  Richard B. Pember, Numerical methods for hyperbolic
conservation laws with stiff relaxation I. Spurious solutions, SIAM J. Appl.
Math., Vol. 53, No. 5, pp. 1293-1330, October 1993.

\bibitem{Pember 1993a}  Richard B. Pember, Numerical methods for hyperbolic
conservation laws with stiff relaxation II. Higher-order Godunov methods,
SlAM J. Sci. Comput., Vol. 14, No. 4, pp. 826-859, July 1993.

\bibitem{Press William 1988}  William H. Press, Brian P. Flannery, Saul A.
Teukolsky, William T. Vetterling, \textit{Numerical Recipes in C}, The Art
of Scientific Computing, Cambridge University Press, New York, 1988.

\bibitem{Richtmyer and Morton 1967}  R. D. Richtmyer and K. W. Morton, 
\textit{Difference Methods for Initial-Value Problems}, 2nd edn,
Wiley-Interscience, New York, 1967.

\bibitem{Samarskii 2001}  A. A. Samarskii, \textit{The theory of difference
schemes}, Marcel Dekker, New York, 2001.

\bibitem{Samarskiy and Gulin 1973}  A. A. Samarskiy and A.V. Gulin, \textit{%
Stability of Finite Difference Schemes}, Nauka, Moscow, 1973 (in Russian).

\bibitem{Serna and Marquina 2005}  Susana Serna and Antonio Marquina,
Capturing shock waves in inelastic granular gases, Journal of Computational
Physics 209 (2004) 787-795.

\bibitem{Strunin 2003}  Dmitry Strunin, Nonlinear instability in generalized
nonlinear phase diffusion equation, Progress of Theoretical Physics
Supplement No. 150, pp. 444-448, 2003

\bibitem{Zhao and Tang 2000}  Huijiang Zhao and Shaoqiang Tang, Nonlinear
Stability and Optimal Decay Rate for a Multidimensional Generalized
Kuramoto]Sivashinsky System, Journal of Mathematical Analysis and
Applications 246, 423-445, 2000
\end{thebibliography}
\end{document}